\documentclass[a4paper,11pt]{article}
\pdfoutput=1

\usepackage{jheppub}
\usepackage[T1]{fontenc} 
\usepackage{amsmath,amssymb,amsfonts,mathrsfs}
\usepackage{dsfont}
\usepackage{tikz}
\usepackage{tikz-cd}
\usepackage{todonotes}
\usetikzlibrary{cd}
\usepackage[all]{xy}
\usetikzlibrary{shapes,arrows}
\usepackage{mathtools}
\usepackage{amsthm}

\usepackage{booktabs}
\usepackage{enumitem}

\def\bZ{\mathbb{Z}}

\def\bC{\mathbb{C}}

\def\Tr{\textrm{Tr}}
\def\A{\mathbf{A}}

\def\bp{\begin{pmatrix}}
\def\ep{\end{pmatrix}}

 \newcommand{\slTwo}{\mathfrak{sl}_2}

\let\Tr\relax
\DeclareMathOperator{\Tr}{Tr}

\newtheorem{theorem}{Theorem}[section]
\newtheorem{lemma}[theorem]{Lemma}
\newtheorem{proposition}[theorem]{Proposition}
\newtheorem{remark}[theorem]{Remark}

\usepackage{graphicx}
\usepackage[utf8]{inputenc}
\usepackage{url}
\usepackage{hyperref}
\usepackage{braket}
\usepackage{xcolor}

\definecolor{red}{rgb}{1,0,0}
\long\def\comment#1{}
\usepackage{chessfss}
\pdfpageattr {/Group << /S /Transparency /I true /CS /DeviceRGB>>}
\hypersetup{colorlinks=true,linkcolor=blue,citecolor=blue}
\usepackage{bm}
\usepackage{float}
\usepackage{subfig}
\usepackage{scalefnt}
\usepackage{wrapfig}
\usepackage{fancybox}
\usepackage{multirow}
\usepackage{adjustbox}
\usepackage[toc,page]{appendix}

\title{\boldmath Quantum Flat Connections, KZ equations, and Integrability}
\author[a,b]{Sibasish Banerjee,}
\author[c,d]{Babak Haghighat,}
\author[e]{Anouchah Latifi}

\affiliation[a]{Institut des Hautes \'Etudes Scientifiques (IHES), Bures-sur-Yvette 91440, France}
\affiliation[b]{Vivatsgasse 7, Max Planck Institute for Mathematics, Bonn - 53111, Germany}
\affiliation[c]{Yau Mathematical Sciences Center, Tsinghua University, Beijing, 100084, China}
\affiliation[d]{Beijing Institute of Mathematical Sciences and Applications (BIMSA), Huairou District, Beijing 101408, China}
\affiliation[e]{Qom University of Technology (QUT), Qom, Iran}

\abstract{
$\mathcal{N}=2$ supersymmetric Yang-Mills theories are described in terms of a Hitchin system over a Riemann surface $\mathcal{C}$. Focusing on strongly coupled Argyres-Douglas theories, we show that the corresponding flat bundle over $\mathcal{C}$ can be quantized such that the resulting quantum flat connection is integrable. For $\mathfrak{sl}_2$, the quantum connection takes values in $\mathfrak{gl}_2(\mathcal{A})$ where $\mathcal{A}$ is an associative algebra which we explicitly describe for the cases of Painlev\'e I, II and IV. Moreover, we find that the quantum connection is equivalent to irregular versions of Knizhnik-Zamolodchikov (KZ) connections. Utilizing a suitable gauge transformation, one can show that the corresponding KZ equations give rise to BPZ equations.
}

\begin{document}

\maketitle

\section{Introduction}

Quantum gauge theories are intrinsically described by integrable systems when admitting $\mathcal{N}=2$ supersymmetry \cite{Gorsky:1995zq,Donagi:1995cf}. Among the main features of such systems is the existence of a flat connection taking values in the Lie algebra $\mathfrak{g}$, known as the Higgs field \cite{Gaiotto:2009hg}, on a curve $\mathcal{C}$ describing the effective low energy vacua of the theory \cite{Seiberg:1994rs,Seiberg:1994aj}. Within this framework, the monodromy of the horizontal sections of the corresponding Higgs bundle around punctures of $\mathcal{C}$ parametrizes mass and more general flavor symmetries of the gauge theory \cite{Gaiotto:2009we}. The connection to integrable systems arises from isomonodromic deformations and in cases where the connection admits irregular singular points, the corresponding isomonodromic deformations give rise to the Painlev\'e hierarchy. One of the central observations of modern developments in supersymmetric quantum gauge theories was the identification of the Painlev\'e \textit{tau function} with a certain Fourier transform of the BPS partition function of the gauge theory \cite{GamayunIorgovLisovyy2012}. 

Another important development which unfolded in parallel to the above is the so-called BPS/CFT \cite{Nekrasov:2009rc,Nekrasov:2015wsu} or AGT \cite{Alday:2009aq} correspondence. Here one identifies the BPS partition function of the gauge theory with conformal blocks of Liouville (in the case where $\mathfrak{g} = \mathfrak{sl}_2$) or more generally Toda CFT (for general $\mathfrak{g}$) on the curve $\mathcal{C}$ with operator insertions at the punctures. The conformal blocks in question satisfy hypergeometric differential equations known as \textit{BPZ equations} \cite{Belavin:1984vu} when the gauge theory contains a surface defect \cite{Alday:2009fs}. More recently, the relationship of BPS partition functions to Knizhnik-Zamolodchikov (KZ) equations has been studied within the BPS/CFT correspondence \cite{Nekrasov:2017rqy,Jeong:2021vzo}.

One way to think about the connection between the above two perspectives is to view the BPZ equation and the Liouville conformal blocks as arising from a quantization of the corresponding integrable system. Within this viewpoint, the BPZ equation is a direct consequence of the quantization of the spectral curve or Seiberg-Witten curve of the gauge theory \cite{Nekrasov:2010ka,Maruyoshi:2010iu,Gukov:2011qp,Dumitrescu:2014}. Yet another perspective, which we want to advocate in the present paper, is to quantize the phase space of the flat connections themselves. This viewpoint has been taken in the past (see \cite{Nekrasov:2011bc,Teschner:2010je}, and more recently \cite{Gaiotto:2024bna}), but a  direct quantization of the corresponding integrable system has been less studied (though see \cite{Chekhov:2015qza} for the quantization of Painlev\'e monodromy manifolds). One of the main novelties of our approach is that we explicitly show that upon quantization the corresponding flat connection is promoted to a \textit{quantum flat connection} satisfying a quantum flatness condition. It would be interesting to connect our approach to the recent work \cite{Bonelli:2025owb} on quantum tau functions.

We focus on the cases of Painlev\'e I, II and IV which correspond to intrinsically strongly coupled Argyres-Douglas theories \cite{Argyres:1995jj} $H_0$, $H_1$ and $H_2$ \cite{Eguchi:1996vu}. These are superconformal field theories (SCFTs) admitting enhanced symmetry properties. We show that a direct quantization of the class-preserving deformation space of the Painlev\'e Lax pair gives rise to irregular versions of $\mathfrak{sl}_2$ KZ equations \cite{KNIZHNIK198483}. The study of irregular KZ equations was initiated in \cite{Resh-KZ,FMTV00,Jimbo_2008,Nagoya_2010}. More recently, these equations have been studied from the vantage point of their relations to BPZ equations \cite{Haghighat:2023vzu,Gu:2023plq}, irregular Kac-Moody representations \cite{Gukov:2024yxa,Haghighat:2025qxu}, and knot invariants \cite{Gu:2026mlf}. Solutions of irregular KZ equations correspond to $\mathfrak{sl}_2(\mathbb{C})$ WZW conformal blocks, see also \cite{Gaiotto:2013rk} on their appearance within the $H_3^+$ WZW model. At the same time (see above), we also know that the corresponding wavefunctions are also solutions of BPZ equations. This relation had already been observed in \cite{Ribault:2005wp} for regular KZ equations and our work is the first explicit generalization to the Argyres-Douglas case. Indeed, performing a quantum gauge transformation, we show how to transform the KZ equation to a BPZ equation arising from the energy-momentum tensor of irregular Virasoro conformal blocks \cite{Gaiotto:2012sf,Kanno:2013vi}. This nicely reproduces the BPZ equations studied in earlier work \cite{Maruyoshi:2010iu}.

\subsection{Summary of results}

The main results are as follows. In Section \ref{sec:2}, we revisit the connection between gauge theory and Hitchin systems and describe our quantization procedure. We then proceed in Section \ref{sec:KZ} to formulate irregular KZ equations in full generality where we focus on the $\mathfrak{sl}_2$ case. To this end, we define the loop algebra and affine currents which then allow us to formulate the full irregular KZ connection. We then show how our quantization procedure for the Painlev\'e I, II and IV connections give rise to irregular KZ equations with a suitable Whittaker module at infinity. This correspondence can be formulated as
\begin{equation}
    (\partial_z - A(z)) \Psi =0 \longrightarrow \nabla_i \widehat \Psi = 0.
\end{equation}
In the above, $A(z)$ is the $\mathfrak{sl}_2$-valued connection arising from the first Lax matrix and the KZ connection $\nabla_i$ is
\begin{equation}
    \nabla_i = \kappa \partial_{z_i} - \Omega_i(\{z\}),
\end{equation}
where $\Omega_i$ is valued in the enveloping algebra $U(\mathfrak{sl}_2).$

Let us now come to Section \ref{sec:integrability} which clarifies the connection to integrability and establishes the relation to BPZ equations. As a first result (see Lemma \ref{lem:forced_identity_component}), we show that the connection $A(z)$ upon quantization takes values in the algebra $\mathfrak{gl}_2(\mathcal{A})$ where $\mathcal{A}$ is an associative algebra. In fact, in the simplest setting of Painlev\'e I, 
\begin{equation}
    \mathcal{A} = W_{\kappa}(p,q), 
\end{equation}
where $W_{\kappa}(p,q)$ denotes the Weyl algebra \eqref{eq:Weyl} arising from the quantization of the canonical coordinate and its conjugate momentum. To study the Painlev\'e II and IV cases, we first calculate the tangent space of class-preserving deformations. The main results for Painlev\'e II are contained in Propositions \ref{prop:PII-tangent} and \ref{prop:PII-completeness} while Painlev\'e IV is treated in Appendix \ref{sec:PIV-class-preserving}. We find that there exists a suitable choice of coordinates, such that the tangent space is parametrized by infinitesimal deformations $(\delta p, \delta q, \delta \theta, \delta u)$ together with a symplectic form
\begin{equation}
    \omega = dp \wedge dq - d\theta \wedge d\log u.
\end{equation}
Quantum flat connections take values in $\mathfrak{gl}_2(\mathcal{A})$ where $\mathcal{A}$ is now given by
\begin{equation}
    \mathcal{A} = W_\kappa(p,q) \otimes W_{-\kappa}(\theta,\log u).
\end{equation}
In Section \ref{sec:painleve_op_flatness} we show that the quantized flat connections satisfy a quantum flatness condition, i.e. that classical flatness also holds at the quantum level. Finally, in Section \ref{sec:Tgauge}, it is shown that there exists a gauge transformation $S$ with
\begin{equation}
    A^S = S A S^{-1} +(\partial_z S) S^{-1},
\end{equation}
such that the gauge transformed connection takes the form
\begin{equation}
    A^S = \left(\begin{array}{cc}
        0 & 1 \\
        T(z) & 0
    \end{array}\right)~.
\end{equation}
In the quantized setting, this gives rise to the BPZ equation
\begin{equation}
    \left(\partial_z^2 + \widehat{T}(z)\right) \psi = 0.
\end{equation}

\section{Quantum flat connections from Gauge theory}
\label{sec:2}

This section reviews the origins of quantum flat connections within the framework of $\mathcal{N}=2$ supersymmetric gauge theories and their compactification to Hitchin systems. By tracing the dimensional reduction from six-dimensional theories down to three dimensions, it details how the moduli space of Hitchin equations connects to the Coulomb branch of four-dimensional theories, culminating in the construction of classical integrable systems and the Seiberg-Witten curve.

\subsection{Gauge theory and Hitchin systems: a review}

In this section we aim to present a review of basic notions in $\mathcal{N}=2$ supersymmetric gauge theories and their relation to Hitchin systems. Let us start with the six-dimensional $\mathcal{N}=(2,0)$ theory of type $A_1$ which is then compactified on a Riemann surface $\mathcal{C}_{g,n}$. This gives rise to a family of four-dimensional $\mathcal{N}=2$ gauge theories known as class $\mathcal{S}$. Upon further compactifying on a circle $S^1$, one obtains a three-dimensional theory with $\mathcal{N}=4$ supersymmetry which is a sigma model with target space the moduli space $\mathcal{M}$. $\mathcal{M}$ is topologically a Jacobian fibration of the Seiberg-Witten curve of the uncompactified 4d theory over its moduli space of quantum vacua (i.e. the $u$-place in the case of pure $SU(2)$ gauge theories). 

Yet from another perspective, $\mathcal{M}$ admits a description as the moduli space of solutions to a Hitchin system. To see this, note that instead of first compactifying the 6d $(2,0)$
theory on $S^1$, one can alternatively first compactify on the circle and then on $\mathcal{C}_{g,n}$. The first step produces a five dimensional $\mathcal{N}=1$ $SU(2)$ gauge theory which on the Coulomb branch admits vector multiplets labeled by Cartan generators of the gauge group. In the second step the components $(A_z,A_{\bar z})$ of the five-dimensional $SU(2)$ gauge field on $\mathcal{C}$ form together with a complex adjoint scalar field $\varphi_z$ (built out of two of the five adjoint scalars of the five-dimensional theory) a Higgs bundle on $\mathcal{C}$. $\mathcal{M}$ is the moduli space of solutions $(A,\varphi)$ to the Hitchin equations

\begin{align}
    F_{z\bar z} + [\varphi_z,\overline{\varphi}_{\bar z}] &= 0,  \nonumber \\
    \partial_{\bar z} \varphi_z + [A_{\bar z}, \varphi_z] &= 0, \label{eq:Hitchin} \\
    \partial_z \overline{\varphi}_{\bar z} + [A_z, \overline{\varphi}_{\bar z}] &= 0~.  \nonumber 
\end{align}
To connect to the Coulomb branch of four-dimensional $\mathcal{N}=2$ theories, one associates to every Higgs bundle a classical integrable system with spectral curve given by the Seiberg-Witten curve
\begin{equation}
    \Sigma_{\textrm{SW}}: \quad \textrm{Det}(y - \varphi_z) = 0 \quad \Rightarrow \quad y^2 = \frac{1}{2} \textrm{Tr}(\varphi_z^2)~.
\end{equation}
$\mathcal{M}$ is a Hyperk\"ahler manifold and for a given choice of complex structure $\zeta \in \mathbb{C}^*$, one can define a Hitchin connection $ \nabla = \partial - \mathcal{A}$, with
\begin{equation}
    \mathcal{A} = \frac{1}{\zeta} \varphi_z + A + \zeta \overline{\varphi}_{\bar z},
\end{equation}
such that the Hitchin equations \eqref{eq:Hitchin} are equivalent to the flatness of this connection. Now in the limit $\zeta \rightarrow 0$, $\mathcal{A}$ reduces to $\varphi_z$ and the flat connection becomes 
\begin{equation}
     \nabla \rightarrow \partial + \frac{\varphi_z}{\zeta}.
\end{equation}
One can reinterpret this as a flat connection on $\mathcal{C}_{g,n}$ given by
\begin{equation}
    \kappa \partial_z - \mathbf{A},
\end{equation}
where we are taking $\mathbf{A} = \varphi_z$ and $\kappa \rightarrow 0$. As we will see, $\kappa$ is a quantum parameter analogous to $\hbar$ and its nonzero value corresponds to quantization of the system in a certain sense we will explore.

\paragraph{Relation to Painlev\'e equations.}
There is a beautiful relation of the above discussion to the theory of Painlev\'e equations which we want to briefly review following \cite{Bonelli:2016qwg}. The above flat connection $\mathcal{A}$ can be used to build an integrable system of linear ODEs given by
\begin{equation} \label{eq:classA}
    \frac{d}{dz} \Psi(z) = \mathbf{A}(z) \Psi(z),
\end{equation}
where we now restrict to the $n$-punctured sphere $\mathcal{C}_{0,n}$, $\mathbf{A}(z) \in \mathfrak{sl}_2(\mathbb{C})$ complex and traceless. A specific choice of $\mathbf{A}(z)$ determines the global monodromy of the solution $\Psi(z)$. However, for a given global monodromy, there exists a whole family of $A(z;\{\vec{t}\})$ parametrized by moduli $\{\vec{t}\}$. This leads to considering isomonodromic, i.e. monodromy preserving, deformations $A(z,t)$ and $\Psi(z,t)$, giving rise to an extended system of differential equations
\begin{align}\label{eq:A and B}
    \partial_z \Psi(z,t) &= \mathbf{A}(z,t) \Psi(z,t), \\
    \partial_t \Psi(z,t) &= \mathbf{B}(z,t) \Psi(z,t),
\end{align}
together with the compatibility condition
\begin{equation} \label{eq:compatibility}
    \mathbf{A}_t(z,t) = \mathbf{B}_z(z,t) + [\mathbf{B}(z,t), \mathbf{A}(z,t)]. 
\end{equation}
We present a more detailed analysis of these equations in Section \ref{sec:integrability}. For now it suffices to note that the corresponding matrices $\mathbf{A}$ and $\mathbf{B}$ are classified by the Painlev\'e hierarchy. Then Table \ref{tab:PainleveGauge_Nf} and \ref{tab:PainleveGauge_H} show the relation between the connection $\mathbf{A}(z)$ and the Higgs field $\varphi_z$ of $\mathcal{N}=2$ $SU(2)$ gauge theories with $N_f$ flavors as well as Argyres-Douglas theories $H_0$, $H_1$, $H_2$.
\begin{table}[H]
    \centering
    \begin{tabular}{|c|c|c|c|c|c|}
        \hline 
        VI & V & $\textrm{III}_1$ & $V_{deg}$ & $\textrm{III}_2$ & $\textrm{III}_3$ \\ \hline
        $N_f = 4$ & $N_f = 3$ & $N_f =2$ (1st) & $N_f =2$ (2nd) & $N_f = 1$ & $N_f = 0$ \\ \hline
    \end{tabular}
    \caption{Cases with $N_f = 0,\dots,4$.}
    \label{tab:PainleveGauge_Nf}
\end{table}

\begin{table}[H]
    \centering
    \begin{tabular}{|c|c|c|c|}
        \hline 
        IV & $\textrm{II}_{\textrm{JM}}$ & $\textrm{II}_{\textrm{FN}}$ & I \\ \hline
        $H_2$ & $H_1$ (1st) & $H_1$ (2nd) & $H_0$ \\ \hline
    \end{tabular}
    \caption{Cases with $H_0$ to $H_2$.}
    \label{tab:PainleveGauge_H}
\end{table}

\paragraph{Example: Painlev\'e $\textrm{II}_{\textrm{JM}}$.}
The connection $\mathbf{A}(z)$ in this case is given by
\begin{equation} \label{eq:APIIJM}
    \mathbf{A} = A_1 + A_2 z + A_3 z^2 = \left(\begin{array}{cc}
        z^2 + p + t/2 & u(z - q)  \\
        -\frac{2}{u} (pz + \theta + p q) & - (z^2 + p + t/2) 
    \end{array}\right)~.
\end{equation}
This case corresponds to a single irregular puncture of degree $3$ at infinity. The compatibility condition \eqref{eq:compatibility} becomes in this case 
\begin{align}
    \dot{u} &= - q u,  \\
    \dot{q} &= p + q^2 + t/2, \\
    \dot{p} &= -2pq - \theta~.
\end{align}
The equation for the spectral curve/Seiberg-Witten curve is given by
\begin{equation} \label{eq:PIIJMSWc}
    y^2 = \frac{1}{2} \textrm{Tr}\mathbf{A}^2 = z^4 + t z^2 - 2\theta z + 2 \sigma_{II} + \frac{t^2}{4},
\end{equation}
where $\sigma_{II}(t)$ is the Hamiltonian function for the Painlev\'e $\textrm{II}$ case:
\begin{equation}
    \dot{p} = - \frac{\partial}{\partial q}\sigma_{II}, \quad \sigma_{II} = \frac{p^2}{2} + p q^2 + \frac{pt}{2} + q \theta.
\end{equation}

\subsection{The quantum flat connection}
\label{sec:qflcon}

Here, we want to turn to quantizing the flat connections $\mathbf{A}$ encountered in the previous section. To this end, note the following interesting observation. Let us consider the Painlev\'e II case \eqref{eq:APIIJM}. Setting 
\begin{equation} \label{eq:PIIphasespace}
    (y,z) = (p + q^2 + \frac{t}{2}, q),
\end{equation}
one easily sees that equation \eqref{eq:PIIJMSWc} is satisfied. Thus for each value of $t$, this point is a distinguished point on the Seiberg-Witten curve. This point can be thought of as the position of a probe brane/operator. Moreover, since $y$ is a coordinate on the fibers of the cotangent bundle $T^* \mathcal{C}_{0,n}$, we also see that $(p,q)$ serve as coordinates on the phase space of the probe brane. From Hamiltonian mechanics, we then see that 
\begin{equation}
    \omega \equiv dp \wedge dq,
\end{equation}
endows this phase space with a symplectic pairing. To study the quantum mechanics of this probe brane, we quantize the system by imposing the commutation relations \cite{Aganagic:2003qj}
\begin{equation} \label{eq:quant}
    [\hat{p},\hat{q}] = \kappa.
\end{equation}
This is equivalent to introducing a surface defect in the four-dimensional gauge theory \cite{Alday:2009fs,Drukker:2010jp,Frenkel:2015rda}. Upon quantization, the equation for the Seiberg-Witten curve \eqref{eq:PIIJMSWc} translates to the BPZ equation \cite{Belavin:1984vu} for the Liouville conformal block associated to $\mathcal{C}_{0,n}$ with one degenerate operator at position $z$. In the case of Painlev\'e II considered above, $n=1$ as there is only one irregular operator of degree $3$ at infinity. The corresponding BPZ (also known as \textit{quantum curve}) equation reads then
\begin{equation}
\label{eq:BPZfull}
    \left(\kappa^2 \partial_z^2 - \frac{1}{2} \Tr \A^2(z;\hat{p},\hat{q})+ \mathcal{O(\kappa)}\right) \langle \Phi_{1,2}(z) \Phi_{1,2}(q)I^{(3)}(\infty)\rangle = 0.
\end{equation}
This is the irregular version of the BPZ/quantization correspondence developed in \cite{Teschner:2010je}. Namely, an irregular operator $I^{(3)}$ is placed at infinity. Equation \eqref{eq:BPZfull} can be viewed as the quantization of the classical curve $y^2 = \frac{1}{2} \Tr \mathbf{A}(z;p,q)^2$ when promoting $y$, $z$, $p$ and $q$ to operators,
\begin{equation} \label{eq:quantfull}
    [\hat y, \hat z] = \epsilon_1, \quad [\hat p, \hat q] = \epsilon_2,
\end{equation}
and choosing the slice $\epsilon_1 = 1$, $\epsilon_2 = \kappa$, on which we shall focus in the remainder of this paper. The more general quantization condition \eqref{eq:quantfull} will be treated elsewhere.

The remaining question is about the interpretation of the quantized version of equation \eqref{eq:classA}, namely
\begin{equation}
    \kappa \partial_z \widehat \Psi = \A(z;\hat{p},\hat{q}) \widehat \Psi,
\end{equation}
where we now treat $\A$ as a quantum operator. In the next section we will show that this equation is an irregular version of the celebrated Knizhnik-Zamolochikov (KZ) equations.

\section{KZ equations}
\label{sec:KZ}

This section provides a comprehensive formulation of both regular and irregular Knizhnik-Zamolodchikov (KZ) equations, focusing specifically on the $\mathfrak{sl}_2$ case. It builds the necessary mathematical machinery, defining loop algebras, affine currents, and higher evaluation representations acting on irregular Kac-Moody modules, to explicitly construct the irregular KZ connection and its associated Hamiltonians.

\subsection{Irregular KZ equations: a review}

Irregular KZ equations were first studied in \cite{Resh-KZ}. Here we present a short review of the subject starting from regular KZ equations \cite{KNIZHNIK198483}, where we focus on the $\mathfrak{sl}_2$ case. To begin with, let us fix our notation. Consider a Lie algebra $\mathfrak{g}$ together with
\begin{equation}
    L\mathfrak{g} = \bigoplus_{n \in \bZ} \mathfrak{g}[n], \quad \mathfrak{g} \simeq \mathfrak{g} t^n,
\end{equation}
and let $I_a$ be an orthonormal basis in $\mathfrak{g}$. The \textit{Loop algebra} $L\mathfrak{g}$ together with the central extension $\bC c$ defines the \textit{affine} Lie algebra $\widehat{\mathfrak{g}}$. We have the following evaluation representation
\begin{equation}
    \pi^{(0)}_z : \mathfrak{g}[n] \rightarrow \mathfrak{g} z^n, \quad x[n] \mapsto x z^n,
\end{equation}
where $\mathfrak{g}$ is understood as $\textrm{End}(V)$ for a $\mathfrak{g}$-module $V$ and $x$ is an element of $\mathfrak{g}$. Next, we define currents
\begin{equation}
    I^+_a(u) = \sum_{n \geq 0} I_a[n] u^{-n-1}, \quad
    I^-_a(u) = \sum_{n < 0} I_a[n] u^{-n-1} = \sum_{n \geq 0} I_a[-n-1] u^n.
\end{equation}
Applying the evaluation representation to these currents, one obtains
\begin{align}
    \pi^0_z : I^+_a(u) &\mapsto \sum_{n \geq 0} I_a z^n u^{-n-1} = \frac{I_a}{u-z},~|z| \ll |u|, \\
    \pi^0_z : I^-_a(u) &\mapsto \sum_{n \geq 0} I_a z^{-n-1} u^n = \frac{I_a}{z - u},~|u| \ll |z|.
\end{align}
We can also introduce \textit{higher} evaluation representations acting on degree $l$ irregular Kac-Moody modules (see \cite{Gukov:2024yxa}) via
\begin{equation}
    \pi^l_z : \quad I^+(u) \mapsto \frac{I^{(l)}_a}{(u-z)^{l+1}}, \quad I^-_a(u) \mapsto \frac{I^{(l)}_a}{(z-u)^{l+1}}~,
\end{equation}
where we have 
\begin{equation} \label{eq:gradedalg}
    [I^{(l)}_a, I^{(k)}_b] = \sum_c C_{abc} I^{(k+l)}_c,
\end{equation}
with $C_{abc}$ being the structure constants of the Lie algebra $\mathfrak{g}$. Next, following \cite{Resh-KZ,Jimbo_2008}, let us define the $r$-matrix
\begin{equation}
    r \equiv \sum_{n \geq 0} I_a[-n-1] \otimes I_a[n],
\end{equation}
and its projections
\begin{equation}
    r^{l,l'}(u,v) \equiv (\pi^l_u \otimes \pi^{l'}_v) r.
\end{equation}
One can show \cite{Resh-KZ} that
\begin{equation}
    r^{l,l'}(u,v) = \sum_a \sum_{s=0}^l \sum_{t=0}^{l'} (-1)^s \binom{t+s}{s} \frac{I^{(s)}_a \otimes I^{(t)}_a}{(u-v)^{s+t+1}},
\end{equation}
and that for 
\begin{equation} \label{eq:Omegai}
    \Omega_i(\{z\}) \equiv \sum_{j  eq i} r^{l_i l_j}_{ij}(z_i - z_j),
\end{equation}
one has 
\begin{equation}
    [\Omega_i(\{z\}), \Omega_j(\{z\})] = 0.
\end{equation}
In the above, the sub-indices $i$ and $j$ denote the action on the $i$th or $j$th entry in a tensor product of $\widehat{\mathfrak{g}}$-modules. The above result can be used to show that the connection
\begin{equation} \label{eq:nablaKZ}
     \nabla_i \equiv \kappa \frac{\partial}{\partial z_i} - \Omega_i(\{z\})
\end{equation}
is flat, i.e.
\begin{equation}
    [\nabla_i, \nabla_j] = 0. 
\end{equation}
We will henceforth restrict to $\mathfrak{g}=\mathfrak{sl}_2(\mathbb{C})$ with the standard matrix generators
$(H,X,Y)\in\slTwo$ acting on $\mathbb{C}^2$,
\begin{equation}
H=\begin{pmatrix}1&0\\0&-1\end{pmatrix},
\qquad
X=\begin{pmatrix}0&1\\0&0\end{pmatrix},
\qquad
Y=\begin{pmatrix}0&0\\1&0\end{pmatrix},
\label{eq:sl2-matrix-generators}
\end{equation}
and commutation relations
\begin{equation}
[H,X]=2X,
\qquad
[H,Y]=-2Y,
\qquad
[X,Y]=H.
\label{eq:sl2-matrix-relations}
\end{equation}
Then \eqref{eq:Omegai} becomes
\begin{equation} \label{eq:Omegai2}
    \Omega_i(\{z\}) = \sum_{s=0}^l \sum_{t=0}^{l'} (-1)^s \binom{t+s}{s} \frac{\Omega_{ij}^{(s,t)}}{(z_i - z_j)^{s+t+1}},
\end{equation}
where 
\begin{equation} \label{eq:Omegaij}
    \Omega_{ij}^{(s,t)} \equiv \frac{1}{2} H_i^{(s)} \otimes H_j^{(t)} + X_i^{(s)} \otimes Y_j^{(t)} + Y_i^{(s)} \otimes X_j^{(t)}.
\end{equation}

\subsection{Quantum flat connections as KZ equations}
\label{sec:PainleveKZmap}

We now want to show that the quantum flat connection defined in section \ref{sec:qflcon} is a deformation of the KZ connection. A similar observation, but with a totally different motivation, was made in \cite{Jimbo_2008}.

\paragraph{Painlev\'e $\textrm{I}$.} The Painlev\'e I connection is given by
\begin{equation}
    \mathbf{A}_I = -p H + (q^2 + zq + z^2 + t/2) X + (4z - 4q) Y.
\end{equation}
Using the fact that this connection describes one irregular operator at infinity and one degenerate operator at $z$ and comparing with \eqref{eq:Omegai2}, we can read off (setting $t=0$):
\begin{align}
    H_{\infty}^{(1)} = - 2 \hat p, &\quad X_{\infty}^{(1)} = -4 \hat q, \quad Y_{\infty}^{(1)} = \hat q^2, \\
    Y_{\infty}^{(2)} =  \hat q, &\quad X_{\infty}^{(2)} = 4, \quad H_{\infty}^{(2)} = 0 \\
    Y_{\infty}^{(3)} = 1, &\quad X_{\infty}^{(3)} = 0, \quad H_{\infty}^{(3)} = 0.
\end{align}
From this we can check that the following commutation relations hold:
\begin{align}
    [H_{\infty}^{(1)}, X_{\infty}^{(1)}] = 2 X_{\infty}^{(2)}, &\quad [X_{\infty}^{(1)}, Y_{\infty}^{(1)}] = H_{\infty}^{(2)} \\
    [H_{\infty}^{(1)}, Y_{\infty}^{(2)}] = - 2 Y_{\infty}^{(3)}, &\quad [H_{\infty}^{(1)}, X_{\infty}^{(2)}] = 2 X_{\infty}^{(3)}\\
    [H_{\infty}^{(1)}, Y_{\infty}^{(3)}] = -2 Y_{\infty}^{(4)}.
\end{align}
However, the following commutator fails:
\begin{equation}\label{faillure I}
    [H_{\infty}^{(1)}, Y_{\infty}^{(1)}] = - 4 q  \neq - 2 Y_{\infty}^{(2)}.
\end{equation}
One notices that this can be easily cured by adding the term $(zq + z^2)X$ to $\mathbf{A}$ which corresponds to setting
\begin{equation}\label{cure I}
    t/2 = zq + z^2.
\end{equation}
(As we will see shortly, this corresponds to choosing a specific slice that breaks mutual flatness but ensures KZ closure upon quantization.)
Note that this choice cannot be viewed as a deformation of the Lax pair as explicitly proven in Appendix \ref{subsec:KZ_shift_PI}. It rather corresponds to choosing a particular slice or hypersurface within the $(t,z)$ coordinate space. This slice breaks mutual flatness of $A$ and $B$, but ensures KZ closure upon quantization. That is, all commutators close and the loop algebra automatically truncates as follows:
\begin{equation}
    X_{\infty}^{(n)} = Y_{\infty}^{(n)} = H_{\infty}^{(n)} = 0 \quad \textrm{for} \quad n \geq 4.
\end{equation}

{
Firstly, we notice that $t$ when fixed as a function of $z$, can admit highest degree of 2 in $z$ and lowest degree 1. Any other choice will change the rank of the singularity at infinity at $t=\infty$.

Let us analyze the meaning of this choice for $t$ in terms of $z$, from the point of view of the associated spectral curve which is of the form \cite{Bonelli:2016qwg}
\begin{equation}
\begin{split}
& y^2 = \frac12 \textrm{Tr} \,\mathbf{A}_I^2
= 4z^3 + 2tz + 2 H_I, 
\\ & 
\sigma_I (q,p) = \frac{p^2}{2} - 2q^3-q t,
\end{split}
\end{equation}
where $\sigma_I(q,p)$ is the Hamiltonian for the Painlev\'e I equation.

Now observe that $(y,z) = (p,q)$ is a point on the spectral curve. For the specific choice of $t$ the spectral curve takes the form 
\begin{equation}
\begin{split}
y^2 & = 8z^3 - 4q^2 z -4q^3 + p^2 
\\ & 
= 8 (z-q)^2 (z+2q) + p^2 -20q^2(z-q).
\end{split}
\end{equation} 

On the point $(y,z) = (p,q)$ then the curve develops a nodal singularity if $p$ hits the value, such that $p^2 = 20q^2(z-q)$, where of course, the fixing of $t$ as a function of $z$ depends on the specific gauge choice for the Lax pair. 
}

\paragraph{Painlev\'e $\textrm{II}_{\textrm{JM}}$.} Let us consider the Painlev\'e $\textrm{II}_{\textrm{JM}}$ example from before. We can decompose
\begin{equation} \label{eq:AIIJMq}
    \mathbf{A}_{\textrm{II}_{\textrm{JM}}} = (z^2 + p + t/2) H + u(z-q) X - \frac{2}{u} (p z + \theta + p q) Y.
\end{equation}
We claim that upon quantization, this is equivalent to the KZ connection with one degree $3$ irregular operator at infinity and one regular operator in the fundamental representation of $\slTwo$ at $z$. Upon fixing $l=0$ and $l'=3$ in \eqref{eq:Omegai2} and identifying $z_j$ with $\infty$ and $z_i$ with $z$, equation \eqref{eq:nablaKZ} becomes
\begin{equation} \label{eq:Omegainf}
     \nabla_z = \kappa \frac{\partial}{\partial z} - \sum_{t=1}^3 \Omega_{\infty}^{(0,t)} z^{t-1} = \kappa \frac{\partial}{\partial z} - \Omega_z.
\end{equation}
Identifying the quantization of $\mathbf{A}_{\textrm{II}_{\textrm{JM}}}$ with $\Omega_z$, we see that (setting $t=0$):
\begin{align}
    H_{\infty}^{(1)} = 2 \hat{p}, &\quad H_{\infty}^{(2)} = 0, \quad H_{\infty}^{(3)} = 2, \\
    X_{\infty}^{(1)} = -\frac{2}{u}(\theta + \hat{p}\hat{q}), &\quad X_{\infty}^{(2)} = - \frac{2}{u} \hat{p}, \quad X_{\infty}^{(3)} = 0, \\
    Y_{\infty}^{(1)} = - u \hat{q}, &\quad Y_{\infty}^{(2)} = u, \quad Y_{\infty}^{(3)} = 0.
\end{align}
Then a straightforward calculation shows that using the quantization condition \eqref{eq:quant}, the following relations hold:\footnote{When using the quantization condition $[\hat{p},\hat{q}] = \kappa$, one should rescale $z$ in $\partial_z \Psi = \mathbf{A}(z)\Psi$ by $z \mapsto z/\kappa$ to obtain the KZ equation.}
\begin{align}
    [H^{(1)}_{\infty}, X^{(1)}_{\infty}] = 2 X^{(2)}_{\infty}, &\quad [H^{(1)}_{\infty},X^{(2)}_{\infty}] = 2 X^{(3)}_{\infty}, \\
    [H^{(1)}_{\infty},Y^{(1)}_{\infty}] = -2 Y^{(2)}_{\infty}, &\quad
    [H^{(1)}_{\infty}, Y^{(2)}_{\infty}] = - 2 Y^{(3)}_{\infty}, \\
    [X^{(2)}_{\infty}, Y^{(1)}_{\infty}] &= H^{(3)}_{\infty}.
\end{align}
However, further computation shows that
\begin{align}\label{faillure II}
    [X^{(1)}_{\infty}, Y^{(1)}_{\infty}] &= 2 \hat q  \neq H^{(2)}_{\infty} \\
    [X^{(1)}_{\infty}, Y^{(2)}_{\infty}] &= 0  \neq H^{(3)}_{\infty}.
\end{align}
This can be cured by adding a deformation $H_{\infty}^{(2)} = 2 \hat q$ via the addition of the term $q z H$ to $\mathbf{A}$ and truncating the loop algebra as follows:
\begin{equation}
    X_{\infty}^{(n)} = Y_{\infty}^{(n)} = H_{\infty}^{(n)} = 0 \quad \textrm{for} \quad n \geq 3.
\end{equation}
The deformation term $H_{\infty}^{(2)} = 2 \hat{q}$ can be again obtained by choosing an appropriate $t$:
\begin{equation}\label{cure II}
    t = 2qz.
\end{equation}
Alternatively, we can follow the procedure in \cite{Jimbo_2008}, where $u$ and $\theta$ were treated as quantum operators, and impose the commutation relation
\begin{equation} \label{eq:uthetabr}
    [\hat \theta, \hat u] = - \kappa \hat u.
\end{equation}
As we will see in section \ref{sec:integrability}, this makes sure that the flatness condition \eqref{eq:compatibility} is preserved at the quantum level. Using \eqref{eq:uthetabr}, one can readily check that 
\begin{equation}
    [X_{\infty}^{(1)}, Y_{\infty}^{(1)}] = 0 = H^{(2)}_{\infty}, \quad [X_{\infty}^{(1)}, Y_{\infty}^{(2)}] = 2 = H^{(3)}_{\infty}.
\end{equation}
\footnote{{{Just like in the previous case, one can compute the spectral curve in this case \cite{Bonelli:2016qwg} and check that $(y,z) = (p+q^2+t/2,q)$ is a point on the spectral curve. Setting $t=2qz$, one can rewrite the curve in the nodal form as 
\begin{equation}
y^2 = (z-q)^2(z^2+ 4qz + 8q^2) + (p+2q^2)^2 + (z-q)(12q^3 + 2pq - 2\theta),
\end{equation}
where the residual pieces vanish for $p=-2q^2$ and $\theta = 4q^3$.  }}}

\paragraph{Painlev\'e $\textrm{IV}$.} The corresponding connection $\mathbf{A}$ here is given by
\begin{equation}\label{A_PIV}
    \mathbf{A} = \frac{A^{(0)}}{z} + A^{(1)} + A^{(2)} z = \left(\begin{array}{cc}
        z+t + \frac{1}{z}(\theta_0 - pq) & u(1 - \frac{q}{2z}) \\
        \frac{2}{u}(pq - \theta_0 - \theta_{\infty}) + \frac{2p}{uz}(pq - 2\theta_0) & -z-t - \frac{1}{z}(\theta_0 - pq)
    \end{array}\right).
\end{equation}
From this we deduce
\begin{align}
    A^{(0)} &= H \otimes (\theta_0 - pq) + X \otimes \left(-\frac{1}{2}{u q}\right) + Y \otimes \frac{2p}{u}(pq - 2\theta_0)  \nonumber \\
    A^{(1)} &= H \otimes t + X \otimes u + Y \otimes \frac{2}{u}(pq - \theta_0 - \theta_{\infty})  \nonumber \\
    A^{(2)} &= H~.
\end{align}
This signals the presence of one irregular operator of degree $2$ at infinity and one regular operator at $0$. There is again a degenerate operator at $z$ and moving it closer to $0$ or infinity gives rise to the regular and irregular singularities of the connection. To compare this to the KZ connection, one can use equations \eqref{eq:Omegaij} and \eqref{eq:Omegainf} to read off ($t=0$):
\begin{align}
    H_0^{(0)} = 2(\theta_0 - \hat{p}\hat{q}), &\quad H^{(1)}_{\infty} = 0, \quad H^{(2)}_{\infty} = 2, \\
    X_0^{(0)} = \frac{2\hat{p}}{u}(\hat{p}\hat{q} -2 \theta_0), &\quad X_{\infty}^{(1)} = \frac{2}{u} (\hat{p}\hat{q} - \theta_0 - \theta_{\infty}), \quad X_{\infty}^{(2)} = 0, \\
    Y_0^{(0)} = - \frac{u}{2}\hat{q}, &\quad Y_{\infty}^{(1)} = u, \quad Y_{\infty}^{(2)} = 0.
\end{align}
Again, one can easily check:
\begin{align}
    [H^{(0)}_0, X^{(0)}_0] = 2 X^{(0)}_0, &\quad [H^{(0)}_0, Y^{(0)}_0] = - 2 Y^{(0)}_0, \\
    [H^{(1)}_{\infty}, X^{(1)}_{\infty}] = 2 X^{(2)}_{\infty} &\quad [H^{(1)}_{\infty}, Y^{(1)}_{\infty}] = -2 Y^{(2)}_{\infty} \\
    [X_0^{(0)}, Y_0^{(0)}] = H_0^{(0)}.
\end{align}
However, the following commutators fail to close:
\begin{equation}
   [X^{(1)}_{\infty}, Y^{(1)}_{\infty}] = 0  \neq H^{(2)}_{\infty}.
\end{equation}
Similarly as before, this can be cured by truncating the loop algebra as follows:
\begin{equation}
    X_{\infty}^{(n)} = Y_{\infty}^{(n)} = H_{\infty}^{(n)} = 0 \quad \textrm{for} \quad n \geq 2,
\end{equation}
or equivalently by choosing $t = -z$.

 { Let us introduce the other Lax matrix \cite{Bonelli:2016qwg}
\begin{equation}
\mathbf{B} = z H + u X + (2/u) (pq-\theta_0-\theta_\infty) Y
\end{equation}
From the compatibility condition, one can compute the equations of motion for $(u,p,q)$, among which 
\begin{equation}
\dot{u} = -(2t+q) u
\end{equation}
Looking at the Lax matrix $\mathbf{A}$, we notice that in this gauge the apparent singularity is at $q =2z$. Thus for the choice of $t=-z$, one gets $\dot{u} = 0$, on this slice and hence fixing $t$ in this way is enough to ensure closure of loop algebra. }

Alternatively, following \cite{Jimbo_2008}, we can impose the operator commutation relation
\begin{equation}
    [\hat \theta_{\infty}, \hat u] = - \kappa \hat u,
\end{equation}
giving
\begin{equation}
    [X_{\infty}^{(1)}, Y_{\infty}^{(1)}] = 2 = H_{\infty}^{(2)}.
\end{equation}

\subsection{Whittaker restriction and descent of flatness}
\label{sec:whittaker_descent}

The truncations encountered in the previous subsection, connecting the quantum Painlev\'e connections to KZ equations, can be understood within the framework of \textit{Whittaker modules} \cite{Kostant1978,Kostant1979}. 

To define a \textit{Whittaker vector} $|W\rangle$, one chooses a triangular decomposition of $\widehat{sl_2} = \mathfrak{n}^- \oplus \mathfrak{h} \oplus \mathfrak{n}^+$, where $\mathfrak{h}$ is a Cartan subalgebra, and $\mathfrak{n}^+$ and $\mathfrak{n}^-$ are nilpotent subalgebras consisting of positive and negative roots spaces. Using a Lie algebra homomorphism (or character) $\chi : \mathfrak{n}^+ \rightarrow \mathbb{C}$, one says that $|W\rangle$ is a Whittaker vector of type $\chi$ if the following holds:
\begin{equation} \label{eq:Wcond}
    x \cdot |W\rangle = \chi(x) |W\rangle, \quad \textrm{for all } \quad x \in \mathfrak{n}^+. 
\end{equation}
A simple example is to take the nilpotent subalgebra generated by $X^{(0)}$ and $Y^{(1)}$ as $\mathfrak{n}^+$. Note that all other positive modes are generated via commutators of these. The negative simple root generators are their counterparts, $Y^{(0)}$ and $X^{(-1)}$. Then a Whittaker vector is defined as the simultaneous eigenvector satisfying the Whittaker conditions:
\begin{align}
    X^{(0)} |W\rangle &= \mu_1 |W\rangle, \\
    Y^{(1)} |W\rangle &= \mu_0 |W\rangle,
\end{align}
with $\mu_0$, $\mu_1$ being two complex numbers. Then all other positive modes which arise as commutators of these vanish automatically. Alternatively, we can build a Whittaker module by imposing \textit{Whittaker restrictions} as follows. Given a $\mathfrak{g}$-module $M$ and a non-singular character $\chi$ as above, one defines $I_W$ as the ideal generated by
\begin{equation} 
    I_W = \{(x - \chi(x)) |x \in \mathfrak{n}^+\}.
\end{equation}
The quotient $M/(I_W M)$ then defines the Whittaker module.

These Whittaker modules can be generalized to the irregular KZ cases discussed in section \ref{sec:PainleveKZmap}. To this end one imposes the condition \eqref{eq:Wcond} only for modes of sufficiently high degree. For example, in the case of Painlev\'e I, one imposes
\begin{align}
    X^{(2)} |W\rangle &= \mu_0 |W\rangle = 4 |W\rangle \\
    Y^{(3)} |W\rangle &= \mu_1 |W\rangle = |W\rangle~.
\end{align}
Since $\mu_0=4$ and $\mu_1=1$ are commuting numbers and all higher modes $n \geq 4$ are generated via commutators of $X^{(2)}$ and $Y^{(3)}$ with other positive modes, we see that 
\begin{equation}
    X^{(n)}|W\rangle = Y^{(n)} |W\rangle = H^{(n)} |W\rangle = 0 \quad \textrm{for} \quad n \geq 4.
\end{equation}

Let $\mathcal H_{\mathrm{irr}}$ be the deformed irregular KZ module of Section~3,
equipped with a flat pair $( \nabla_z, \nabla_t)$, and let $I_{\mathrm W}$
denote the left ideal generated by the Whittaker constraints defining the
reduced sector. Write $\mathcal H_{\mathrm{red}}:=\mathcal H_{\mathrm{irr}}/
I_{\mathrm W}\mathcal H_{\mathrm{irr}}$.

\begin{lemma}[Normalizer criterion for restriction]
\label{lem:normalizer_restriction}
Let $ \nabla=\kappa d-\Omega$ be a flat connection on $\mathcal H_{\mathrm{irr}}$
with coefficients in an associative algebra $\mathcal U$ acting on
$\mathcal H_{\mathrm{irr}}$ (in the KZ setting, $\mathcal U$ is an enveloping
algebra built from affine modes). If every coefficient of $\Omega$ lies in the
normalizer
\begin{equation}
\mathrm N(I_{\mathrm W})
:=
\{a\in\mathcal U:\ a\,I_{\mathrm W}\subset I_{\mathrm W}\},
\end{equation}
then $ \nabla$ induces a well-defined flat connection on the quotient
$\mathcal H_{\mathrm{Toda}}$.
\end{lemma}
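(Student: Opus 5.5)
The plan is to show two things in turn: that $\nabla$ preserves the subspace $I_{\mathrm W}\mathcal H_{\mathrm{irr}}$, so it descends to the quotient, and that flatness survives the descent. Throughout, $\mathcal H_{\mathrm{Toda}}$ is read as the reduced quotient $\mathcal H_{\mathrm{red}}=\mathcal H_{\mathrm{irr}}/I_{\mathrm W}\mathcal H_{\mathrm{irr}}$.

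I will work with sections of the trivial bundle with fibre $\mathcal H_{\mathrm{irr}}$ over the parameter space (coordinates $z$, $t$, …). The subbundle $\mathcal K:=I_{\mathrm W}\mathcal H_{\mathrm{irr}}$ is constant, because $I_{\mathrm W}$ is generated by the fixed elements $x-\chi(x)$, $x\in\mathfrak n^+$, with numerical characters. If the characters were allowed to vary with the parameters, one would instead need $d$ to preserve the sections of $\mathcal K$; I would remark that this holds when $\chi$ is constant, which is the situation in Section~\ref{sec:whittaker_descent}.

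\textbf{Descent of the connection.} I first check that $\nabla$ maps sections of $\mathcal K$ to $\mathcal K$-valued one-forms. A section of $\mathcal K$ can be written locally as a finite sum $\sum_k a_k f_k(\cdot)$, with $a_k\in I_{\mathrm W}$ constant and $f_k$ an $\mathcal H_{\mathrm{irr}}$-valued function. The derivative term gives
\[
\kappa\, d\Bigl(\sum_k a_k f_k\Bigr)=\sum_k a_k\,\kappa\, df_k ,
\]
which lies in $\mathcal K$. For the potential term, write $\Omega=\sum_i \Omega_i\,dz_i$; then $\Omega_i\,a_k f_k=(\Omega_i a_k)f_k$. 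Since $\Omega_i\in\mathrm N(I_{\mathrm W})$, we have $\Omega_i a_k\in I_{\mathrm W}$, so this term also lies in $\mathcal K$. This step only makes sense pointwise in the parameters, so I note that the normalizer condition has to hold for each coefficient function (for example $\Omega_z$ at every fixed $z$). That is exactly the hypothesis "every coefficient of $\Omega$ lies in the normalizer". Hence the operator $\bar\nabla[\psi]:=[\nabla\psi]$ on sections of $\mathcal H_{\mathrm{red}}$ is well defined, independent of the representative $\psi$, and it inherits the Leibniz rule from $\nabla$.

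\textbf{Descent of flatness.} The curvature of the induced connection is computed on representatives: $\bar\nabla_i\bar\nabla_j[\psi]=[\nabla_i\nabla_j\psi]$, because each step is well defined by the previous paragraph. Therefore $[\bar\nabla_i,\bar\nabla_j][\psi]=[[\nabla_i,\nabla_j]\psi]=0$. Equivalently, the quotient map $\mathcal H_{\mathrm{irr}}\to\mathcal H_{\mathrm{red}}$ intertwines $\nabla$ with $\bar\nabla$, and curvature commutes with such intertwiners.

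\textbf{Main obstacle.} The delicate point is the one-sided nature of the ideal. $I_{\mathrm W}$ is a left ideal, so $a I_{\mathrm W}\subset I_{\mathrm W}$ is not automatic for $a\in\mathcal U$; this is precisely why the normalizer hypothesis is needed. I would also add a short remark that, for the affine modes, the normalizer condition reduces to checking $[\Omega_i,x-\chi(x)]\in I_{\mathrm W}$ for the generators. This follows from $\Omega_i(x-\chi(x))=(x-\chi(x))\Omega_i+[\Omega_i,x]$ together with the fact that $I_{\mathrm W}$ is a left ideal. That remark also serves as the practical criterion for the Painlev\'e cases.
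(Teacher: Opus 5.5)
Your descent argument is correct and is essentially the paper's proof. The paper writes $v-v'=aw$ with $a\in I_{\mathrm W}$, uses $\omega a\in I_{\mathrm W}$ to conclude $\omega(v-v')\in I_{\mathrm W}\mathcal H_{\mathrm{irr}}$, and notes that flatness, being an operator identity, passes to the quotient. You additionally check that the $\kappa\,d$ term preserves the constant subspace $I_{\mathrm W}\mathcal H_{\mathrm{irr}}$, which the paper leaves implicit.

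Your closing ``main obstacle'' paragraph, however, has left and right interchanged. For a left ideal, $a I_{\mathrm W}\subset I_{\mathrm W}$ holds for every $a\in\mathcal U$ by definition. So with the lemma's definitions $\mathrm N(I_{\mathrm W})=\mathcal U$, the hypothesis is automatic, and $I_{\mathrm W}\mathcal H_{\mathrm{irr}}$ is simply a $\mathcal U$-submodule. What can fail for a left ideal is $I_{\mathrm W}a\subset I_{\mathrm W}$.

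Your identity $\Omega_i(x-\chi(x))=(x-\chi(x))\Omega_i+[\Omega_i,x]$ shows exactly this. Its left-hand side lies in $I_{\mathrm W}$ automatically, so $[\Omega_i,x]\in I_{\mathrm W}$ is equivalent to $(x-\chi(x))\Omega_i\in I_{\mathrm W}$. That is a right-multiplication condition, not the stated normalizer condition.

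The commutator test is the correct criterion when $I_{\mathrm W}$ is the \emph{right} ideal generated by the $x-\chi(x)$. In that case $I_{\mathrm W}\mathcal H_{\mathrm{irr}}$ is the span of the vectors $(x-\chi(x))w$ (Whittaker coinvariants), and $\Omega_i I_{\mathrm W}\subset I_{\mathrm W}$ is a genuine constraint.

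Since that paragraph is not used in your argument, the proof stands, but the remark should be corrected or dropped.
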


\begin{proof}
If $v\equiv v'$ modulo $I_{\mathrm W}\mathcal H_{\mathrm{irr}}$, then
$v-v'=aw$ with $a\in I_{\mathrm W}$. For any coefficient $\omega$ of $\Omega$,
normalizer membership implies $\omega a\in I_{\mathrm W}$, hence
$\omega(v-v')\in I_{\mathrm W}\mathcal H_{\mathrm{irr}}$. Therefore $\nabla$
descends. Flatness is an operator identity on $\mathcal H_{\mathrm{irr}}$, so
it remains true after passing to the quotient.
\end{proof}

\begin{remark}
In the concrete irregular KZ realizations of Section~3, each coefficient is
built from affine modes in a graded manner. Commutators with the generators of
$I_{\mathrm W}$ therefore stay inside $I_{\mathrm W}$ by the same graded
relations, which is precisely the normalizer mechanism of
Lemma~\ref{lem:normalizer_restriction}. This is the conceptual reason why the
Whittaker/Toda restriction preserves flatness.
\end{remark}


\section{Integrability}
\label{sec:integrability}

In this section we bridge the quantization procedure to integrability, examining the behavior of operator flat connections after Whittaker reduction. We demonstrate that the reduced connections take values in $\mathfrak{gl}_2(\mathcal{A})$ over the noncommutative Weyl algebra, analyze the class-preserving deformations of Painlev\'e systems and establish their significance for quantum flatness and BPZ equations.

\subsection{Operator Flat Connections over the Weyl Algebra}
\label{sec:op_flat_weyl}

After Whittaker reduction the affine algebra collapses to the Weyl pair
$(p,q)$, and the reduced flat connection takes values in $\mathfrak{gl}_2(W_\kappa) = \mathfrak{gl}_2 \otimes W_\kappa$ where 
\begin{equation} \label{eq:Weyl}
    W_\kappa=\mathbb C\langle p,q\rangle/(pq-qp-\kappa)
\end{equation}
is the Weyl algebra. A formulation inside $\mathfrak{sl}_2\otimes \mathcal{A}$ is not intrinsically
closed when the algebra $\mathcal{A}$ is noncommutative.

\begin{lemma}
There is no natural Lie algebra structure on $\mathfrak{sl}_2\otimes W_\kappa$
defined by
\begin{equation}
[g\otimes a,\; g'\otimes b] = [g,g']\otimes ab
\end{equation}
if $W_\kappa$ is noncommutative.
\end{lemma}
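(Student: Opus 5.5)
The plan is to show that the proposed bracket already fails the Lie algebra axioms on a few explicit elements. Since the bracket $[g\otimes a, g'\otimes b]=[g,g']\otimes ab$ is bilinear by construction, the only axioms that can fail are antisymmetry and the Jacobi identity. It therefore suffices to find specific $g,g'\in\mathfrak{sl}_2$ and $a,b\in W_\kappa$ that violate one of them. I will attack antisymmetry first, because it fails in the most transparent way.

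\textbf{Antisymmetry.} Using $[g',g]=-[g,g']$, the two brackets are
\begin{equation}
[g\otimes a, g'\otimes b] = [g,g']\otimes ab, \qquad [g'\otimes b, g\otimes a] = -[g,g']\otimes ba .
\end{equation}
Adding them gives
\begin{equation}
[g\otimes a, g'\otimes b] + [g'\otimes b, g\otimes a] = [g,g']\otimes (ab-ba).
\end{equation}
Take $g=X$, $g'=Y$, $a=p$, $b=q$. Then $[X,Y]=H$ by \eqref{eq:sl2-matrix-relations} and $pq-qp=\kappa$ by \eqref{eq:Weyl}, so the sum equals $\kappa\, H\otimes 1$. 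This is nonzero provided two facts hold: $\kappa\neq 0$, and $1\neq 0$ in $W_\kappa$. For the second fact I will invoke the standard faithful representation of $W_\kappa$ with $q$ acting as multiplication by $x$ and $p$ as $\kappa\,\partial_x$ on $\mathbb{C}[x]$. Because the element is nonzero, antisymmetry fails.

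\textbf{Reduction to the alternating form.} An anticommuting bracket could still be a Lie bracket in characteristic $\neq 2$, so I also need to rule out the alternating axiom $[v,v]=0$. Take $v=X\otimes p + Y\otimes q$ and expand $[v,v]$ by bilinearity. The diagonal terms vanish because $[X,X]=[Y,Y]=0$. The cross terms give
\begin{equation}
[v,v] = H\otimes pq - H\otimes qp = \kappa\, H\otimes 1 \neq 0 .
\end{equation}
So the bracket is not alternating either.

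\textbf{Interpreting ``no natural structure''.} I will conclude by noting that the failure comes entirely from the commutator $ab-ba$ in $W_\kappa$, so it disappears exactly when $W_\kappa$ is commutative. The natural repair is to pass to $\mathfrak{gl}_2\otimes W_\kappa$ with the commutator of the associative algebra $\mathrm{Mat}_2(W_\kappa)$. The identity component there absorbs terms such as $\kappa\,H\otimes 1$ and the symmetric parts $\{g,g'\}\otimes[a,b]$. This motivates the later Lemma~\ref{lem:forced_identity_component}.

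The main obstacle is not the computation but making ``natural'' precise. I will read it as ``any bracket given by the stated formula'', for which the counterexample above is decisive. If needed, I can also show that no antisymmetrized variant stays inside $\mathfrak{sl}_2\otimes W_\kappa$: for instance, $[X\otimes p, X\otimes q]$ would have to produce $X^2\otimes \kappa$, but $X^2=0$ and the anticommutator $\{X,Y\}=1$ is not traceless.
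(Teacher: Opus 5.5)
Your argument is correct and is essentially the paper's: the paper also refutes antisymmetry, and its accompanying remark uses exactly your witness $g=X$, $g'=Y$, $a=p$, $b=q$ to obtain $\kappa\,H\otimes 1\neq 0$. The separate alternating check is redundant, since alternating always implies antisymmetric and your $v$ is just the polarization of the same witness; the closing asides are also loose ($[X\otimes p,X\otimes q]=X^2\otimes\kappa=0$ witnesses nothing, and in $\mathfrak{gl}_2(W_\kappa)$ the identity component comes from $\tfrac12\{X,Y\}\otimes[p,q]$ rather than from absorbing $\kappa\,H\otimes 1$), but none of this affects the proof.
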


\begin{proof}
Antisymmetry would require $[g,g']\otimes ab=[g,g']\otimes ba$, hence $ab=ba$
for all $a,b$, which fails in $W_\kappa$.
\end{proof}

\begin{remark}[Explicit failure of antisymmetry]
Take $g=X$, $g'=Y$ and $a=p$, $b=q$. Since $[X,Y]=H$,
\begin{equation}
[X\otimes p,\;Y\otimes q]=H\otimes pq,
\qquad
[Y\otimes q,\;X\otimes p]=(-H)\otimes qp.
\end{equation}
Thus
\begin{equation}
[X\otimes p,\;Y\otimes q]+[Y\otimes q,\;X\otimes p]
=
H\otimes(pq-qp)
=
\kappa\,H\otimes 1 \neq 0,
\end{equation}
so the naive tensor bracket cannot define a Lie algebra when $[p,q]=\kappa$.
\end{remark}

Consequently the minimal intrinsically closed setting for operator fields is
$\mathfrak{gl}_2(W_\kappa)$ equipped with the commutator bracket.
Let $D_z$ and $D_t$ denote the covariant operators defined in Section~3,
now interpreted in $\mathfrak{gl}_2(W_\kappa)$.
Flatness is the condition $[D_z,D_t]=0$, which is equivalent to
\eqref{eq:compatibility}.
Since the derivation uses only associativity and the Jacobi identity, it
remains valid for arbitrary associative coefficient algebras.

\begin{lemma}[Forced identity component under commutator]
\label{lem:forced_identity_component}
Embed $\mathfrak{sl}_2$ into $\mathrm{Mat}_2(\mathbb C)$ with generators
$H,X,Y$ and view $\mathfrak{sl}_2\otimes W_\kappa$ inside $\mathfrak{gl}_2(W_\kappa)$.
For $A=bX$ and $B=cY$ with $b,c\in W_\kappa$ one has
\begin{equation}
[A,B]
=
\frac12\Bigl((bc-cb)\,\mathrm{Id}_2 + (bc+cb)\,H\Bigr).
\label{eq:forced_identity_commutator}
\end{equation}
In particular, if $[b,c] \neq 0$ then $[A,B]$ has an $\mathrm{Id}_2$ component
and does not lie in $\mathfrak{sl}_2\otimes W_\kappa$.
\end{lemma}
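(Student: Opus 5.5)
The proof is a direct matrix computation in $\mathrm{Mat}_2(W_\kappa)$, so I will simply multiply the matrices and then regroup the result. I regard an element of $\mathfrak{gl}_2(W_\kappa)$ as a $2\times 2$ matrix whose entries lie in $W_\kappa$. Products are taken with the ordinary matrix rule, but the order of entries is kept, since $W_\kappa$ is associative and noncommutative. The scalars $0,1\in\mathbb C$ are central in $W_\kappa$.

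First I write $A=bX=\bigl(\begin{smallmatrix}0&b\\0&0\end{smallmatrix}\bigr)$ and $B=cY=\bigl(\begin{smallmatrix}0&0\\c&0\end{smallmatrix}\bigr)$. Multiplying out gives
\begin{equation}
AB=\begin{pmatrix}bc&0\\0&0\end{pmatrix},
\qquad
BA=\begin{pmatrix}0&0\\0&cb\end{pmatrix}.
\end{equation}
The commutator is therefore $[A,B]=\mathrm{diag}(bc,\,-cb)$.

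Next I decompose this diagonal matrix using $\mathrm{Id}_2=\mathrm{diag}(1,1)$ and $H=\mathrm{diag}(1,-1)$. A general diagonal matrix satisfies $\mathrm{diag}(\alpha,\beta)=\tfrac12(\alpha+\beta)\,\mathrm{Id}_2+\tfrac12(\alpha-\beta)\,H$. This identity holds entrywise for $\alpha,\beta\in W_\kappa$, because the coefficients $\pm1$ and $\tfrac12$ are central. Setting $\alpha=bc$ and $\beta=-cb$ gives exactly \eqref{eq:forced_identity_commutator}.

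For the final claim I need to see why a nonzero $\mathrm{Id}_2$ component takes $[A,B]$ outside $\mathfrak{sl}_2\otimes W_\kappa$. Every element $H\otimes h+X\otimes x+Y\otimes y$ of that subspace is the matrix $\bigl(\begin{smallmatrix}h&x\\y&-h\end{smallmatrix}\bigr)$. Its two diagonal entries therefore sum to zero in $W_\kappa$, where "trace" here means this entrywise sum of the diagonal. The sum of the diagonal entries of $[A,B]$ is $bc-cb=[b,c]$. Hence $[A,B]\notin\mathfrak{sl}_2\otimes W_\kappa$ whenever $[b,c]\neq0$; for instance $b=p$, $c=q$ gives $[b,c]=\kappa$. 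The only point needing care is that the decomposition into $\mathrm{Id}_2$ and $H$ parts is unique, which holds because $\mathrm{Id}_2,H$ are linearly independent over $W_\kappa$ acting on either side. There is no real obstacle here.
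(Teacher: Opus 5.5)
Your proof is correct and is essentially the paper's argument: the paper uses $XY=\frac12(\mathrm{Id}_2+H)$ and $YX=\frac12(\mathrm{Id}_2-H)$, which is exactly your computation $AB=\mathrm{diag}(bc,0)$, $BA=\mathrm{diag}(0,cb)$, followed by the same regrouping into $\mathrm{Id}_2$ and $H$ parts. Your diagonal-sum argument for the final clause (that elements of $\mathfrak{sl}_2\otimes W_\kappa$ have diagonal entries summing to zero, while $[A,B]$ has diagonal sum $[b,c]$) makes explicit a step the paper leaves implicit.
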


\begin{proof}
Use $XY=\frac12(\mathrm{Id}_2+H)$ and $YX=\frac12(\mathrm{Id}_2-H)$:
\begin{equation}
[A,B]=bc\,XY-cb\,YX
=
\frac12\Bigl((bc-cb)\mathrm{Id}_2+(bc+cb)H\Bigr).
\end{equation}
\end{proof}

\begin{remark}[Minimal illustrative example]
With $b=p$ and $c=q$, the Weyl relation gives $pq-qp=\kappa$, hence
\begin{equation}
[pX,\;qY]
=
\frac{\kappa}{2}\,\mathrm{Id}_2+\frac12(pq+qp)\,H,
\end{equation}
so the identity component is algebraically forced.
\end{remark}

\begin{remark}[Compatibility with the enveloping-algebra viewpoint]
In the KZ framework, coefficients are in fact valued in the split Casimirs of the enveloping algebra $U(\mathfrak{sl}_2) \otimes U(\mathfrak{sl}_2)$, and flatness is formulated inside this associative algebra. In the present setting, noncommutativity sits in the Weyl algebra $W_\kappa$ itself. The space $\mathfrak{gl}_2(W_\kappa) \cong M_2(\mathbb{C}) \otimes W_\kappa$ naturally accommodates the tensor product of two $\mathfrak{sl}_2$ representations: the fundamental $2\times 2$ matrix representation and a realization via $W_\kappa$ generators. This yields an algebra homomorphism
\begin{equation}
U(\mathfrak{sl}_2) \otimes U(\mathfrak{sl}_2) \to M_2(\mathbb{C}) \otimes W_\kappa \cong \mathfrak{gl}_2(W_\kappa).
\end{equation}
Under this map, the abstract split Casimirs are evaluated as explicit elements of $\mathfrak{gl}_2(W_\kappa)$. The standard matrix commutator in $\mathfrak{gl}_2(W_\kappa)$ now serves as the bracket relevant for flatness, holonomy, and gauge covariance.
\end{remark}

\begin{proposition}
Let $g(z,t)$ be invertible in $\mathfrak{gl}_2(W_\kappa)$ and define
\begin{equation}
D_z^{(g)} = g D_z g^{-1},
\qquad
D_t^{(g)} = g D_t g^{-1}.
\end{equation}
Then flatness is preserved.
\end{proposition}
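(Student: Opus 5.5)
The plan is to treat $D_z=\kappa\partial_z-A$ and $D_t=\kappa\partial_t-B$ as operators on $\mathfrak{gl}_2(W_\kappa)$-valued functions of $(z,t)$, for instance on sections $\widehat\Psi$ with values in $W_\kappa^{\,2}$ or in a module over $W_\kappa$. In this language flatness is the operator identity $[D_z,D_t]=0$. We have already noted that this identity is equivalent to \eqref{eq:compatibility} and that its derivation uses only associativity. The key observation is then that conjugation by an invertible element $g$ of the associative algebra of $\mathfrak{gl}_2(W_\kappa)$-valued functions is an algebra automorphism of the ring of operators generated by these functions and by $\partial_z,\partial_t$. Conjugation by $g$ therefore preserves commutators.

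The steps, in order, are as follows. First, note that $g^{-1}$ is a two-sided inverse in the ring of $\mathfrak{gl}_2(W_\kappa)$-valued functions. Differentiating $gg^{-1}=\mathrm{Id}_2$ and using the Leibniz rule (valid for noncommuting coefficients, with the order of factors kept) gives $\partial(g^{-1})=-g^{-1}(\partial g)g^{-1}$. This shows that $g D_z g^{-1}$ is again a first-order operator:
\begin{equation}
D_z^{(g)}=\kappa\partial_z - \bigl(gAg^{-1}+\kappa(\partial_z g)g^{-1}\bigr),
\end{equation}
and similarly for $D_t^{(g)}$. Hence the gauge-transformed pair is of the same form, with $A^g=gAg^{-1}+\kappa(\partial_zg)g^{-1}$ and $B^g=gBg^{-1}+\kappa(\partial_tg)g^{-1}$. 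Second, compute directly
\begin{equation}
[D_z^{(g)},D_t^{(g)}]=gD_zg^{-1}gD_tg^{-1}-gD_tg^{-1}gD_zg^{-1}=g[D_z,D_t]g^{-1}=0.
\end{equation}
This computation uses only associativity and $g^{-1}g=\mathrm{Id}_2$. Third, translate the result back into the compatibility condition for $(A^g,B^g)$. The operator commutator equals $-\kappa(\partial_zB^g-\partial_tA^g)+[A^g,B^g]$ (with suitable normalisation of $\kappa$), as a multiplication operator, so its vanishing is exactly \eqref{eq:compatibility} for the transformed pair.

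The main obstacle is to make sure that nothing relies on commutativity of the coefficients. First, the derivative of $g^{-1}$ must be taken in the ordered form above. Second, $\partial_z$ must commute with left multiplication by constant elements of $W_\kappa$, which holds because $p$ and $q$ do not depend on $z$ and $t$; this is also the natural setting for the slices $t=t(z)$ used in Section~\ref{sec:PainleveKZmap}. Third, invertibility of $g$ has to mean a two-sided inverse in $\mathfrak{gl}_2(W_\kappa)$ that is smooth or formal in $(z,t)$. I would state these points explicitly as hypotheses. With them in place, the proof reduces to the one-line conjugation identity above.
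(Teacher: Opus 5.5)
Your proposal is correct and follows the paper's proof, which is precisely the one-line identity $[gD_zg^{-1},gD_tg^{-1}]=g[D_z,D_t]g^{-1}=0$. Your explicit formulas for $A^g$ and $B^g$ and the translation back to the zero-curvature equation are extra detail the paper omits; the one correction is that you need not assume $p,q$ are $t$-independent (in the paper's Heisenberg picture for quantum flatness they are not), since the argument only uses that $\partial_z$ and $\partial_t$ act as derivations on the coefficient algebra, with the order of factors kept.
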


\begin{proof}
Conjugation preserves commutators, hence $[D_z,D_t]=0$ implies
$[D_z^{(g)},D_t^{(g)}]=0$.
\end{proof}

\begin{remark}[Where Lie-level suffices and where associativity is forced]
The Lie/enveloping-algebra layer governs the KZ modes and the Whittaker
restriction (Lemma~\ref{lem:normalizer_restriction}). By contrast, the first
appearance of matrix commutators, holonomy (ordered exponentials), or gauge
conjugation forces the associative enlargement: Lemma~\ref{lem:forced_identity_component}
shows that the traceless tensor sector is not closed under commutator once
$W_\kappa$ is noncommutative.
\end{remark}

We would also like to remark that two distinct flat operator pairs with distinct monodromies can share an identical Casimir projection. This phenomenon is explained in more detail in Appendix \ref{app:same_casimir_pairs}. As the Casimir appears in the quantum curve/BPZ equation, this shows that the quantum flat connection retains more information than the corresponding quantum curve equation.

\subsection{Deformation theory, isomonodromy, and \texorpdfstring{$\tau$}{tau}--functions}
\label{sec:deformation_isom_tau}

We work over an associative unital $\mathbb{C}$--algebra $\A$ and consider a rank--$2$
operator connection with values in $\mathfrak{gl}_2(\A)$,
\begin{equation}
D=\kappa\,d-\bigl(A(z,t)\,dz+B(z,t)\,dt\bigr),
\qquad
A,B\in \mathfrak{gl}_2(\A).
\label{eq:operator_connection_short}
\end{equation}
Flatness is the zero--curvature condition
\begin{equation}
\partial_tA-\partial_zB+\frac{1}{\kappa}[A,B]=0,
\label{eq:flatness_short}
\end{equation}
equivalently $[D_z,D_t]=0$ with
\begin{equation}\label{flat paire}
D_z=\kappa\partial_z-A,
\qquad
D_t=\kappa\partial_t-B.
\end{equation}

An infinitesimal gauge parameter $\epsilon(z,t)\in \mathfrak{gl}_2(\A)$ acts by
\begin{equation}
\delta_\epsilon A=[A,\epsilon]-\kappa\,\partial_z\epsilon,
\qquad
\delta_\epsilon B=[B,\epsilon]-\kappa\,\partial_t\epsilon.
\label{eq:gauge_short}
\end{equation}
With the Atiyah--Bott symplectic form
\begin{equation}
\Omega=\int \Tr(\delta A\wedge \delta B),
\end{equation}
flatness \eqref{eq:flatness_short} is the corresponding moment--map equation
\cite{AtiyahBott1983}.

We also recall the $z$--parallel transport
\begin{equation}
\kappa\,\partial_zU=A(z,t)\,U,
\qquad
U(z_0,t)=\mathbf 1,
\label{eq:parallel_transport}
\end{equation}
defined locally on simply connected $z$--domains.

\medskip

\begin{theorem}[Local solvability and ambiguity for fixed $A$]
\label{thm:solveB_cmp}
Let $A(z,t)$ be smooth (or formal) in $t$ and defined on a simply connected
domain in $z$.

\begin{enumerate}[label=\textup{(\roman*)}, leftmargin=*, itemsep=0.3em]
\item There exists locally at least one $B(z,t)$ solving
\eqref{eq:flatness_short}.

\item If $B$ is a solution, then so is $B+C$ for any
$C$ satisfying $[D_z,C]=0$.

\item If $B$ is required to lie in a prescribed singularity
class (for instance polynomial in $z$ of bounded degree, or meromorphic
with fixed pole orders), then solvability imposes compatibility constraints
on the $t$--dependence of $A$.
\end{enumerate}
\end{theorem}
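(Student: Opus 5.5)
Part (i) will be handled by solving the flatness equation through the $z$--parallel transport \eqref{eq:parallel_transport}. Let $U(z,t)$ be the local solution of $\kappa\partial_zU=AU$ with $U(z_0,t)=\mathbf 1$. It is invertible in $\mathfrak{gl}_2(\A)$, since in the formal or smooth setting its inverse solves $\kappa\partial_zU^{-1}=-U^{-1}A$ with the same initial condition. The natural candidate is
\begin{equation*}
B:=\kappa(\partial_tU)U^{-1}.
\end{equation*}
With this choice $D_t=\kappa\partial_t-B=U\circ\kappa\partial_t\circ U^{-1}$ and $D_z=U\circ\kappa\partial_z\circ U^{-1}$ as operators, so $[D_z,D_t]=U[\kappa\partial_z,\kappa\partial_t]U^{-1}=0$. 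This is exactly \eqref{eq:flatness_short}.

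To avoid any hidden commutativity, I will also verify this directly. Differentiate $\kappa\partial_zU=AU$ in $t$ and use $\partial_tU=\kappa^{-1}BU$. Then compare with $\partial_z$ of $\kappa\partial_tU=BU$, and right-multiply by $U^{-1}$. Every step uses only associativity of matrix multiplication over $\A$. This is the same observation the paper makes for $\mathfrak{gl}_2(W_\kappa)$, which is why one must work in $\mathfrak{gl}_2(\A)$ rather than in $\mathfrak{sl}_2\otimes\A$ (Lemma~\ref{lem:forced_identity_component}).

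The remaining points in (i) are existence and the dependence on $t$. The transport $U$ exists locally on the simply connected domain, via the Picard/Dyson series when $A$ is analytic in $z$. It is smooth or formal in $t$ because the series may be differentiated term by term. These are the facts I would have to justify most carefully, since $\A$ carries no topology in general. In the formal case I would argue degree by degree in $t$, where each coefficient solves a linear $z$--ODE with coefficients in $\mathfrak{gl}_2(\A)$.

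Part (ii) follows by linearity. Flatness is $[D_z,D_t]=0$, and replacing $B$ by $B+C$ changes $D_t$ to $D_t-C$. Hence the commutator changes by $-[D_z,C]$, which vanishes by hypothesis. Unwinding, $[D_z,C]=-\kappa\partial_zC-[A,C]$, so the condition is $\kappa\partial_zC=[C,A]$. Its solutions are exactly $C=U\,c(t)\,U^{-1}$ with $c(t)$ independent of $z$. This also shows that the ambiguity is parametrized by $z$--independent elements $c(t)$ conjugated by $U$.

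For (iii) I will use the description from (ii): every solution has the form $B=\kappa(\partial_tU)U^{-1}+U c(t)U^{-1}$. Generically this is transcendental in $z$, with essential singularities inherited from $U$. Requiring $B$ to be polynomial of bounded degree, or meromorphic with fixed pole orders, is therefore a nontrivial set of equations on $\partial_tA$ modulo the image of $\epsilon\mapsto[A,\epsilon]-\kappa\partial_z\epsilon$.

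The step I expect to be the main obstacle is making (iii) concrete, since it is stated qualitatively. I would handle it by a Laurent/Taylor coefficient count. Expand \eqref{eq:flatness_short} at a pole of $A$, or at $\infty$ for the polynomial Painlevé cases, and write $B$ with an ansatz in the prescribed class. Matching coefficients yields an overdetermined linear system for the coefficients of $B$. The highest-order equations solve recursively for the coefficients of $B$. The leftover lowest-order equations are compatibility conditions on $\partial_tA$; these are the class-preserving (isomonodromic-type) constraints. I would exhibit this with the Painlevé~II example \eqref{eq:APIIJM}, where a polynomial $B$ of degree one forces the equations of motion for $(u,q,p)$.
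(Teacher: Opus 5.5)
Your proposal is essentially the paper's argument. The paper conjugates by the transport $U$, which reduces flatness to $\partial_z(U^{-1}BU)=U^{-1}(\partial_tA)U$, and then integrates in $z$; by variation of parameters $\kappa(\partial_tU)U^{-1}=U\bigl(\int_{z_0}^{z}U^{-1}(\partial_tA)U\,dz'\bigr)U^{-1}$, so your closed-form $B$ is exactly that solution with zero integration constant, and your (ii) and qualitative (iii) match the paper's treatment.

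One slip in (ii): since $D_z=\kappa\partial_z-A$, you have $[D_z,C]=\kappa\partial_zC-[A,C]$, so the commutant condition is $\kappa\partial_zC=[A,C]$ rather than $[C,A]$; this is the equation that $C=U\,c(t)\,U^{-1}$ actually satisfies.
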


\begin{proof}
Fix $t$ and let $U$ be the $z$--parallel transport
\eqref{eq:parallel_transport}. Conjugation by $U$
reduces the compatibility equation to a first--order linear equation in $z$,
locally solvable on a simply connected domain.
Conjugating back yields $B$, proving (i).
If $B_1,B_2$ solve \eqref{eq:flatness_short},
their difference $C=B_1-B_2$ satisfies $[D_z,C]=0$,
and conversely $B\mapsto B+C$ preserves flatness.
Finally, the construction of $B$ involves conjugation and $z$--integration;
requiring these operations to preserve a given singularity class
produces the usual isomonodromic compatibility constraints.
\end{proof}

\medskip

We now describe the deformation--theoretic tangent space.
For an infinitesimal variation
\begin{equation}
A\mapsto A+\varepsilon\,\alpha,
\qquad
B\mapsto B+\varepsilon\,\beta,
\qquad
\varepsilon^2=0,
\label{eq:inf_var_FT}
\end{equation}
flatness linearizes to
\begin{equation}
[D_z,\beta]-[D_t,\alpha]=0.
\label{eq:FT_linearized}
\end{equation}

\begin{proposition}[Deformation complex]
\label{prop:FT_cohomology}
Infinitesimal gauge transformations generated by $\gamma$ act by
\begin{equation}
\alpha=[D_z,\gamma],
\qquad
\beta=[D_t,\gamma].
\label{eq:FT_gauge}
\end{equation}
Hence the tangent space to the moduli of flat operator connections modulo gauge
is the first cohomology of the two--term complex determined by
\eqref{eq:FT_linearized} modulo \eqref{eq:FT_gauge}.
If $A$ is fixed, flat--preserving deformations of $B$
are precisely those $\beta$ satisfying $[D_z,\beta]=0$.
\end{proposition}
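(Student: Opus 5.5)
The plan is to verify the three claims of Proposition~\ref{prop:FT_cohomology} directly from the linearized flatness condition \eqref{eq:FT_linearized}, using only associativity of $\mathfrak{gl}_2(\A)$ and the Jacobi identity for commutators of operators. First I would rewrite the gauge action \eqref{eq:gauge_short} in covariant form. Since $[D_z,\gamma]=\kappa\partial_z\gamma-[A,\gamma]$, an infinitesimal gauge transformation with parameter $\epsilon=-\gamma$ gives $\delta A=[A,\epsilon]-\kappa\partial_z\epsilon = [D_z,\gamma]$, and the analogous identity holds for $B$ and $D_t$. This establishes \eqref{eq:FT_gauge}, up to the harmless sign convention relating $\gamma$ and $\epsilon$.

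Second, I would check that gauge deformations are automatically solutions of \eqref{eq:FT_linearized}. The Jacobi identity for the commutator in the associative algebra of differential operators with coefficients in $\mathfrak{gl}_2(\A)$ gives
\begin{equation}
[D_z,[D_t,\gamma]]-[D_t,[D_z,\gamma]]=[[D_z,D_t],\gamma]=0,
\end{equation}
where the last equality uses flatness. So the gauge map $\gamma\mapsto([D_z,\gamma],[D_t,\gamma])$ lands in the kernel of $(\alpha,\beta)\mapsto[D_z,\beta]-[D_t,\alpha]$. This gives a two-term complex
\begin{equation}
C^0\xrightarrow{d_0}C^1\xrightarrow{d_1}C^2,\qquad d_0\gamma=([D_z,\gamma],[D_t,\gamma]),\quad d_1(\alpha,\beta)=[D_z,\beta]-[D_t,\alpha],
\end{equation}
with $d_1d_0=0$. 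I would also note that $[D_z,\beta]$ is a multiplication operator, namely $\kappa\partial_z\beta-[A,\beta]$, so all terms stay inside $\mathfrak{gl}_2(\A)$-valued functions.

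Third, I would argue that the tangent space to flat connections modulo gauge is $\ker d_1/\operatorname{im} d_0$. The first-order variation of the curvature $\kappa^{-1}[D_z,D_t]$ is exactly $d_1(\alpha,\beta)$, so expanding to first order in $\varepsilon$ with $\varepsilon^2=0$ recovers \eqref{eq:FT_linearized}. Quotienting by infinitesimal gauge orbits then gives $H^1$. This step is at the formal or infinitesimal level, so there is no need to address the smoothness of the moduli space.

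Finally, for fixed $A$ we have $\alpha=0$, and \eqref{eq:FT_linearized} reduces to $[D_z,\beta]=0$, in agreement with Theorem~\ref{thm:solveB_cmp}(ii). The only point requiring care, which I expect to be the main obstacle, is the noncommutativity of $\A$. One must make sure that no step uses commutativity of the coefficients, for example by expanding $[A+\varepsilon\alpha,B+\varepsilon\beta]$ while keeping the order of factors. Since only associativity and bilinearity of the commutator are used, this goes through.
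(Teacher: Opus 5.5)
Your proposal is correct and follows essentially the paper's argument. You expand $[D_z-\varepsilon\alpha,D_t-\varepsilon\beta]$ to first order to get \eqref{eq:FT_linearized}, and you obtain \eqref{eq:FT_gauge} from the infinitesimal gauge action. The paper does the latter by conjugating with $g=\mathbf 1+\varepsilon\gamma$, so that $gD_\mu g^{-1}=D_\mu-\varepsilon[D_\mu,\gamma]$; this is your $\epsilon=-\gamma$ reading of \eqref{eq:gauge_short}. Your Jacobi-identity check that gauge directions solve the linearized equation (so $d_1d_0=0$) is a useful step the paper leaves implicit. The only slip is cosmetic: the first-order variation of $\kappa^{-1}[D_z,D_t]$ is $-\kappa^{-1}d_1(\alpha,\beta)$ rather than $d_1(\alpha,\beta)$, which does not change its zero locus.
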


\begin{proof}
Expand $[D_z-\varepsilon\alpha,D_t-\varepsilon\beta]$
to first order to obtain \eqref{eq:FT_linearized}.
For gauge, write $g=\mathbf 1+\varepsilon\gamma$
and compute $gD_\mu g^{-1}
=D_\mu-\varepsilon[D_\mu,\gamma]$,
which gives \eqref{eq:FT_gauge}.
\end{proof}

If $U$ solves \eqref{eq:parallel_transport}, then for any constant
matrix $C_0$ one has $C=U C_0 U^{-1}$ and $[D_z,C]=0$.
Thus $B+\varepsilon C$ is again flat.
These commuting directions are invisible to scalarizations
that retain only quadratic spectral invariants such as
$\Tr(A^2)$.

\medskip

Let $\Psi$ satisfy $D_z\Psi=0$ and choose a formal gauge at $z=\infty$,
\begin{equation} 
\Psi=\widehat G\,e^{\Theta/\kappa},
\qquad
\widehat G=\mathbf 1+\frac{G_1}{z}+\frac{G_2}{z^2}+\cdots .
\label{eq:formal_sol}
\end{equation}

\begin{theorem}[JMU $1$--form]
\label{thm:JMU_closedness}
Define
\begin{equation}
d\log\tau_{\mathrm{JMU}}
=
-\mathrm{res}_{z=\infty}
\Tr\!\left(
\widehat G^{-1} d\widehat G\,\partial_z\Theta
\right).
\label{eq:JMU_1form_general}
\end{equation}
If \eqref{eq:flatness_short} holds,
then $d(d\log\tau_{\mathrm{JMU}})=0$.
\end{theorem}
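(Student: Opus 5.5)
The plan is to follow the classical Jimbo--Miwa--Ueno argument, adapted to coefficients in the associative algebra $\mathcal{A}$, taking care at each point where it uses commutativity. Throughout I read $d$ as the exterior derivative in the deformation variables $t$ only, since $z$ is the spectral variable that the residue integrates out. I also read $\Tr$ as the matrix trace followed by projection onto $\mathcal{A}/[\mathcal{A},\mathcal{A}]$, the cyclic (zeroth Hochschild) quotient. This makes $\Tr$ cyclic for products of $\mathfrak{gl}_2(\mathcal{A})$-valued forms, up to the usual Koszul signs.

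\textbf{Step 1: set up the wave function.} Since \eqref{eq:flatness_short} says $[D_z,D_t]=0$, I will normalize $\Psi$ so that it is simultaneously horizontal: $D_z\Psi=0$ and $D_t\Psi=0$. Then $\kappa\, d\Psi\,\Psi^{-1}=B\,dt=:\mathcal{B}$. This uses Theorem~\ref{thm:solveB_cmp}, whose ambiguity $B\mapsto B+C$ with $[D_z,C]=0$ is absorbed into the choice of constant right normalization of $\Psi$. I will assume $\Theta$ is diagonal, with entries polynomial in $z$ (plus $\log z$ terms) and coefficients that commute with one another, as the formal theory at an irregular point provides. Under this assumption $e^{\Theta/\kappa}$ can be manipulated as in the commutative case.

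\textbf{Step 2: express $\widehat G^{-1}d\widehat G$ through $\mathcal{B}$.} Substituting \eqref{eq:formal_sol} gives
\[
\widehat G^{-1}d\widehat G=\kappa^{-1}\bigl(\widehat G^{-1}\mathcal{B}\,\widehat G-d\Theta\bigr).
\]
Writing $\eta:=\widehat G^{-1}d\widehat G$, the Maurer--Cartan identity gives $d\eta=-\eta\wedge\eta$, so
\[
d\bigl(d\log\tau_{\mathrm{JMU}}\bigr)=-\mathrm{res}_{z=\infty}\Tr\bigl(-\eta\wedge\eta\,\partial_z\Theta-\eta\wedge d\partial_z\Theta\bigr).
\]
I then substitute $\eta=\kappa^{-1}(\widehat G^{-1}\mathcal{B}\widehat G-d\Theta)$ and expand. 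The terms quadratic in $d\Theta$ vanish because $d\Theta\wedge d\Theta=0$ for commuting diagonal entries. The remaining terms combine into residues of expressions of the schematic form $\Tr\bigl(\mathcal{B}\wedge\mathcal{B}\,\widehat G\partial_z\Theta\widehat G^{-1}\bigr)$ and $\Tr\bigl(\mathcal{B}\wedge\widehat G\,d\partial_z\Theta\,\widehat G^{-1}\bigr)$, using cyclicity to move the conjugations by $\widehat G$.

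\textbf{Step 3: use flatness.} The $z$-equation gives $A=\kappa\,\partial_z\Psi\,\Psi^{-1}=\widehat G\,\partial_z\Theta\,\widehat G^{-1}+\kappa\,\partial_z\widehat G\,\widehat G^{-1}$. This lets me trade $\widehat G\partial_z\Theta\widehat G^{-1}$ for $A$ modulo $O(z^{-2})$ corrections. Flatness, $\kappa\,\partial_z\mathcal{B}-dA=\mathcal{B}A-A\mathcal{B}$ read as a form identity, then turns the sum into
\[
\mathrm{res}_{z=\infty}\,\partial_z\bigl(\text{a single-valued trace expression}\bigr)
\]
plus terms of the form $\Tr([\,\cdot\,,\,\cdot\,])$. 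The total $z$-derivative has zero residue at $\infty$, and the commutator terms vanish in $\mathcal{A}/[\mathcal{A},\mathcal{A}]$. The symmetry $d\partial_z\Theta=\partial_z d\Theta$ is needed to pair the two families of terms.

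\textbf{Main obstacle.} The delicate step is Step 3. Classically it uses $\Tr(XY)=\Tr(YX)$ for $\mathfrak{gl}_2$-valued forms with commuting entries. Here, Lemma~\ref{lem:forced_identity_component} shows that reorderings produce identity components proportional to $[b,c]$, and those components are precisely what cyclicity in $\mathcal{A}/[\mathcal{A},\mathcal{A}]$ must kill. For the Weyl algebra this quotient is very small. My first attempt will therefore be to prove the vanishing in $\mathcal{A}/[\mathcal{A},\mathcal{A}]$ directly. If that is too coarse to be meaningful, the fallback is to check that all commutator terms appearing are in fact total $z$-derivatives of residue-free expressions, or are proportional to $\kappa\,\Tr(\mathrm{Id}_2)\,d(\cdot)$, which is exact. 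A secondary technical point is justifying that the diagonal formal exponent $\Theta$ has mutually commuting coefficients. I expect this to hold in the Whittaker-reduced examples of Section~\ref{sec:whittaker_descent}, where the leading coefficients are scalars.
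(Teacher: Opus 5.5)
Your route is the one the paper sketches: Maurer--Cartan for $\eta=\widehat G^{-1}d\widehat G$, then flatness, then residues of total $z$-derivatives plus traces of commutators. However, the decisive Step 3 is not established, and the mechanism you propose for it cannot work. Projecting $\Tr$ to $\mathcal A/[\mathcal A,\mathcal A]$ is not merely coarse for the Weyl algebra; it is empty. Since $[p,q^{a+1}p^b]=(a+1)\kappa\,q^ap^b$, one has $[W_\kappa,W_\kappa]=W_\kappa$. Hence also $[\mathcal A,\mathcal A]=\mathcal A$ for $\mathcal A=W_\kappa\otimes\mathcal A'$, because $[w_1\otimes 1,w_2\otimes a']=[w_1,w_2]\otimes a'$. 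In that quotient $d\log\tau_{\mathrm{JMU}}$ is itself zero, so closedness there carries no information, and there is no nonzero trace on $W_\kappa$ to use instead. Your fallback (commutator terms being total $z$-derivatives or multiples of $\kappa\Tr(\mathrm{Id}_2)$) is something you plan to check, not something you show. Note also that the classical computation you cite is for the JMU form $\widehat G^{-1}\partial_z\widehat G\,d\Theta$, whereas the statement has $d$ and $\partial_z$ interchanged. So even with commuting coefficients, the reduction claimed in Step 3 is asserted rather than inherited. (The paper's proof makes the same assertion without addressing these points.)

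What you are missing is that, in the setting where the theorem is actually stated, none of this is needed. The connection \eqref{eq:operator_connection_short} has the single deformation parameter $t$, and you yourself set $\mathcal B=B\,dt$. Then $\eta=(\widehat G^{-1}\partial_t\widehat G)\,dt$ and $d\,\partial_z\Theta=(\partial_t\partial_z\Theta)\,dt$. So both terms of your Step 2 expression, $\eta\wedge\eta\,\partial_z\Theta$ and $\eta\wedge d\,\partial_z\Theta$, are multiples of $dt\wedge dt=0$. Equivalently, $d\log\tau_{\mathrm{JMU}}=f(t)\,dt$ with $f$ $\mathcal A$-valued, and $d(f\,dt)=\partial_t f\,dt\wedge dt=0$ because $dt$ is a scalar form. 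This proves the statement without using flatness, cyclicity of the trace, or commutativity of the coefficients of $\Theta$. The noncommutative obstacle you identify only becomes real for several commuting times, and there neither your Step 3 nor the paper's sketch resolves it.
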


\begin{proof}
Let $\Omega=\kappa^{-1}(A\,dz+B\,dt)$.
Flatness implies $d\Omega-\Omega\wedge\Omega=0$.
After gauge transformation by $\widehat G$
and using the formal decomposition
\eqref{eq:formal_sol},
the exterior derivative of
\eqref{eq:JMU_1form_general}
reduces to residues of total $z$--derivatives
and traces of commutators, both of which vanish.
\end{proof}

\paragraph{Deformations away from isomonodromic leaves.}
The Painlev\'e Lax pairs can depend, apart from the time variable $t$, on one or more deformation variables $\theta_\alpha$. These parametrize local monodromies near points $z_{\alpha}$ on the $z$-plane when taking the limit $z \rightarrow z_{\alpha}$. In fact, the Painlev\'e Lax pairs are parametrized in such a way that one can find gauge transformations $\widehat G$ with
\begin{equation} \label{eq:gaugetrf}
    \widehat{G} = \mathbf{1} + (z-z_{\alpha}) G_1 + (z-z_\alpha)^2 G_2 + \cdots,
\end{equation}
such that for $z_\alpha$ an irregular singularity of rank $r$, the wavefunction takes the form
\begin{equation}
    \Psi = \widehat{G} e^{\Theta_\alpha/\kappa} e^{\sum_{i=1}^{r}\frac{\Lambda_{i-1}}{i!} (z-z_\alpha)^{-i}}, \quad \Theta_\alpha = \theta_\alpha H \log (z-z_\alpha)~.
\end{equation}
Hence, one can read off the local monodromy to be $\pm \theta_\alpha$. For example, for Painlev\'e IV, one can find a gauge transformation of the form \eqref{eq:formal_sol} with
\begin{equation}
    \Theta = - \theta_\infty H \log z,
\end{equation}
which shows that the local monodromy around infinity is $\pm \theta_{\infty}$. We see that if we allow $\theta_\alpha$ to vary, we are moving \textit{transversely} to the isomonodromic leaves. The corresponding connection 1-form that lives in a three-dimensional space with coordinates $(z,t,\theta_\alpha)$ is now described by 
\begin{equation}
    \Omega = A dz + B dt + C^\alpha d\theta_{\alpha},
\end{equation}
where the matrices $A$, $B$, and $C$ are given by
\begin{equation}
    A = (\partial_z \Psi) \Psi^{-1}, \quad B = (\partial_t \Psi)\Psi^{-1}, \quad C^\alpha = (\partial_{\theta_{\alpha}} \Psi)\Psi^{-1}.
\end{equation}
The flatness condition $d\Omega - \Omega \wedge \Omega$ then translates to
\begin{align}
    \partial_t A - \partial_z B + [A,B] &= 0, \\
    \partial_{\theta_\alpha} A - \partial_z C^\alpha + [A,C^\alpha] &= 0, \\
    \partial_{\theta_\alpha} B - \partial_t C^\alpha + [B,C^\alpha] &= 0.
\end{align}
Defining
\begin{equation}
    \delta A = \partial_z C - [A,C], \quad \delta B = \partial_t C - [B,C],
\end{equation}
we see that these describe infinitesimal deformations of $A$ and $B$ in the framework of equations \eqref{eq:inf_var_FT} and \eqref{eq:FT_linearized}. Note that although the form of $\delta A$ and $\delta B$ is reminiscent to gauge transformations, they are actual deformations as the $C^\alpha$ are not rational in $z$ but rather transcendental and not single valued in $z$.

\subsection{Class-preserving deformations and integrable directions}
\label{sec:PII_deformations}

We now give a complete and explicit realization of the general deformation framework developed in the previous section in the case of a Lax pair system of Poincar\'e rank two as our main example. The resulting flat connection defines a Painlev\'e~II system together with a
finite-dimensional space of class-preserving deformations. By class-preserving we mean here deformations that preserve the polynomial and pole structure in $z$. All other Painlev\'e cases can be treated analogously (see Appendix \ref{sec:PIV-class-preserving} for a treatment of Painlev\'e IV).

\medskip

We start by restricting to the Painlev\'e~II slice by specifying the 
connection coefficients in the form
\begin{equation}
A(z,t)=\Bigl(z^2+p+\frac{t}{2}\Bigr)H+u(z-q)X-\frac{2}{u}(pz+\theta+pq)Y,
\label{eq:PII-A}
\end{equation}
\begin{equation}
B(z,t)
=
\frac{z}{2}H
+
\frac{u}{2}X
-
\frac{p}{u}Y,
\label{eq:PII-B}
\end{equation}
where the deformation variables are parametrized by $\delta \theta$ and $\delta u$. We now evaluate explicitly the flatness condition
\eqref{eq:flatness_short} for the pair \eqref{eq:PII-A}--\eqref{eq:PII-B}.
A direct computation shows that the zero-curvature
equation is equivalent to the system
\begin{equation}
\dot u = -q u,\qquad
\dot q = p + q^2 + \frac{t}{2},\qquad
\dot p = -pq - qp - \theta,\qquad
\dot\theta = 0.
\label{eq:PII-flow}
\end{equation}

\medskip

\begin{proposition}[Hamiltonian structure]
The evolution of $(p,q)$ defined by \eqref{eq:PII-flow} is Hamiltonian
with respect to the Weyl relation \eqref{eq:Weyl}, with Hamiltonian
\begin{equation}
H_{\mathrm{PII}}
=
\frac12 p^2
+
\frac12\left(pq^2+q^2p\right)
+
\frac t2 p
+
\theta q.
\end{equation}

\end{proposition}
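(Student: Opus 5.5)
The plan is to realize ``Hamiltonian with respect to the Weyl relation'' as Heisenberg evolution. For any $F\in W_\kappa$ (with $t,\theta$ central, $\dot\theta=0$) the dynamics should take the form $\kappa\dot F=[H_{\mathrm{PII}},F]$, up to a sign fixed by the convention $[p,q]=\kappa$. This matches the classical rule $\dot q=\partial H/\partial p$, $\dot p=-\partial H/\partial q$ for the bracket $\{p,q\}=1$. We then need to check this only on the generators $q$ and $p$. The Weyl relation gives $[p,q]=\kappa$, so commutation with $q$ acts as $\kappa\,\partial_p$ and commutation with $p$ acts as $-\kappa\,\partial_q$ on any word, with the ordering kept. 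We fix the sign by testing $[H,q]$ against $\dot q$.

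\textbf{Step 1, the $q$-equation.} Since $q$ commutes with $q^2$ and with $\theta q$, compute
\[
[H_{\mathrm{PII}},q]=\tfrac12[p^2,q]+\tfrac12\bigl([p,q]q^2+q^2[p,q]\bigr)+\tfrac t2[p,q].
\]
Using $[p^2,q]=p[p,q]+[p,q]p=2\kappa p$, this equals $\kappa(p+q^2+t/2)$. That is $\kappa\dot q$ from \eqref{eq:PII-flow}, and it fixes the convention as $\kappa\dot F=[H_{\mathrm{PII}},F]$.

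\textbf{Step 2, the $p$-equation.} With the same convention, $[H_{\mathrm{PII}},p]$ receives contributions only from the terms containing $q$:
\[
\tfrac12\bigl(p[q^2,p]+[q^2,p]p\bigr)+\theta[q,p].
\]
Here $[q^2,p]=q[q,p]+[q,p]q=-2\kappa q$, so the total is $-\kappa(pq+qp)-\kappa\theta$. This is exactly $\kappa\dot p=\kappa(-pq-qp-\theta)$.

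\textbf{Step 3, the remaining variables.} $\dot\theta=0$ holds trivially. The $u$-equation is a passive equation slaved to $q$, and is not part of the $(p,q)$ statement. We also note that the symmetric ordering $\tfrac12(pq^2+q^2p)$ is exactly what produces the ordered term $pq+qp$ in $\dot p$, and that it reduces to $\sigma_{II}$ as $\kappa\to0$.

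The only delicate point is the ordering and sign bookkeeping. The classical flow $\dot p=-2pq-\theta$ must be matched with its Weyl-ordered version, so it is essential that \eqref{eq:PII-flow} was derived from the zero-curvature equation with $B$ to the right of $A$ (from $[A,B]$). That derivation is what produces $pq+qp$ rather than $2pq$. Once the convention is fixed in Step 1, everything else is a direct commutator computation.
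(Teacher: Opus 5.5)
Your proof is correct and is essentially the paper's own argument: using only $[p,q]=\kappa$ (with $t,\theta$ central), you compute $[H_{\mathrm{PII}},q]=\kappa(p+q^2+t/2)$ and $[H_{\mathrm{PII}},p]=-\kappa(pq+qp+\theta)$, and identify these with the Heisenberg equations $\dot q=\kappa^{-1}[H_{\mathrm{PII}},q]$ and $\dot p=\kappa^{-1}[H_{\mathrm{PII}},p]$. Your extra remarks on the symmetric ordering and the classical limit to $\sigma_{II}$ are consistent with the paper but not needed for the proof.
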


\begin{proof}
Using only \eqref{eq:Weyl}, one computes
\begin{equation}
[H_{\mathrm{PII}},q]
=
\frac{1}{2}[p^2,q] + \left(q^2+\frac{t}{2}\right)[p,q]
=
\kappa\left(p+q^2+\frac{t}{2}\right),
\end{equation}
and
\begin{equation}
[H_{\mathrm{PII}},p]
=
-\kappa(pq+qp+\theta).
\end{equation}
Therefore the Heisenberg equations
\begin{equation}
\dot q=\frac{1}{\kappa}[H_{\mathrm{PII}},q],
\qquad
\dot p=\frac{1}{\kappa}[H_{\mathrm{PII}},p]
\end{equation}
reproduce the $q$ and $p$ equations in \eqref{eq:PII-flow}.
\end{proof}

\medskip

We now determine the tangent space of class-preserving infinitesimal
deformations.

\begin{proposition}[Tangent space of class-preserving deformations]
\label{prop:PII-tangent}
Let $A(z,t)$ and $B(z,t)$ be given by
\eqref{eq:PII-A} and \eqref{eq:PII-B}. Then:

\medskip

\noindent
(i) every class-preserving infinitesimal deformation of $A$ is uniquely
parametrized by a quadruple $(\delta p,\delta q,\delta u,\delta\theta)$;

\medskip

\noindent
(ii) there exists a class-preserving companion $\delta B$ if and only if
these parameters satisfy the linearization of \eqref{eq:PII-flow};

\medskip

\noindent
(iii) in that case the companion $\delta B$ is uniquely determined.
\end{proposition}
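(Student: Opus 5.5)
We read ``class-preserving'' as fixing the shape of \eqref{eq:PII-A}--\eqref{eq:PII-B}. Thus a deformation $\delta A$ is an infinitesimal variation obtained by varying the parameters $(p,q,u,\theta)$ in \eqref{eq:PII-A}, with the normalized leading parts $z^2H$ and $uzX$ tied to $u$ (equivalently, $\delta A$ is a polynomial of degree $\le 2$ in $z$ with leading coefficient $0$ and the same normalizations). Likewise $\delta B$ is a variation of \eqref{eq:PII-B} within its form: linear in $z$, with fixed $\tfrac{z}{2}H$ part.

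\textbf{Step (i).} Expanding $\delta A$ coefficientwise gives
\[
\delta A=\delta p\,H+(\delta u\,(z-q)-u\,\delta q)X+\Bigl(\tfrac{2\delta u}{u^2}(pz+\theta+pq)-\tfrac{2}{u}(\delta p\,z+\delta\theta+\delta p\,q+p\,\delta q)\Bigr)Y .
\]
Reading off coefficients recovers the parameters successively. The $z^0$ coefficient of $H$ gives $\delta p$. The $zX$ coefficient gives $\delta u$. The $z^0$ coefficient of $X$ then gives $\delta q$. Finally the $z^0$ coefficient of $Y$ gives $\delta\theta$, after subtracting known terms. So the map $(\delta p,\delta q,\delta u,\delta\theta)\mapsto\delta A$ is injective, and onto the class by definition, which proves uniqueness. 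In the quantum setting we keep operator orderings (e.g.\ $pq$) symmetric; the linear algebra is unchanged because we only compare coefficients in $z$ and in the basis $H,X,Y$ (and $\mathrm{Id}_2$).

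\textbf{Steps (ii) and (iii).} We take the general class-preserving $\delta B=\delta b_1 X+\delta b_2 Y$, where $\delta b_1,\delta b_2$ are $z$-independent. We also allow a constant $H$-component $\delta b_0$ and then show it must vanish. We plug this into the linearized flatness \eqref{eq:FT_linearized},
\[
\partial_t\delta A-\partial_z\delta B+\tfrac1\kappa\bigl([\delta A,B]+[A,\delta B]\bigr)=0,
\]
and expand in powers of $z$ (degrees $0,1,2,3$) and in the components $H,X,Y,\mathrm{Id}_2$. The highest-degree equations are the key step: the $z^3$ and $z^2$ components of $[A,\delta B]$ involve $z^2H$ against $\delta B$. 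They give $\delta b_0=0$ and express $\delta b_1=\delta u/2$ and $\delta b_2=u^{-2}p\,\delta u-u^{-1}\delta p$ (i.e.\ the variation of the existing $B$). This determines $\delta B$ uniquely, proving (iii). The remaining lower-degree equations then involve only $(\delta p,\delta q,\delta u,\delta\theta)$ and their $t$-derivatives. We will check that they coincide exactly with the linearization of \eqref{eq:PII-flow}: $\dot{\delta u}=-\delta q\,u-q\,\delta u$, $\dot{\delta q}=\delta p+\delta q\,q+q\,\delta q$, $\dot{\delta p}=-(\delta p\,q+p\,\delta q+\delta q\,p+q\,\delta p)-\delta\theta$, and $\dot{\delta\theta}=0$. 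For the converse, if these hold, the same $\delta B$ satisfies all equations. An efficient way to see this is to note that the full flatness of the family $A(p+\varepsilon\delta p,\dots)$, $B(\dots)$ is equivalent to \eqref{eq:PII-flow} by the direct computation quoted above; differentiating in $\varepsilon$ then gives both directions at once, once uniqueness of $\delta B$ is known.

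\textbf{Main obstacle.} The main obstacle is the bookkeeping of the coefficient equations: we must verify that the system is neither over- nor under-determined. There are four parameter equations plus the constraints fixing $\delta B$. We must also check that no extra constraint appears, for instance from the $\mathrm{Id}_2$ components forced by Lemma~\ref{lem:forced_identity_component}. Those components are commutators of scalar coefficients and should cancel because $\kappa$-terms appear symmetrically. We will verify this explicitly in the degree-one $\mathrm{Id}_2$ equation first.
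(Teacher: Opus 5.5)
Apart from one step, your proposal is the paper's proof. You read $(\delta p,\delta q,\delta u,\delta\theta)$ off the coefficients of the parametric variation for (i). You substitute a $z$-independent $\delta B$ into the linearized flatness and match powers of $z$ in the $H,X,Y$ components for (ii). You use the leading term $z^2H$ of $A$ for (iii). Reading $\delta b_1=\tfrac12\delta u$ and $\delta b_2=-\delta(p/u)$ directly off the $z^2$ equations is the same mechanism as the paper's argument with the difference $\Delta B=\beta X+\gamma Y$. There are no $z^3$ terms, though: $\delta A$ and $B$ are linear in $z$ and $\delta B$ is constant, so the top degree is $z^2$.

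The step that fails is the optional enlargement by a constant $H$-component $\delta b_0$, which you say the top-degree equations force to vanish. They never see it, since $[z^2H,\delta b_0H]=0$. In fact
\[
\tfrac1\kappa[A,\delta b_0H]=-\tfrac{2\delta b_0}{\kappa}\bigl(A_XX-A_YY\bigr)=-\tfrac{2u\,\delta b_0}{\kappa}\,\partial_uA .
\]
This term is linear in $z$ and points exactly along the $\partial_uA$ direction of the class. Its only effect is to add $\tfrac{2}{\kappa}u\,\delta b_0$ to the right-hand side of the linearized $u$-equation. So $\delta b_0$ is determined by $\delta A$ but is not zero in general.

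Here is an explicit counterexample. Take $\delta p=\delta q=\delta\theta=0$, $\delta u=\tfrac{2c}{\kappa}t\,u$, $\delta b_0=c$, and $\delta b_1,\delta b_2$ as above. Writing $\delta A=-f[A,\tfrac12H]$ and $\delta B=-f[B,\tfrac12H]+cH$ with $f=\tfrac{2c}{\kappa}t$, the Jacobi identity and background flatness reduce the linearized curvature to $\bigl(\tfrac{2c}{\kappa}-f'\bigr)[A,\tfrac12H]=0$. This pair is therefore flat to first order, yet $\delta u$ violates the linearized equation $\partial_t\delta u=-\delta q\,u-q\,\delta u$.

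This is just the residual Cartan direction; compare the companion $C_0=\tfrac{\kappa}{2}H$ of $X_0$, or the $\tfrac12\partial_t\log u$ term in the Painlev\'e IV companion. So if an $H$-component is admitted, the ``only if'' in (ii) is false for the $u$-equation. The proposition holds only because the class is defined so that $\delta B$ has no $H$-part. That is how your own first paragraph sets it up (``fixed $\tfrac z2H$ part''), and how the paper does it ($\delta B=\tfrac12\delta u\,X-\delta(p/u)\,Y$ necessarily). Exclude $\delta b_0$ by that definition rather than trying to derive its vanishing, and the rest of your argument goes through.
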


\begin{proof}
We work in the reduced Painlev\'e~II polynomial class fixed by
\eqref{eq:PII-A} and \eqref{eq:PII-B}. By definition, a class-preserving
infinitesimal deformation is a pair $(\delta A,\delta B)$ such that
$\delta A$ and $\delta B$ remain in the same polynomial ansatz in $z$,
with the same allowed powers of $z$ and the same matrix components as in
\eqref{eq:PII-A} and \eqref{eq:PII-B}.

Differentiating \eqref{eq:PII-A} with respect to the reduced variables
gives
\begin{equation}
\delta A
=
(\delta p)H
+
\bigl((\delta u)z-\delta(uq)\bigr)X
-
2\delta\left(\frac{pz+\theta+pq}{u}\right)Y.
\label{eq:deltaA-parametric}
\end{equation}
Thus a class-preserving variation of $A$ has the form
\begin{equation}
\delta A
=
\alpha H+(\beta_1z+\beta_0)X+(\gamma_1z+\gamma_0)Y.
\end{equation}
Comparing coefficients in \eqref{eq:deltaA-parametric} gives
$\alpha=\delta p$, $\beta_1=\delta u$ and
$\beta_0=-(\delta u)q-u\delta q$, while the two $Y$-coefficients are
uniquely determined by $\delta p,\delta q,\delta u,\delta\theta$ through
the variation of $(pz+\theta+pq)/u$. Hence the class-preserving
variations of $A$ are uniquely parametrized by
$(\delta p,\delta q,\delta u,\delta\theta)$. This proves (i).

For the companion, the class-preserving variation of
\eqref{eq:PII-B} is necessarily
\begin{equation}
\delta B
=
\frac{1}{2}\delta u\,X
-
\delta\left(\frac{p}{u}\right)Y.
\label{eq:deltaB-parametric}
\end{equation}
Since $(A,B)$ is flat, the infinitesimal flatness condition is
\begin{equation}
\partial_t(\delta A)-\partial_z(\delta B)
+\frac{1}{\kappa}[\delta A,B]
+\frac{1}{\kappa}[A,\delta B]
=0.
\label{eq:lin-flatness-P2-proof}
\end{equation}
Substituting \eqref{eq:deltaA-parametric} and
\eqref{eq:deltaB-parametric} into \eqref{eq:lin-flatness-P2-proof}
gives a polynomial identity in $z$. Equating the coefficients of
$H$, $X$ and $Y$ at each power of $z$ yields precisely the
linearization of \eqref{eq:PII-flow}. Explicitly, one obtains
\begin{equation}
\partial_t(\delta u)
=
-(\delta q)u - q\,\delta u,
\end{equation}
\begin{equation}
\partial_t(\delta q)
=
\delta p + q\,\delta q + \delta q\,q,
\end{equation}
and
\begin{equation}
\partial_t(\delta p)
=
-\delta p\,q
-
p\,\delta q
-
\delta q\,p
-
q\,\delta p
-
\delta\theta,
\end{equation}
together with
\begin{equation}
\partial_t(\delta\theta)=0.
\end{equation}
Thus a class-preserving companion exists if and only if
$(\delta p,\delta q,\delta u,\delta\theta)$ satisfies the linearized
Painlev\'e~II system. This proves (ii).

It remains to prove uniqueness. Suppose that $\delta B_1$ and
$\delta B_2$ are two class-preserving companions associated with the
same $\delta A$. Then $\Delta B=\delta B_1-\delta B_2$ satisfies
\begin{equation}
-\partial_z(\Delta B)+\frac{1}{\kappa}[A,\Delta B]=0.
\label{eq:homogeneous-deltaB}
\end{equation}
Since both companions are class-preserving variations of
\eqref{eq:PII-B}, their difference has the form
\begin{equation}
\Delta B=\beta X+\gamma Y,
\end{equation}
with $\beta,\gamma$ independent of $z$. The leading term of $A$ is
$z^2H$. The $z^2$-part of $[A,\Delta B]$ is therefore
\begin{equation}
[z^2H,\beta X+\gamma Y]
=
2\beta z^2X-2\gamma z^2Y.
\end{equation}
Equation \eqref{eq:homogeneous-deltaB} forces $\beta=\gamma=0$.
Hence $\Delta B=0$, so the companion is unique. This proves (iii).
\end{proof}

\medskip

We now identify distinguished integrable directions inside this
deformation space.

An auxiliary flow $\partial_s$ is class-preserving and integrable if
there exists an operator $C(z,t)$ such that
\begin{equation}
\partial_s A - \partial_z C + \frac{1}{\kappa}[A,C]=0,
\qquad
\partial_t C - \partial_s B + \frac{1}{\kappa}[B,C]=0.
\end{equation}

Two natural such directions arise.

\medskip

\noindent
The first is the Painlev\'e time flow itself:
\begin{equation}
X_1(q)=p+q^2+\frac{t}{2},\qquad
X_1(p)=-pq-qp-\theta,\qquad
X_1(u)=-qu,\qquad
X_1(\theta)=0.
\end{equation}

\medskip

\noindent
The second is the residual scaling flow:
\begin{equation}
X_0(q)=0,
\qquad
X_0(p)=0,
\qquad
X_0(u)=u,
\qquad
X_0(\theta)=0,
\end{equation}
with companion
\begin{equation}
C_0=\frac{\kappa}{2}H.
\end{equation}

\begin{proposition}[Distinguished commuting deformation directions]
\label{prop:PII-distinguished}
The reduced Painlev\'e~II system admits two distinguished
class-preserving integrable directions, namely $X_1$ and $X_0$ with
companions $C_1$ and $C_0$. These flows commute and any linear
combination defines a commuting flat deformation.
\end{proposition}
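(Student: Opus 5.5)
The plan has three parts: verify that $X_0$ and $X_1$ with their companions are integrable directions, show the two flows commute, and then extend to linear combinations by linearity.

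\textbf{$X_1$ is integrable.} For the time flow $X_1$ the natural companion is $C_1=B$ itself, as given in \eqref{eq:PII-B}. The first integrability equation, $\partial_s A-\partial_z C_1+\frac1\kappa[A,C_1]=0$ with $\partial_s=X_1$, is then exactly the zero-curvature condition \eqref{eq:flatness_short}. The paper has already noted that this condition is equivalent to \eqref{eq:PII-flow}. The second equation, $\partial_t C_1-X_1(B)+\frac1\kappa[B,C_1]=0$, holds for two reasons. First, $[B,B]=0$. Second, $\partial_t B$ computed along the flow coincides with $X_1(B)$, because $B$ has no explicit $t$-dependence: it depends on $t$ only through $u$ and $p$, which evolve by $X_1$.

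\textbf{$X_0$ is integrable.} The scaling $u\mapsto e^{s}u$ acts on $A$ and $B$ as conjugation by $e^{sH/2}$. Indeed, $X$ has $H$-weight $2$ and $Y$ has weight $-2$, so the conjugation rescales the $X$-coefficient by $e^s$ and the $Y$-coefficient by $e^{-s}$. Infinitesimally this gives
\[
X_0(A)=\tfrac12[H,A],\qquad X_0(B)=\tfrac12[H,B].
\]
With $C_0=\frac{\kappa}{2}H$, which is independent of $z$ and $t$, the first equation reduces to $\tfrac12[H,A]+\tfrac12[A,H]=0$. The second reduces to $-\tfrac12[H,B]+\tfrac12[B,H]\cdot(-1)$; I would check the sign convention here, since the combination should give $0$ once one uses $\frac1\kappa[B,C_0]=\frac12[B,H]=-\frac12[H,B]$. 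This signals that the equation as written may need $C_0\to-\frac\kappa2 H$, or a sign flip in the second relation. So I would fix the sign first, by comparing with the ordering conventions of \eqref{eq:gauge_short}. These checks hold in $\mathfrak{gl}_2(W_\kappa)$ because $H$ is a scalar matrix with central coefficients. As a result, no ordering issues arise, and Lemma \ref{lem:forced_identity_component} produces no identity components.

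\textbf{The flows commute.} I would check $[X_0,X_1]=0$ on the generators. $X_0$ only rescales $u$. $X_1(q)$, $X_1(p)$ and $X_1(\theta)$ do not involve $u$, and $X_1(u)=-qu$ is linear in $u$. Hence
\[
X_0X_1(u)=-qu=X_1X_0(u),
\]
and on $p,q,\theta$ both compositions vanish or agree trivially. Commutativity of the flat extension also requires the mixed curvature condition
\[
X_0C_1-X_1C_0+\tfrac1\kappa[C_1,C_0]=0.
\]
Here $X_1C_0=0$, and $X_0C_1=\tfrac12[H,B]$ is cancelled by $\frac1\kappa[B,\frac\kappa2H]$, up to the same sign convention as above.

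\textbf{Linear combinations and the main obstacle.} For any constants $a,b$, the combination $aX_0+bX_1$ has companion $aC_0+bC_1$. Both integrability equations are linear in the pair (flow, companion), so the combination is again integrable and flat. The main obstacle I expect is the consistent bookkeeping of sign and ordering conventions between the stated integrability equations, \eqref{eq:flatness_short} and the choice $C_0=\frac\kappa2 H$. I would also confirm the $Y$-coefficient $-\frac{2}{u}(\cdots)$ transforms with weight $-1$ in $u$ inside $W_\kappa$; it does, since $u$ commutes with $p$ and $q$.
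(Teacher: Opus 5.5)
Your proposal is correct and follows the paper's route: $C_1=B$ turns the first equation into the flatness condition, $C_0=\frac{\kappa}{2}H$ cancels $X_0(A)=\frac12[H,A]$, and linearity finishes the argument. You are also right that the second auxiliary relation as printed fails for $C_0$ (it gives $-[H,B]\neq 0$), which the paper's ``verified similarly'' passes over. The correct fix is the sign flip coming from $[D_t,D_s]=0$, namely $\partial_s B-\partial_t C+\frac{1}{\kappa}[B,C]=0$; with $C_1=B$ and $X_1=\partial_t$ this is exactly your mixed-curvature identity, which holds with no sign adjustment, whereas replacing $C_0$ by $-\frac{\kappa}{2}H$ would break the first equation.
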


\begin{proof}
For $X_1$, taking $C_1=B$ reduces the auxiliary flatness equations to
the original flatness condition.

For $X_0$, since $C_0=\frac{\kappa}{2}H$, one has
\begin{equation}
\frac{1}{\kappa}[A,C_0]
=
\frac{1}{2}[A,H].
\end{equation}
Using the $\mathfrak{sl}_2$ relations $[H,X]=2X$ and $[H,Y]=-2Y$, we obtain
\begin{equation}
\frac{1}{\kappa}[A,C_0]
=
-\,u(z-q)X
-
\frac{2}{u}(pz+\theta+pq)Y.
\end{equation}
This exactly cancels the variation $X_0(A)$, hence
\[
X_0(A)-\partial_z C_0+\frac{1}{\kappa}[A,C_0]=0.
\]
Since $C_0$ is independent of $z$, the second auxiliary flatness equation
is verified similarly, and the flow $X_0$ commutes with $X_1$.

Linearity implies that any linear combination again defines a flat
commuting deformation.
\end{proof}
\medskip

\begin{proposition}[Rigidity of the deformation variables]
\label{prop:PII-rigidity}
In the reduced Painlev\'e~II system:

\medskip

\noindent
(i) the variable $u$ generates the residual scaling direction;

\medskip

\noindent
(ii) the parameter $\theta$ is central and does not generate an
independent commuting flow.
\end{proposition}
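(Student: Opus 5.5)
The plan is to work entirely inside the class-preserving deformation space from Proposition~\ref{prop:PII-tangent}, parametrized by $(\delta p,\delta q,\delta u,\delta\theta)$. For each of $u$ and $\theta$ I would ask whether the coordinate direction admits a companion $C$ that solves both auxiliary flatness equations.

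For (i), I would take the flow $\partial_s u=u$ with $p,q,\theta$ held fixed. By \eqref{eq:PII-A}, $\partial_s A$ then rescales the $X$-coefficient by $+1$ and the $Y$-coefficient by $-1$, which is exactly $X_0(A)$. Proposition~\ref{prop:PII-distinguished} has already verified this flow with $C_0=\tfrac{\kappa}{2}H$. To make ``generates'' precise, I would show that $C_0$ is the unique class-preserving companion. The argument is the leading-order one from part (iii) of Proposition~\ref{prop:PII-tangent}: any other companion differs from $C_0$ by a $z$-independent $\Delta C$ with $-\partial_z\Delta C+\kappa^{-1}[A,\Delta C]=0$. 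The $z^2$ term $[z^2H,\Delta C]$ kills the $X,Y$ parts of $\Delta C$, and the lower orders then force the $H$ part to vanish because of $[X,Y]$ and $[Y,H]$ terms. The second auxiliary equation holds because $u\mapsto e^s u$ is the adjoint action of $e^{sH/2}$, a constant gauge transformation. Conjugating by it maps $(A,B)$ to $(A,B)$ with $u$ rescaled, and it commutes with the $t$-flow since $\dot u=-qu$ is homogeneous in $u$.

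For (ii), I would first observe that $\dot\theta=0$ in \eqref{eq:PII-flow}, and that $\theta$ appears in $H_{\mathrm{PII}}$ only through the scalar term $\theta q$. On the reduced slice, where $\theta$ is a c-number commuting with the Weyl pair, it is therefore central. Next I would test the direction $X_\theta$ with $\delta\theta=1$ and $\delta p=\delta q=\delta u=0$. Here $\partial_\theta A=-\tfrac{2}{u}Y$, and I would look for $C=\gamma_0H+\beta X+\gamma Y$, possibly polynomial in $z$, that solves $\partial_\theta A-\partial_zC+\kappa^{-1}[A,C]=0$. Matching the top power of $z$ forces the $X$ and $Y$ parts of $C$ to vanish at the highest degree, and descending degree by degree should leave no solution: the constant $Y$-term $-\tfrac{2}{u}$ cannot be produced by $[A,C]$ from an admissible $C$. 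Equivalently, by part (ii) of Proposition~\ref{prop:PII-tangent}, a class-preserving companion would require $\partial_t(\delta p)=-\delta\theta\neq0$. So pure $\delta\theta$ is not a solution of the linearized system, and any admissible direction with $\delta\theta\ne0$ must carry induced $(\delta p,\delta q)$ components. Such a direction is a transverse, non-isomonodromic deformation of the kind described in Section~\ref{sec:deformation_isom_tau}, with a $C^\alpha$ that is transcendental rather than polynomial. It is not an independent class-preserving commuting flow.

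The main obstacle is making ``no independent commuting flow'' rigorous in (ii). The difficulty is that a $\theta$-direction dressed with $(\delta p,\delta q)$ could in principle satisfy the linearized system. To handle this, I would argue that every class-preserving integrable direction has a polynomial companion $C$. The degree-by-degree analysis then restricts $C$ to a span of $B$ and $C_0$, so that the direction is a combination of $X_1$ and $X_0$. Neither of these moves $\theta$.
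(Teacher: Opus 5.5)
Your proof is correct and follows the paper: (i) is exactly the explicit $X_0$ with companion $C_0=\frac{\kappa}{2}H$ from Proposition~\ref{prop:PII-distinguished}, and the paper proves (ii) by your \emph{equivalently} step, substituting $(\delta p,\delta q,\delta u,\delta\theta)=(0,0,0,1)$ into the linearized $p$-equation of Proposition~\ref{prop:PII-tangent} to obtain $0=-1$. The paper's proof stops at the pure shift and does not treat $\theta$-directions dressed with $(\delta p,\delta q)$; if you pursue that extension, polynomiality of the companion $C$ cannot be derived from the definition of an integrable direction and must instead be imposed, because the transverse $\theta_\alpha$-deformations of Section~\ref{sec:deformation_isom_tau} already give commuting flat deformations whose $C^\alpha$ is transcendental (Appendix~\ref{A:proof1} instead imposes polynomiality of the variations in the generators).
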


\begin{proof}
Statement (i) follows from the explicit construction of $X_0$.

For (ii), suppose there exists an independent class-preserving deformation $X_\theta$ acting only on $\theta$, such that its induced variations are $X_\theta(\theta) = 1$ and $X_\theta(p) = X_\theta(q) = X_\theta(u) = 0$. By Proposition~\ref{prop:PII-tangent}, for $X_\theta$ to admit a class-preserving companion and thus define a commuting flat deformation, its variations $(\delta p=0, \delta q=0, \delta u=0, \delta\theta=1)$ must satisfy the linearization of the Painlev\'e~II flow \eqref{eq:PII-flow}.

However, substituting these values into the linearized equation for $p$, which reads
\begin{equation}
\partial_t(\delta p)
=
-\delta p\,q
-
p\,\delta q
-
\delta q\,p
-
q\,\delta p
-
\delta\theta,
\end{equation}
we obtain $\partial_t(0) = -1$, which yields the contradiction $0 = -1$. Therefore, a pure shift in $\theta$ violates the compatibility equations, and no such independent commuting flow can exist.
\end{proof}

\medskip

We now show that the two distinguished directions identified above
exhaust all class-preserving integrable flows within the reduced
Painlev\'e~II slice.

\begin{proposition}[Completeness of class-preserving integrable directions]
\label{prop:PII-completeness}
Consider infinitesimal class-preserving flows acting on the reduced
Painlev\'e~II system, i.e. derivations $X$ on the reduced coefficient
algebra such that:

\medskip

\noindent
(i) $X$ preserves the polynomial structure of the reduced operators
$A(z,t)$ and $B(z,t)$, so that the induced variations remain within
the class of deformations parametrized by $(\delta p,\delta q,\delta u,\delta\theta)$;

\medskip

\noindent
(ii) there exists a companion $C(z,t)$ such that the auxiliary flatness
equation holds;

\medskip

\noindent
(iii) $X$ commutes with the Painlev\'e time evolution.

\medskip

Then the space of such derivations is two-dimensional and is spanned by
the vector fields $X_1$ and $X_0$ introduced above. In particular,
every class-preserving integrable direction is a linear combination of
these two flows.
\end{proposition}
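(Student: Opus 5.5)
By Proposition~\ref{prop:PII-tangent}(i), a derivation $X$ satisfying (i) is determined by its values on the generators, i.e.\ by the quadruple
\begin{equation}
v_X=\bigl(X(p),X(q),X(u),X(\theta)\bigr).
\end{equation}
The plan is to show that conditions (ii) and (iii) confine $v_X$ to the span of $v_{X_1}$ and $v_{X_0}$.

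The first step is to use condition (ii). The auxiliary flatness equation $X(A)-\partial_z C+\kappa^{-1}[A,C]=0$ has the same form as the linearized flatness equation of Proposition~\ref{prop:FT_cohomology}, with $\delta A=X(A)$. I would make the ansatz that $C$ lies in the polynomial class generated by $A$ and $B$:
\begin{equation}
C=c_H(z)H+c_X(z)X+c_Y(z)Y+c_0\,\mathrm{Id}_2 ,
\end{equation}
where the coefficients are polynomials of bounded degree. I would then equate the coefficients of $H,X,Y$ at each power of $z$.

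1. The top-degree terms, coming from $[z^2H,C]$, force the $X$- and $Y$-parts of $C$ to be $z$-independent, exactly as in the uniqueness argument of Proposition~\ref{prop:PII-tangent}(iii).
2. Matching the $H$-part at order $z$ gives $c_H=aH+\tfrac{b}{2}z$. The parameter $b$ multiplies a copy of $B$ and the constant $a$ multiplies the scaling companion $C_0$.
3. The remaining coefficient equations should then express $X(p)$, $X(q)$, $X(u)$, $X(\theta)$ linearly in terms of $(a,b)$, in the form
\begin{equation}
X=b\,X_1+\frac{2a}{\kappa}\,X_0 .
\end{equation}

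In particular, the $Y$-equation at lowest order should force $X(\theta)=b\cdot 0=0$. This is consistent with Proposition~\ref{prop:PII-rigidity}(ii), which I would cite to exclude a pure $\theta$-shift.

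The second step uses condition (iii). It guarantees that the second auxiliary equation $\partial_t C - X(B)+\kappa^{-1}[B,C]=0$ holds for these companions. This lets one treat $a,b$ as constants rather than functions of $t$, so that the solution space is a genuine two-dimensional vector space and not a module over functions of $t$. Proposition~\ref{prop:PII-distinguished} supplies the converse: $X_1$ and $X_0$ satisfy all three conditions. Finally, their values on $q$ show they are linearly independent: $X_1(q)\neq 0$ while $X_0(q)=0$. This gives dimension exactly two.

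The main obstacle is the first step. There I must justify that $C$ can be taken in the bounded-degree polynomial class rather than a transcendental solution as in Theorem~\ref{thm:solveB_cmp}. I must also carefully track the operator ordering in the Weyl algebra when matching coefficients, in particular the terms $pq+qp$ and the $\mathrm{Id}_2$ components forced by Lemma~\ref{lem:forced_identity_component}. I would first bound the degree of $C$ by comparing leading $z$-powers in $\partial_z C$ and $[A,C]$. I would then check that the identity component $c_0$ decouples because it is central with respect to the matrix commutator.
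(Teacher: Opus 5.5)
Your route is genuinely different from the paper's. The appendix proof never uses a companion. It reads (iii) as saying that $(\delta p,\delta q,\delta u,\delta\theta)$ solve the linearized Painlev\'e~II system, uses that these are polynomials in the generators, and excludes $\delta\theta\neq0$ by a weighted-degree argument. It then subtracts $\lambda X_1$ to reach $\delta q=0$ and integrates $\partial_t(\delta u/u)=0$.

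Extracting $X$ algebraically from the companion equation is a reasonable alternative and can be cleaner. With $C$ in the class of $B$, the $X$- and $Y$-coefficients at $z^2$ force $C=aH+bB$ modulo $\mathrm{Id}$, and the $Y$-coefficient at $z^0$ gives $X(\theta)=0$ with no ODE argument. But your first step has real holes.

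\textbf{(a) Polynomiality of $C$.} It cannot be derived from (ii). The first auxiliary equation is a linear ODE in $z$ and has local transcendental solutions for every $X(A)$ (cf.\ Theorem~\ref{thm:solveB_cmp}(i)). So a class condition on $C$, like the one on $\delta B$ in Proposition~\ref{prop:PII-tangent}(ii), must be part of the hypothesis. Even then, comparing leading powers does not bound $\deg C$. The $X$- and $Y$-equations only give $\deg c_X,\deg c_Y\le \deg c_H-1$, after which the top terms of the $H$-equation cancel identically. What works classically is $\partial_z\mathrm{tr}(C^2)=2\,\mathrm{tr}(C\,X(A))$. Its left side has degree $2m-1$ with leading coefficient $4m\,h_m^2$, where $m=\deg c_H$, and its right side has degree at most $m$; hence $m\le 1$.

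\textbf{(b) Explicit $t$-dependence.} Step~3 does not produce $bX_1$ if $X$ is a derivation of the coefficient algebra, i.e.\ $X(t)=0$. The combination $\partial_zB-\kappa^{-1}[A,B]$ is the total $\partial_tA$, which contains $\tfrac12H$ from the $t/2$ in $A_H$. So the $H$-coefficient at $z^0$ gives $X(p)=bX_1(p)+\tfrac b2$, while the $Y$-coefficient at $z^1$ gives $X(p)=bX_1(p)$; this forces $b=0$. You must allow $X(t)=b$, i.e.\ read $X_1$ as the total time derivative. This is what Proposition~\ref{prop:PII-distinguished} tacitly does with $C_1=B$. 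The appendix proof has the same blind spot: $\dot q$ solves the linearized $q$-equation only up to the inhomogeneous term $\tfrac12$.

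\textbf{(c) The identity component.} In $\mathfrak{gl}_2(\mathcal A)$ it does not simply decouple. By Lemma~\ref{lem:forced_identity_component}, the $\mathrm{Id}$-part of $[A,C]$ is generically nonzero and yields an extra equation. The trace identity above also fails without cyclicity of the trace.

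\textbf{(d) The use of (iii).} The coefficient matching in (ii) is pointwise over the coefficient ring, so $a,b$ are a priori arbitrary functions of $(p,q,u,\theta,t)$. Since $[X_0,X_1]=0$, condition (iii) only gives $X_1(a)=X_1(b)=0$: $a,b$ are first integrals, not constants. Now $\theta$ is a first integral, and it is central in the reduced slice (cf.\ Proposition~\ref{prop:PII-rigidity}). Hence $X=\theta X_0$ with companion $\theta C_0$ satisfies (i)--(iii) but is not in the $\mathbb{C}$-span of $X_0,X_1$. The paper's proof has the same weak point: it infers $\delta u=\mu u$ with $\mu$ a scalar from $\partial_t(\delta u/u)=0$. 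So ``two-dimensional'' must be read over the ring of first integrals, or constant coefficients must be imposed as an extra normalization. You should say this explicitly rather than claim that (iii) produces a $\mathbb{C}$-vector space. Your independence check via $X_1(q)\neq 0=X_0(q)$ is fine.
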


\begin{proof}
    See Appendix \ref{A:proof1}.
\end{proof}

We now determine the algebraic structure of the deformation sector.
The key observation is that, among the distinguished commuting flows
identified above, the only one acting nontrivially on the deformation
variables is the multiplicative scaling of $u$. This rigidity strongly
constrains the possible deformation algebras and allows one to classify
all compatible Poisson structures on the two-dimensional sector
generated by $(u,\theta)$ under the requirement that this residual
direction admits a Hamiltonian realization.

\begin{theorem}[Classification of the deformation sector in the reduced PII slice]
\label{thm:PII-classification}
Consider the reduced Painlev\'e~II system defined by
\eqref{eq:PII-A}--\eqref{eq:PII-flow}. Let the deformation sector be the
two-dimensional space generated by $(u,\theta)$ and assume that:

\medskip

\noindent
(i) the deformation variables commute with the canonical sector generated by
$(p,q)$;

\medskip

\noindent
(ii) the residual scaling direction
\begin{equation}
X_0(u)=u,
\qquad
X_0(\theta)=0
\end{equation}
is Hamiltonian with respect to a nondegenerate Poisson structure on the
deformation sector;

\medskip

\noindent
(iii) the Poisson structure is defined on a multiplicative chart $u\neq 0$.

\medskip

\noindent
Then the deformation sector is unique up to reparametrization of $\theta$:
there exists a deformation coordinate, still denoted $\theta$, such that
\begin{equation}
\{\theta,u\}=u.
\label{eq:PII-logcanonical}
\end{equation}
Equivalently, introducing $\phi=\log u$, one has
\begin{equation}
\{\theta,\phi\}=1.
\end{equation}
Upon quantization, this yields uniquely
\begin{equation}
[\theta,u]=\kappa u.
\label{eq:PII-quantum-deformation}
\end{equation}
\end{theorem}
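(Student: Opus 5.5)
The plan is to reduce the classification to a statement about symplectic forms on a two-dimensional manifold and then straighten the Hamiltonian of $X_0$ by a change of the $\theta$-coordinate.

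\textbf{Step 1: form of the bracket.} By assumption (iii) we work on the chart $\{u\neq 0\}$ with coordinates $(\theta,\phi)$, where $\phi=\log u$. In these coordinates $X_0=\partial_\phi$. A Poisson structure on a two-dimensional sector is determined by the single function $f(\theta,\phi)=\{\theta,\phi\}$. Nondegeneracy, assumption (ii), means $f\neq 0$, and the Jacobi identity is automatic in dimension two. Assumption (i) means the sector decouples from $(p,q)$. The bracket on $(p,q)$ is then the fixed Weyl structure of Proposition (Hamiltonian structure), so only $f$ remains to be determined.

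\textbf{Step 2: Hamiltonian condition.} We require a function $h$ with $\{h,\cdot\}=X_0$, that is $\{h,\theta\}=0$ and $\{h,\phi\}=1$. In two dimensions,
\begin{equation}
\{h,g\}=f\,(\partial_\theta h\,\partial_\phi g-\partial_\phi h\,\partial_\theta g).
\end{equation}
The first condition gives $f\,\partial_\phi h=0$, hence $\partial_\phi h=0$ and $h=h(\theta)$. The second condition then gives $f\,h'(\theta)=1$, so $f=1/h'(\theta)$ and in particular $f$ is independent of $\phi$.

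\textbf{Step 3: normalization.} Define the new deformation coordinate $\tilde\theta:=h(\theta)$. This is a valid local reparametrization of $\theta$ alone, since $h'=1/f\neq 0$. Then
\begin{equation}
\{\tilde\theta,\phi\}=h'(\theta)\,\{\theta,\phi\}=1,
\end{equation}
and so $\{\tilde\theta,u\}=u\{\tilde\theta,\phi\}=u$. For uniqueness: any two such coordinates $h_1,h_2$ satisfy $\{h_1-h_2,\phi\}=0$ and $\{h_1-h_2,\theta\}=0$, so $h_1-h_2$ is a Casimir. Hence they differ by a constant, which is exactly ``unique up to reparametrization.''

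The main subtlety is that the result is consistent with the flow $\dot\theta=0$ and with the rigidity result Proposition~\ref{prop:PII-rigidity}. This holds because the reparametrization depends only on $\theta$, which is conserved. It should also be noted that $X_0$ acts as $u\partial_u$, so its Hamiltonian is $\theta$ itself, in line with Proposition~\ref{prop:PII-distinguished}.

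\textbf{Step 4: quantization.} Apply the correspondence $\{\cdot,\cdot\}\mapsto\kappa^{-1}[\cdot,\cdot]$ to the canonical pair, giving $[\hat\theta,\hat\phi]=\kappa$. Exponentiating with $u=e^{\phi}$, and using that $[\hat\theta,\cdot]$ acts as the derivation $\kappa\,\partial_\phi$, yields $[\hat\theta,\hat u]=\kappa\hat u$.

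The claim of uniqueness here refers to the canonical form of this Lie-type relation. Normal-ordering ambiguities do not arise, since the relation is linear in $\hat u$: the algebra generated by $\hat\theta$ and $\hat u^{\pm1}$ with $[\hat\theta,\hat u]=\kappa\hat u$ is the quantum torus / Weyl algebra $W_\kappa(\theta,\log u)$. The sign convention relative to \eqref{eq:uthetabr} amounts to $\kappa\to-\kappa$, which matches the $W_{-\kappa}$ factor quoted in the introduction.
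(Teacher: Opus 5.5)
Your proposal is correct and follows essentially the same route as the paper. Both encode the two-dimensional bracket in a single function, impose the Hamiltonian conditions for $X_0$ on the coordinate functions to force the Hamiltonian to depend on $\theta$ alone, and then reparametrize $\theta$ by that Hamiltonian; working in $(\theta,\log u)$ from the outset is only cosmetic, and your signs ($X_0=\{h,\cdot\}$, hence $f=1/h'(\theta)$) are consistent, whereas the paper writes $J\,\partial_\theta H_0=u$ where its own formula gives $-J\,\partial_\theta H_0=u$, a harmless slip absorbed by the reparametrization.
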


\begin{proof}
Since the deformation sector is two-dimensional, any Poisson structure
on it is determined by a single function
\begin{equation}
J(u,\theta):=\{\theta,u\}.
\end{equation}
For any observable $f(u,\theta)$, the associated Hamiltonian vector
field is therefore
\begin{equation}
X_f
=
J(u,\theta)\left(
\frac{\partial f}{\partial \theta}\frac{\partial}{\partial u}
-
\frac{\partial f}{\partial u}\frac{\partial}{\partial \theta}
\right).
\label{eq:general-Hamiltonian-field}
\end{equation}

By assumption, the residual scaling direction $X_0$ is Hamiltonian, so
there exists a function $H_0(u,\theta)$ such that
\begin{equation}
X_0(f)=\{f,H_0\}
\end{equation}
for all $f$. Applying this identity to the coordinate functions $u$ and
$\theta$ gives
\begin{equation}
\{u,H_0\}=u,
\qquad
\{\theta,H_0\}=0.
\label{eq:H0-system}
\end{equation}

Using \eqref{eq:general-Hamiltonian-field}, these conditions become
\begin{equation}
J(u,\theta)\frac{\partial H_0}{\partial \theta}=u,
\qquad
J(u,\theta)\frac{\partial H_0}{\partial u}=0.
\label{eq:JH0-system}
\end{equation}

Since the Poisson structure is nondegenerate on the chart $u\neq 0$,
one has $J(u,\theta)\neq 0$. The second equation in
\eqref{eq:JH0-system} therefore implies
\begin{equation}
\frac{\partial H_0}{\partial u}=0,
\end{equation}
so that $H_0$ depends only on $\theta$. Writing $H_0=h(\theta)$ and
substituting into the first equation yields
\begin{equation}
J(u,\theta)\,h'(\theta)=u.
\end{equation}
Hence
\begin{equation}
J(u,\theta)=\frac{u}{h'(\theta)}.
\end{equation}

Define a new deformation coordinate by
\begin{equation}
\widetilde{\theta}=\int^\theta h'(\vartheta)\,d\vartheta.
\end{equation}
Then
\begin{equation}
\{\widetilde{\theta},u\}
=
h'(\theta)\,\frac{u}{h'(\theta)}
=
u.
\end{equation}
Thus, up to reparametrization of $\theta$, the Poisson structure is
uniquely of log-canonical form
\eqref{eq:PII-logcanonical}

Introducing $\phi=\log u$, one obtains
\begin{equation}
\{\theta,\phi\}
=
\frac{1}{u}\{\theta,u\}
=
1.
\end{equation}
Quantization replaces Poisson brackets by commutators, yielding \eqref{eq:PII-quantum-deformation}.
\end{proof}

\medskip

This completes the classification of the deformation sector in the reduced Painlev\'e II slice. 

\subsection{Seiberg-Witten geometry and the quantized deformation space}

In the previous section we saw that variations in the parameters $(\theta_\alpha, u_\alpha)$ correspond to deformations of the Lax pair. We would now like to understand the role of these parameters from the point of view of Seiberg-Witten geometry. 

Let us take again the Painlev\'e IV case as an example. In the gauge theory context, the spatial Lax matrix $A(z)$ becomes the Higgs field $\Phi(z)$ of the Hitchin system. The spectral curve of the matrix, $\det(y - A(z))=0$, is precisely the \textit{Seiberg-Witten} curve, and $y dz$ is the Seiberg-Witten differential $\lambda_{\textrm{SW}}$. The parameter $\theta_\infty$ dictates the residue of the connection at the irregular puncture. In Seiberg-Witten geometry, the residues of $\lambda_{SW}$ at the punctures correspond directly to the \textit{mass parameters} of the flavor symmetry group of the 4d $\mathcal{N}=2$ theory determined by the residue at $z=\infty$:
\begin{equation}
    \frac{1}{2\pi i} \oint_{z=\infty} \lambda_{SW} = \textrm{Res}_{z=\infty}(y dz).
\end{equation}
Let us see how this comes about in our case of Painlev\'e IV. Because $A(z)$ is traceless, the equation takes the form:
\begin{equation}
    y^2 + \det(A(z)) = 0 \Rightarrow y = \pm \sqrt{-\det(A(z))}.
\end{equation}
Applying the gauge transformation \eqref{eq:gaugetrf}, one can bring $A(z)$ to the formal form
\begin{equation}
    \tilde A(z) = \Lambda_1 z + \Lambda_0 + \frac{M_\infty}{z}, \quad M_\infty = \theta_\infty H.
\end{equation}
The corresponding eigenvalues are 
\begin{equation}
    y_{\pm}(z) \sim \pm \left(\lambda_1 z + \lambda_0 + \frac{\theta_\infty}{z} + \mathcal{O}(z^{-2})\right).
\end{equation}
Hence we see that when computing the residue at infinity, we get
\begin{equation}
    \textrm{Res}_{z=\infty}(y_{\pm}dz) = \pm \theta_\infty,
\end{equation}
which shows that $\theta_\infty$ is a mass parameter (see also \cite{GamayunIorgovLisovyy2012,GamayunIorgovLisovyy2013} for detailed mappings between monodromy and mass parameters for all Painlev\'e cases). If $\theta_\infty$ is a mass $m$, what is its symplectic conjugate? Note that the conjugate coordinate is the period dual to the mass:
\begin{equation}
    a_{D,mass} = \frac{\partial \mathcal{F}}{\partial m},
\end{equation}
where $\mathcal{F}$ is the prepotential of the theory. Within the gauge theory/integrable systems correspondence, the prepotential is related to the tau function via $\mathcal{F} \sim \log \tau$. One can prove that the gauge parameter $u$ of the Painlev\'e system can be expressed as \cite{JimboMiwa1981_II}
\begin{equation}
    u = \frac{\tau(\theta_\infty + 1)}{\tau(\theta_\infty)},
\end{equation}
which in the continuum limit translates to 
\begin{equation}
    \log u \sim \frac{\partial \log \tau}{\partial \theta_\infty}.
\end{equation}
Thus we conclude
\begin{equation}
    \frac{\partial \mathcal{F}}{\partial m} \sim \frac{\partial \log \tau}{\partial m} = \frac{\partial \log \tau}{\partial \theta_\infty} = \log u.
\end{equation}
Hence, $\log u$ is the symplectic dual to $\theta_\infty$. 

\subsection{Quantum Flat Connections over the algebra $\mathcal{A}$}
\label{sec:painleve_op_flatness}

In this section we want to analyze the flatness condition \eqref{eq:compatibility} in the case when $p$ and $q$ are promoted to operators. In the Heisenberg picture, they will be time dependent with their time derivative given by
\begin{equation} \label{eq:Heisenbergflow}
    \partial_t p = \frac{1}{\kappa}[\hat H,p], \quad \partial_t q = \frac{1}{\kappa}[\hat H,q],
\end{equation}
with a suitable Hamiltonian $\hat H$ such that the flatness condition 
\begin{equation} \label{eq:flatness}
    \partial_t A - \partial_z B + [A,B] = 0
\end{equation}
is satisfied. We find that, apart from the rigid Painlev\'e I system, all other Painlev\'e Lax pairs are mutually quantum flat if and only if one also quantizes the corresponding deformation space. That is, the corresponding flat connections take values in $\mathfrak{gl}_2(\mathcal{A})$ where in general
\begin{equation}
    \mathcal{A} = W_\kappa(p_i,q_i) \otimes W_{-\kappa}(\theta_\alpha,\log(u_\alpha)).
\end{equation}
We demonstrate this explicitly in the cases of Painlev\'e $\textrm{II}_{\textrm{JM}}$ and $\textrm{IV}$.

\paragraph{Quantum Toda Lax Pair.} For autonomous Hamiltonian systems like the Toda chain (or the KdV hierarchy), the Lax matrix $L(z,t)$ defines an \textit{eigenvalue problem}, not a differential equation in $z$. The auxiliary linear system is:
\begin{align}
    L(z,t) \Psi &= \lambda \Psi \\
    \partial_t \Psi &= M(z,t) \Psi.
\end{align}
Notice that there is no $\partial_z \Psi$ and the spectral parameter $z$ is strictly a constant parameter. If we demand that the spectrum remains constant as the system evolves in time $(\partial_t \lambda = 0$), we obtain
\begin{equation}
    (\partial_t L)\Psi + L(\partial_t \Psi) = \lambda (\partial_t \Psi).
\end{equation}
Substituting $\partial_t \Psi = M \Psi$, we then obtain the following zero-curvature condition:
\begin{equation} \label{eq:Todaflat}
    \partial_t L - [M,L] = 0.
\end{equation}
Concretely, the Lax pair for the periodic Toda chain is given by
\begin{align}
    L &= \left(\begin{array}{cc}
        p & e^{q/2} + z^{-1} e^{-q/2} \\
        e^{q/2} + z e^{-q/2} & -p
    \end{array}\right), \\
    M &= \left(\begin{array}{cc}
        0 & -\frac{1}{2} (e^{q/2} - z^{-1} e^{-q/2}) \\
        \frac{1}{2} (e^{q/2} -z e^{-q/2}) & 0
    \end{array}\right).
\end{align}
The Hamiltonian is given by $\hat H = p^2 + e^q + e^{-q}$. Using $[p, e^{\alpha q}] = \kappa \alpha e^{\alpha q}$, the exact quantum equations of motion are
\begin{equation}
    \partial_t p = - e^q + e^{-q}, \quad \partial_t (e^{\pm q/2}) = \frac{1}{2}[p, e^{\pm q/2}]_+,
\end{equation}
where $[A,B]_+ = AB + BA$ denotes the anticommutator. For the top-right entry $(1,2)$, the matrix commutator yields:
\begin{equation}
    [L,M]_{12} = [p,M_{12}]_+ = - \frac{1}{2} [p,e^{q/2}]_+ + \frac{1}{2}z^{-1} [p,e^{-q/2}]_+.
\end{equation}
This exactly mirrors $- \partial_t L_{12}$. For the top-left entry $(1,1)$, the cross-multiplication of the off-diagonal $q$-dependent terms trivially commutes. The result is $e^q - e^{-q}$, perfectly matching $-\partial_t p$. Therefore, we see that the condition \eqref{eq:Todaflat} is also satisfied at the quantum level.

\paragraph{Quantum Painlev\'e I.} Let us next turn to Painlev\'e I. The Lax pair is given by
\begin{equation}\label{eq:PI_Lax_paire}
    A = - p H + (q^2 + z q + z^2 + \frac{t}{2})X + 4(z-q)Y, \quad B = (q + \frac{z}{2})X + 2 Y.
\end{equation}
Promoting $p$ and $q$ to Heisenberg operators as before, we can compute
\begin{equation}
    [A,B] = (6 q^2 + t)H - (pq + q p + zp)X + 4p Y,
\end{equation}
as well as
\begin{equation}
    \partial_t A = -p_t H + (pq + qp + zp + \frac{1}{2})X - 4 q_t Y, \quad \partial_z B = \frac{1}{2} X. 
\end{equation}
Implementing the flatness condition \eqref{eq:flatness} and gathering the coefficients of $H$, $X$, and $Y$, one then finds
\begin{itemize}
    \item $H$ component: $$-p_t + 6q^2 + t = 0 \Rightarrow p_t = 6q^2 + t$$

    \item $Y$ component: $$-4 q_t + 4p = 0 \Rightarrow q_t = p$$

    \item $X$ component: $$\partial_t (q^2) + z(\partial_t q) + \partial_t (t/2) = pq + qp + z p  +\frac{1}{2}$$
\end{itemize}
The $X$ component trivially vanishes when $q_t = p$. The $H$ and $Y$ components yield the exact Hamiltonian equations for PI, returning
\begin{equation}
    q_{tt} = 6q^2 + t.
\end{equation}

\paragraph{Quantum Painlev\'e $\textrm{II}_{\textrm{JM}}$.} Now we proceed to Painlev\'e $\textrm{II}_{\textrm{JM}}$ with the Lax pair given by
\begin{equation}
    A = A_H H + A_X X + A_Y Y, \quad B = B_H H + B_X X + B_Y Y,
\end{equation}
where 
\begin{equation}
    A_H = (z^2 + p + t/2), \quad A_X = u(z-q), \quad A_Y = -2/u(pz + \theta + pq),
\end{equation}
and
\begin{equation}
    B_H = z/2, \quad B_X = u/2, \quad B_Y = -p/u.
\end{equation}
A direct computation shows 
\begin{align}
    [A,B]_H &= qp + pq + \theta, \\
    [A,B]_X &= u(zq + p + t/2), \\
    [A,B]_Y &= 2u^{-1}(p^2 - pqz - \theta z + \frac{t}{2}p), \\
    [A,B]_I &= \frac{1}{2}\left([u(z-q), - \frac{p}{u}] + [-\frac{2}{u}(pz + \theta + pq),\frac{u}{2}]\right).
\end{align}
We see that the last commutator, namely $[A,B]_I$ only vanishes, if we impose
\begin{equation}
    [\theta, u] = - \kappa u,
\end{equation}
which is exactly the quantization condition \eqref{eq:uthetabr}.
Furthermore, by taking $t$-derivatives, one obtains
\begin{equation}
    \partial_t A = (p_t + 1/2)H + (u_t (z -q) - u q_t) X - 2u^{-1} [q(pz + \theta + pq) + p_t z + p_t q + p q_t] Y.
\end{equation}
Plugging this into the flatness condition \eqref{eq:flatness}, we see that 
\begin{equation}
    \partial_t p = - p q - q p - \theta, \quad \partial_t q= p + q^2 + t/2, \quad \partial_t u = -qu
\end{equation}
guarantee flatness in the quantum case. 

\paragraph{Quantum Painlev\'e IV.}

Following the same logic as in the previous cases, one can readily check that quantum flatness holds for the Painlev\'e IV case, once we impose the quantization conditions
\begin{equation}
    [p,q] = \kappa, \quad [\theta_{\infty},u] = - \kappa u,
\end{equation}
while treating $\theta_0$ as a commuting variable. Quantum flatness also uniquely fixes the flow equations for the coordinate and the conjugate momentum variables:
\begin{align}
    q_t &= pq + qp - q^2 - tq + 2\theta_0, \\
    p_t &= -p^2 + pq + qp + tq - 2 \theta_\infty.
\end{align}
These are induced via the quantum Hamiltonian
\begin{equation}
    H_{IV} = \frac{1}{2} (p^2 q + qp^2) - \frac{1}{2} (p q^2 + q^2 p - \frac{t}{2}(pq + qp) + 2\theta_0 p + 2\theta_\infty q,
\end{equation}
upon imposing the Heisenberg equations of motion \eqref{eq:Heisenbergflow}.

\subsection{Gauge transformations and the energy-momentum tensor}
\label{sec:Tgauge}

Here we want to see how the BPZ equation arises from the quantization of the flat connections discussed before. To this end note that our classical linear system was given via
\begin{equation} \label{eq:psiA}
    \partial_z \Psi = A \Psi, 
\end{equation}
where $\Psi = (\psi_1,\psi_2)^T$ and $A$ is the first Lax matrix for the corresponding Painlev\'e system. Evaluating the first row, one gets
\begin{equation}
    \partial_z \psi_1 = A_{11} \psi_1 + A_{12}\psi_2.
\end{equation}
This can be rewritten as a second-order equation for $\psi_1$ as follows. As a first step, we solve for $\psi_2$, giving
\begin{equation}
    \psi_2 = \frac{1}{A_{12}}(\partial_z \psi_1 - A_{11} \psi_1).
\end{equation}
Taking the $z$-derivative of this equation and substituting back into \eqref{eq:psiA}, one obtains a second-order ODE for $\psi_1$ which takes the form
\begin{equation}
\label{eq:d2psi}
\psi_1^{''} - \frac{A_{12}'}{A_{12}}\psi_1' + \bigg(\det A + \frac{A_{12}'A_{11}}{A_{12}} - A_{11}' \bigg) \psi_1 = 0
\end{equation}

Next, we define a new basis vector $v$ such that 
\begin{equation}
    \psi_1 = \sqrt{A_{12}}v.
\end{equation}
Plugging this into \eqref{eq:d2psi}, the first derivative term cancels and we are left with an equation of the form 
\begin{equation}
\label{eq:EM_tensor}
    \partial_z^2 v + T(z) v = 0.
\end{equation}
where 
\begin{equation}
\label{eq:EMtensor}
T(z) = \det A + \frac{A_{12}''}{2A_{12}} - \frac{3(A_{12}')^2}{4A_{12}^2} + \frac{A_{12}'A_{11}}{A_{12}} - A_{11}'
\end{equation}

This can be rewritten as a gauge transformation of the original equation. To this end, define a new gauge-transformed vector as $\Psi^S = (v, (\partial_z v)^T)$ with gauge transformation matrix $S$ ($\Psi^S = S \Psi$) given by
\begin{equation}
    S(z,t) = \left(\begin{array}{cc}
        \frac{1}{\sqrt{A_{12}}} & 0 \\
        \frac{1}{\sqrt{A_{12}}}(A_{11} - \frac{\partial_z A_{12}}{2A_{12}}) & \sqrt{A_{12}}
    \end{array}\right)~.
\end{equation}
By applying the standard gauge transformation formula
\begin{equation}
    A^S = S A S^{-1} + (\partial_z S) S^{-1},
\end{equation}
the connection matrix is forced exactly into the specific form
\begin{equation}
    A^S = \left(\begin{array}{cc}
        0 & 1 \\
        T(z) & 0 
    \end{array}\right)~.
\end{equation}
The corresponding equation $\partial_z \Psi^S = A^S \Psi^S$ is known to describe an \textit{Oper}. The explicit form of $T(z)$ is generated by the transformation is given in \eqref{eq:EMtensor}.

\paragraph{Example : Painlev\'e II}
As an example, we explicitly compute $T(z)$ for Painlev\'e II. Substituting
\begin{equation}
    A_{11} = z^2 + p + t/2, \quad A_{12} = u(z - q),
\end{equation}
into the above, we get 
\begin{equation}
    T(z) = \det(A) + 2z + \frac{z^2 + p +t/2}{z-q} + \frac{3}{4(z-q)^2}~.
\end{equation}
Note that the residue of the pole $z \rightarrow q$ is $q^2 + p + t/2$ and exactly matches the conjugate variable given in equation \eqref{eq:PIIphasespace}. 

\paragraph{Example : Painlev\'e I}
Now we turn to Painlev\'e I system again. Note that
choosing $A_{21}\neq 0$, one can derive a similar second order equation for $\psi_2$. Indeed for Painlev\'e I equation the $A_{21} \propto (z-q)$, where this is the suitable choice. 

\begin{equation}
\psi_2'' - \frac{A_{21}'}{A_{21}} \psi_2' + \bigg(\det A + \frac{A_{21}'A_{22}}{A_{21}} - A_{22}'\bigg) \psi_2 = 0
\end{equation}
After rescaling $\psi_2 = \sqrt{A_{21}} v$, one obtains a Schr\"odinger type equation for $v$ as 
\begin{equation}
\partial^2_z v + T(z) v = 0,
\end{equation}
where 
\begin{equation}
T(z) = \frac{A_{21}''}{2A_{21}} - \frac{3(A_{21}')^2}{4 A_{21}^2}
+ \det A + \frac{A'_{21}A_{22}}{A_{21}} - A_{22}'
\end{equation}
For Painlev\'e I, one has 
\begin{equation}
A_{21} = 4(z-q), \qquad A_{22} = p
\end{equation}
from which one can compute 
\begin{equation}
T(z) = \det A + \frac{p}{z-q} - \frac34 \frac{1}{(z-q)^2}
\end{equation}
which matches exactly with equation 3.5 of \cite{vanSpaendonck:2022kit}. Note again that the residue at the pole $z\rightarrow q$ matches the conjugate phase space variable of the corresponding Seiberg-Witten curve.

We want to interpret this result as the energy momentum tensor corresponding to the insertion of a degenerate operator of type $(1,2)$ at position $z=q$. Indeed, in Liouville theory, degenerate operators of type $(m,n)$ have conformal weight
\begin{equation}
    \Delta_{m,n} = \frac{Q^2}{4} - \frac{1}{4}\left(m b + \frac{n}{b}\right)^2,
\end{equation}
where $Q = \left(b + \frac{1}{b}\right)$. Let's look at the specific degenerate field $\Phi_{1,2}$:
\begin{equation}
    \Delta_{1,2} = \frac{(b + 1/b)^2}{4} - \frac{(b + 2/b)^2}{4} = - \frac{1}{2} - \frac{3}{4b^2}.
\end{equation}
The classical limit $b \rightarrow 0$ thus gives
\begin{equation}
    \Delta_{1,2} \sim -\frac{3}{4b^2}.
\end{equation}
Upon quantization, $[\hat{p},\hat{q}]=\kappa$, the oper equation also gets quantized giving rise to the CFT null-vector equation\footnote{The null-state condition is $(b^{-2} L_{-2} + (L_{-1})^2)\Phi_{1,2}(z) = 0$ and one has $L_{-1}^2 = \partial_z^2$}
\begin{equation}
    \left( \partial_z^2 + \widehat{T}_{cft}(z)\right)\langle \Phi_{1,2}(z) \cdots \rangle = 0.
\end{equation} 

Identifying $\kappa=-b^2$, one can recover $\kappa$ in various terms in the energy-momentum tensors. Namely, we see that $\det (A)$ gets rescaled by $\kappa^{-2}$, compare with \eqref{eq:BPZfull}.

In order to fully recover the quantum corrections to the conformal weight of the degenerate operator in the BPZ equation \eqref{eq:BPZfull}, one must perform the quantum version of the gauge transformation \eqref{eq:gaugetrf}. This quantum gauge transformation is best understood in a doubly quantized setting where both commutators $[y,z] = \epsilon_1$ and $[p,q] = \epsilon_2$ are turned on. In this framework, one can show that for suitable choices of the deformation parameters $\epsilon_1$ and $\epsilon_2$, the exact conformal weight of the degenerate operator is recovered. For more general choices of these parameters, one obtains novel generalizations of the ordinary BPZ equation, a topic which we will elaborate on in forthcoming work.

\acknowledgments{We would like to thank Mohamad Alameddine, Wei Cui, Anton Dzhamay, Alexander Hock, Nicolai Reshetikhin, Mauricio Romo and Fengjun Xu for valuable discussions. SB also acknowledges CERN and University of Geneva for hospitality at the final stages of the work. BH would like to thank IHES in Bures-sur-Yvette  and MPIM Bonn, where part of this work was completed, for hospitality. AL would like to express his gratitude to N. Reshetikhin and S.-T. Yau for organizing the conference \textit{Peaks, Depths, and Currents: Mathematics in Asia}, where this collaboration was initiated.}

\appendix

\section{Two distinct flat operator pairs with identical Casimir projection}
\label{app:same_casimir_pairs}

This appendix investigates a specific algebraic subtlety within the Weyl algebra by presenting two distinct flat operator pairs that share the identical Casimir projection. It explores transported commutant deformations to demonstrate how non-unique operator pairs can emerge despite satisfying identical trace-like invariants in a noncommutative setting.

Throughout this section we work in the Weyl algebra
\begin{equation}
W_\kappa=\mathbb C\langle p,q\rangle/(pq-qp-\kappa),
\end{equation}
and in $\mathcal A=\mathrm{Mat}_2(W_\kappa)$.
For a traceless matrix $A=aH+bX+cY$ one has
\begin{equation}
C(A)=\frac12\mathrm{tr}(A^2)=\frac12 a^2+bc,
\label{eq:Casimir_formula_app}
\end{equation}
where $bc$ denotes the ordered product in $W_\kappa$.

\begin{proposition}[Transported commutant deformation]
\label{prop:commutant_deformation_app}
Let $U(z,t)$ denote the $z$–parallel transport of $D_z$,
and let $C_0\in\mathrm{Mat}_2(\mathbb C)$ be noncentral.
Define
\begin{equation}
C(z,t)=U(z,t)\,C_0\,U(z,t)^{-1}.
\end{equation}
If $(A,B)$ satisfies \eqref{eq:compatibility},
then $(A,B+C)$ also satisfies \eqref{eq:compatibility}.
Both pairs have identical quadratic class $C(A)$,
but distinct operator $t$–holonomy whenever $C_0$ is noncentral.
\end{proposition}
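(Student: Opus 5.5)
The plan is to verify the three claims of the statement separately, using the operator form of the flatness condition, \eqref{eq:flatness_short}, equivalently $[D_z,D_t]=0$ with $D_z,D_t$ as in \eqref{flat paire}. This is the setting in which the parallel transport \eqref{eq:parallel_transport} is defined.

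\textbf{Step 1: the transported matrix is covariantly constant.} From $\kappa\partial_zU=AU$ we get
\[
\kappa\partial_z(U^{-1})=-U^{-1}A .
\]
Since $C_0$ is $z$-independent, this gives
\[
\kappa\partial_z C=AUC_0U^{-1}-UC_0U^{-1}A=[A,C].
\]
Hence $[D_z,C]=\kappa\partial_zC-[A,C]=0$, which is the observation already recorded after Proposition~\ref{prop:FT_cohomology}. Only associativity of $\mathrm{Mat}_2(W_\kappa)$ and invertibility of $U$ are used. Invertibility holds locally, as an ordered exponential starting at $\mathbf 1$.

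\textbf{Step 2: the shifted pair is still flat.} Note that $D_t^{B+C}=D_t-C$, so
\[
[D_z,D_t-C]=[D_z,D_t]-[D_z,C]=0 .
\]
This is exactly Theorem~\ref{thm:solveB_cmp}(ii). Here I must be careful about the $\kappa$-normalisation: \eqref{eq:compatibility} is written without $1/\kappa$, whereas \eqref{eq:flatness_short} carries $1/\kappa$. I will state the argument for the normalisation in which $U$ is the transport of $D_z$, and the other convention follows by rescaling $B$.

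\textbf{Step 3: the quadratic class is unchanged.} Since $A$ itself is untouched, $C(A)=\tfrac12\mathrm{tr}(A^2)$ from \eqref{eq:Casimir_formula_app} is literally the same. This part is trivial.

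\textbf{Step 4: the $t$-holonomies differ.} This is the main obstacle, because "distinct operator $t$-holonomy" needs a precise meaning. I would define the $t$-transport $V$ by $\kappa\partial_tV=BV$ and compare it with $V'$, defined by $\kappa\partial_tV'=(B+C)V'$, at a fixed $z$.

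Set $W=V^{-1}V'$. Then
\[
\kappa\partial_tW=V^{-1}CV\,W .
\]
So $W\equiv\mathbf 1$ along the whole path would force $V^{-1}CV=0$. That in turn gives $C=0$, hence $C_0=0$, contradicting non-centrality.

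More sharply, to first order in $t-t_0$ we have $V'=V+\kappa^{-1}(t-t_0)C+\dots$ with $C\neq0$. If $C_0$ is merely a nonzero scalar, $C=C_0$ is central and the change is an overall scalar factor, which is trivial projectively. For noncentral $C_0$, the conjugate $UC_0U^{-1}$ is noncentral, so the correction is not a scalar. Therefore $V'$ differs from $V$ even modulo central rescaling, which is the sense in which the holonomies are distinct.

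To make this rigorous I would work at a base point $z_0$, where $U=\mathbf 1$ and $C=C_0$. There, noncentrality is manifest and the comparison of the generators $B$ and $B+C_0$ of the $t$-transport is explicit.
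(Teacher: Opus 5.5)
Your proposal is correct and follows the same route as the paper. You differentiate $UC_0U^{-1}$ to get $\kappa\partial_z C=[A,C]$, so the curvature is unchanged; the quadratic class depends only on $A$; and a noncentral $C$ changes the ordered exponential of $B$. Your Step 4, which compares the two $t$-transports through $W=V^{-1}V'$ at the base point $z_0$ where $C=C_0$, makes precise what the paper asserts in one line: the transports differ as functions of $t$, even modulo scalars.
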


\begin{proof}
Differentiation gives $\partial_zC=\kappa^{-1}[A,C]$,
hence adding $C$ to $B$ does not modify the curvature.
The quadratic class depends only on $A$.
Noncentrality of $C_0$ implies that $C(z,t)$
is noncentral and therefore changes the ordered exponential
of $B$, producing distinct operator transport.
\end{proof}

\begin{proposition}[Distinct $A$ with identical quadratic class]
\label{prop:diffA_sameCasimir_app}
Define constant matrices
\begin{equation}
A_{(1)}=qX+pY,
\qquad
A_{(2)}=(qp)X+1\cdot Y.
\end{equation}
Then
\begin{equation}
C(A_{(1)})=C(A_{(2)})=qp.
\end{equation}
Taking $B_{(1)}=B_{(2)}=0$
gives two flat pairs with identical quadratic class
but distinct $z$–holonomy.
\end{proposition}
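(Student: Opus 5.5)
Since both pairs have $B=0$ and constant coefficients, the argument splits into three independent checks: flatness, equality of the quadratic class, and distinctness of the $z$--holonomy. None requires earlier results beyond the definitions \eqref{eq:flatness_short}, \eqref{eq:parallel_transport} and \eqref{eq:Casimir_formula_app}.

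\emph{Flatness.} For $i=1,2$ the matrices $A_{(i)}$ depend neither on $z$ nor on $t$, so $\partial_tA_{(i)}=0$. With $B_{(i)}=0$ we also have $\partial_zB_{(i)}=0$ and $[A_{(i)},B_{(i)}]=0$. Hence \eqref{eq:flatness_short}, equivalently \eqref{eq:compatibility}, holds identically, and $[D_z,D_t]=[\kappa\partial_z-A_{(i)},\kappa\partial_t]=0$.

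\emph{Quadratic class.} I would use the ordered formula \eqref{eq:Casimir_formula_app}, $C(A)=\tfrac12 a^2+bc$, as the definition of the quadratic class. This choice matters. A naive symmetric trace would give $\tfrac12\mathrm{tr}(A_{(1)}^2)=\tfrac12(qp+pq)$, because $\mathrm{tr}(XY)=\mathrm{tr}(YX)=1$. With the ordered formula, $A_{(1)}$ has $a=0$, $b=q$, $c=p$, so $C(A_{(1)})=qp$. Likewise $A_{(2)}$ has $a=0$, $b=qp$, $c=1$, so $C(A_{(2)})=qp\cdot 1=qp$. The two agree.

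\emph{Distinct holonomy.} This is the only step needing care. Since $A_{(i)}$ is constant, the parallel transport \eqref{eq:parallel_transport} is
\[
U_{(i)}(z)=\exp\!\bigl((z-z_0)A_{(i)}/\kappa\bigr)=\sum_{n\ge0}\frac{(z-z_0)^n}{n!\,\kappa^n}A_{(i)}^n .
\]
I would treat this as a formal power series in $(z-z_0)$ with coefficients in $\mathrm{Mat}_2(W_\kappa)$. This sidesteps any convergence question in the Weyl algebra, which I expect to be the main technical obstacle if one insists on analytic holonomy. Formally, $U_{(i)}$ is the unique solution with the stated initial condition, as one sees by comparing coefficients order by order. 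Its linear coefficient is $A_{(i)}/\kappa$. Therefore $U_{(1)}=U_{(2)}$ would force $A_{(1)}=A_{(2)}$, that is, $q=qp$ and $p=1$ in $W_\kappa$. By the PBW basis $\{q^mp^n\}$ of $W_\kappa$, the elements $q$, $qp$, $p$, $1$ are linearly independent, so both equalities fail and the $z$--holonomies differ.

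If an analytic statement is preferred, one can instead let the $U_{(i)}$ act on a representation, for instance the Fock/polynomial representation of $W_\kappa$, restricted to a graded finite-dimensional piece where the exponential converges. The same first-order comparison then shows that the transports differ.
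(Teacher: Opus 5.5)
Your proof is correct and is essentially the paper's argument. The ordered formula \eqref{eq:Casimir_formula_app} gives $qp$ for both matrices, constancy with $B=0$ gives flatness, and $A_{(1)}\neq A_{(2)}$ makes the transports $\exp\bigl((z-z_0)A_{(i)}/\kappa\bigr)$ differ; your comparison of linear coefficients, together with PBW independence of $q,qp,p,1$, just makes that last step precise.

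Your caveat is also well taken. With the literal $\tfrac12\mathrm{tr}(A^2)$ one gets $qp+\tfrac{\kappa}{2}$ for $A_{(1)}$ but $qp$ for $A_{(2)}$. So the equality of quadratic classes genuinely rests on the ordered convention, which is exactly the one the paper's proof uses.
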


\begin{proof}
Formula \eqref{eq:Casimir_formula_app}
gives $C(A_{(1)})=qp$
and $C(A_{(2)})=(qp)\cdot1=qp$.
Flatness follows from $t$–independence.
Since $A_{(1)} eq A_{(2)}$,
the exponentials $\exp(\ell A_{(i)}/\kappa)$ differ,
hence the holonomies differ.
\end{proof}

\begin{proposition}[Distinct flat pairs with identical quadratic class]
\label{prop:diffA_diffB_sameC_flat}
Let
\begin{equation}
A_{(1)}=qX+pY,
\qquad
A_{(2)}=(qp)X+1\cdot Y,
\end{equation}
and define
\begin{equation}
B_{(1)}=A_{(1)},
\qquad
B_{(2)}=A_{(2)}.
\end{equation}
Then
\begin{equation}
[D_z^{(i)},D_t^{(i)}]=0,
\qquad i=1,2,
\end{equation}
and
\begin{equation}
C(A_{(1)})=C(A_{(2)})=qp,
\end{equation}
while $(A_{(1)},B_{(1)}) eq(A_{(2)},B_{(2)})$.
\end{proposition}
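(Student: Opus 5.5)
The statement has three parts: flatness of each pair, equality of the quadratic classes, and distinctness of the pairs. I will verify them in that order, working in $\mathrm{Mat}_2(W_\kappa)$ with the curvature convention \eqref{eq:flatness_short}, i.e. $[D_z,D_t]=0$ with $D_z=\kappa\partial_z-A$ and $D_t=\kappa\partial_t-B$.

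\emph{Flatness.} Both $A_{(i)}$ and $B_{(i)}$ are constant: they are independent of $z$ and $t$. I take $p$ and $q$ here to be fixed elements of $W_\kappa$, not Heisenberg-evolving operators, so $\partial_tA_{(i)}=0$ and $\partial_zB_{(i)}=0$. Expanding the commutator gives
\begin{equation}
[D_z^{(i)},D_t^{(i)}]=-\kappa\,\partial_zB_{(i)}+\kappa\,\partial_tA_{(i)}+[A_{(i)},B_{(i)}].
\end{equation}
The two derivative terms vanish by constancy, and the last term vanishes because $B_{(i)}=A_{(i)}$, so $[A_{(i)},A_{(i)}]=0$ in the associative algebra $\mathrm{Mat}_2(W_\kappa)$. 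Equivalently, \eqref{eq:flatness_short} holds term by term. This argument needs only associativity and no commutation relations in $W_\kappa$, in line with the remark after Lemma~\ref{lem:forced_identity_component} that the flatness derivation is valid for arbitrary associative coefficient algebras.

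\emph{Quadratic class.} Here I reuse Proposition~\ref{prop:diffA_sameCasimir_app}. For $A=aH+bX+cY$ the ordered formula \eqref{eq:Casimir_formula_app} gives $C(A)=\tfrac12a^2+bc$. With $a=0$ this yields $C(A_{(1)})=qp$ and $C(A_{(2)})=(qp)\cdot 1=qp$. The one point needing care is that \eqref{eq:Casimir_formula_app} must be read with the ordering $bc$, exactly as the paper stipulates. If one instead computed $\tfrac12\mathrm{tr}(A_{(1)}^2)$ literally in $\mathrm{Mat}_2(W_\kappa)$, one would get $\tfrac12(qp+pq)=qp+\tfrac{\kappa}{2}$. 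I will therefore state explicitly that the projection $C$ is defined by the ordered formula \eqref{eq:Casimir_formula_app}, which is how the appendix defines it.

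\emph{Distinctness.} It suffices to compare the $X$-coefficients: $q\neq qp$ in $W_\kappa$. The cleanest justification is the PBW basis of normally ordered monomials $q^ip^j$, which is a linear basis of $W_\kappa$; in it, $q$ and $qp$ are distinct basis elements. Hence $A_{(1)}\neq A_{(2)}$, and so $(A_{(1)},B_{(1)})\neq(A_{(2)},B_{(2)})$.

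The main obstacle is only this interpretive point about which quadratic invariant is meant. The rest is immediate from constancy and from $[A,A]=0$.
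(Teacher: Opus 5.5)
Your proposal is correct and follows the paper's proof: flatness from constancy together with $[A_{(i)},A_{(i)}]=0$, equality of quadratic classes from the ordered formula \eqref{eq:Casimir_formula_app}, and distinctness by comparing coefficients (the paper compares the $Y$-coefficients $p\neq 1$, whereas you compare the $X$-coefficients $q\neq qp$). Your caveat about ordering is well placed. The literal $\tfrac12\mathrm{tr}(A^2)$ in $\mathrm{Mat}_2(W_\kappa)$ gives $qp+\tfrac{\kappa}{2}$ for $A_{(1)}$ but exactly $qp$ for $A_{(2)}$, so the claimed equality holds only when $C$ is defined by the ordered formula.
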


\begin{proof}
Flatness follows from constancy and
$[A_{(i)},A_{(i)}]=0$.
Equality of quadratic class follows from
\eqref{eq:Casimir_formula_app}.
Non-equality is immediate from the distinct $Y$–coefficients.
\end{proof}

Together these constructions show that the quadratic projection
\begin{equation}
A\longmapsto \frac12\mathrm{tr}(A^2)
\end{equation}
forgets essential non-abelian information.
It does not determine the full operator-valued flat connection,
even up to exact equality of quadratic data.

\section{Proof of Proposition \ref{prop:PII-completeness}}
\label{A:proof1}

This appendix details the proof of Proposition \ref{prop:PII-completeness}, formally establishing the structure of class-preserving infinitesimal deformations in the Painlev\'e II system. 

\begin{proof}
By Proposition~\ref{prop:PII-tangent}, every class-preserving infinitesimal
deformation is uniquely determined by a quadruple
$(\delta p,\delta q,\delta u,\delta\theta)$. The requirement that the flow $X$ commutes with the time evolution $\partial_t$ means that the variations must satisfy the linearized Painlev\'e~II system \eqref{eq:PII-flow}:
\begin{align}
    \partial_t (\delta q) &= \delta p + 2q \delta q, \label{eq:lin-q} \\
    \partial_t (\delta p) &= -2q \delta p - 2p \delta q - \delta\theta, \label{eq:lin-p} \\
    \partial_t (\delta u) &= -u \delta q - q \delta u, \label{eq:lin-u} \\
    \partial_t (\delta\theta) &= 0. \label{eq:lin-theta}
\end{align}
Furthermore, since $X$ is a derivation on the reduced coefficient algebra, the components $(\delta p, \delta q, \delta u, \delta\theta)$ must be expressible as polynomials in the generators $(p, q, u, \theta)$.

Equation \eqref{eq:lin-theta} immediately implies that $\delta\theta = c$ for some constant $c$. 
We first show that $c = 0$. Suppose $c \neq 0$. Differentiating \eqref{eq:lin-q} with respect to $t$ and using \eqref{eq:lin-p} and the background flow \eqref{eq:PII-flow}, we obtain a second-order linear differential equation for $\delta q$:
\begin{equation}
    \partial_t^2 (\delta q) = \partial_t (\delta p) + 2(\partial_t q)\delta q + 2q \partial_t (\delta q).
\end{equation}
Substituting $\partial_t (\delta p)$ and $\partial_t q$, this reduces to
\begin{equation}
    \partial_t^2 (\delta q) = 2(q^2 - p + t)\delta q - c.
\label{eq:PII-variation-theta}
\end{equation}
We now show that \eqref{eq:PII-variation-theta} admits no polynomial solutions in $(p,q)$ unless $c=0$. First, note that for $c=0$, the equation is the linearization of the Painlev\'e II equation itself, which admits the known polynomial solution $\delta q = \partial_t q = p + q^2 + \frac{t}{2}$, corresponding to the time-translation flow $X_1$.

To isolate the effect of $c$, let $\delta q$ be an arbitrary polynomial solution to \eqref{eq:PII-variation-theta}. We can subtract a suitable multiple of the time translation flow, $\lambda (p + q^2 + \frac{t}{2})$, to strictly lower the degree of the remaining polynomial $P(p,q,t) = \delta q - \lambda (p + q^2 + \frac{t}{2})$, such that $P$ no longer contains the maximal invariant combination of $p+q^2$. Since the homogeneous equation is satisfied by $p+q^2+t/2$, the remainder $P$ must still satisfy the non-homogeneous equation:
\begin{equation}
    \partial_t^2 P = 2(q^2 - p + t)P - c.
\end{equation}

We assign degree 2 to $p$, degree 1 to $q$, and treat $t$ and $c$ as constants of degree 0. If $P$ is identically zero, the equation trivially reduces to $0 = -c$, forcing $c=0$.

Assume for contradiction that $P$ is not identically zero, and let $d \ge 0$ be its degree in $(p,q)$. Because the leading-order action of $\partial_t$ is given by the vector field $v_0 = p \partial_q - 2pq \partial_p$, which annihilates $p+q^2$, any polynomial whose leading homogeneous component is not proportional to powers of $(p+q^2)$ will have its degree correctly incremented by 1 under $\partial_t$. Thus, $\partial_t^2 P$ has degree $d + 2$. On the right-hand side, the leading term is $2(q^2 - p)P$, which also has degree exactly $d + 2$ (the constant $-c$ has degree 0, which is strictly less than $d+2$). Thus, the leading degree $d+2$ terms on both sides must match exactly.

We require the highest degree homogeneous component $P_d$ of $P$ to satisfy:
\begin{equation}
    v_0^2(P_d) = 2(q^2 - p) P_d.
\end{equation}
To solve this, notice that $I = p + q^2$ is an invariant of $v_0$, since $v_0(p+q^2) = -2pq + 2qp = 0$.
We can therefore change variables from $(p,q)$ to $(I,q)$ and treat $I$ as a constant parameter. The vector field becomes $v_0 = (I - q^2)\partial_q$.
The equation for $P_d(q,I)$ is
\begin{equation}
    (I - q^2) \partial_q \left( (I - q^2) \partial_q P_d \right) = 2(2q^2 - I) P_d.
\end{equation}
Since $P_d$ is a polynomial in $(p,q)$, it must also be a polynomial in $q$ with coefficients depending on $I$. Let $n \ge 0$ be the degree of $P_d$ as a polynomial in $q$, so its leading term is $C(I) q^n$ for some non-zero polynomial $C(I)$.
Applying $v_0 = (I - q^2)\partial_q$ to the leading term $C(I)q^n$, the highest power of $q$ comes from $-q^2 \partial_q$, giving $-n C(I) q^{n+1}$.
Applying $v_0$ a second time yields a leading term of $- (n+1) (-n C(I)) q^{n+2} = n(n+1) C(I) q^{n+2}$.
On the right-hand side, the leading term of $2(2q^2 - I) P_d$ is $4 C(I) q^{n+2}$.
Matching the leading coefficients of $q^{n+2}$, we must have
\begin{equation}
    n(n+1) C(I) = 4 C(I) \implies n(n+1) = 4.
\end{equation}
However, the equation $n(n+1) = 4$ has no integer solutions (since for $n=1$, $n(n+1)=2$, and for $n=2$, $n(n+1)=6$).
This contradiction implies that our assumption that $P$ is not identically zero is false.

Therefore, the entire polynomial remainder $P$ must be identically zero. Returning to the full non-homogeneous equation $\partial_t^2 P = 2(q^2 - p + t)P - c$, substituting $P = 0$ yields $0 = -c$, so $c = 0$. Therefore, to preserve the polynomial structure of the algebra, we must strictly have $\delta\theta = c = 0$.

Having shown that $c=0$, and that $P=0$ is the unique solution after subtracting multiples of $X_1$, we see that any valid deformation $\delta q$ must just be a multiple of the time translation flow $\partial_t q$. We now classify the remaining polynomial solutions to the homogeneous linearized system. We use the known flows $X_1$ (time translation) and $X_0$ (scaling). Since $X_1$ corresponds to the time derivative itself, we can subtract a suitable multiple $\lambda X_1$ from our deformation $X$ such that the new derivation $Y = X - \lambda X_1$ has variations satisfying $\delta q|_{Y} = 0$. 

Let us verify that this forces the rest of the variations to align with $X_0$. If $\delta q = 0$, equation \eqref{eq:lin-q} gives:
\begin{equation}
    \delta p = \partial_t (\delta q) - 2q \delta q = 0.
\end{equation}
With $\delta q = 0$ and $\delta p = 0$, equation \eqref{eq:lin-p} is trivially satisfied ($0 = 0$).
Finally, substituting $\delta q = 0$ into equation \eqref{eq:lin-u} leaves:
\begin{equation}
    \partial_t (\delta u) = -q \delta u.
\end{equation}
Since the background variable $u$ satisfies the exact same equation ($\dot{u} = -qu$), we have
\begin{equation}
    \partial_t \left( \frac{\delta u}{u} \right) = \frac{\partial_t (\delta u) u - \delta u \partial_t u}{u^2} = \frac{-q \delta u \cdot u - \delta u (-qu)}{u^2} = 0.
\end{equation}
This implies that $\delta u = \mu u$ for some scalar constant $\mu$.

This resulting variation $(\delta p = 0, \delta q = 0, \delta u = \mu u, \delta\theta = 0)$ is exactly the scaling flow $\mu X_0$. Therefore, the modified flow is $Y = \mu X_0$, which means the original arbitrary class-preserving flow $X$ can be written as:
\begin{equation}
    X = \lambda X_1 + \mu X_0.
\end{equation}
Thus, the space of admissible class-preserving integrable directions is exactly two-dimensional and spanned by $X_1$ and $X_0$.
\end{proof}

\section{Class-preserving deformations in the reduced Painlev\'e IV slice}
\label{sec:PIV-class-preserving}

\paragraph{Convention.}
In this appendix we work in the classical (commutative) setting,
i.e. the coefficients of the Lax pair are treated as commuting
variables. The curvature is computed using the standard
$\mathfrak{sl}_2$ component formulas. Quantum corrections arising
from noncommutativity and the $\mathrm{gl}_2(W_\kappa)$ framework are
treated in Section~4 and are not needed here.

\medskip

We construct a class-preserving Lax representation of the Painlev\'e~IV
system by determining a pair $(A_{IV},B_{IV})$ with fixed singularity
structure whose flatness reproduces the evolution equations.
Throughout, the variables $(p,q)$ are commuting, and the evolution is
governed by
\begin{align}
\partial_t q
&=
pq+qp-q^2-tq+2\theta_0,
\label{eq:PIV-flow-q-quantum}
\\
\partial_t p
&=
-p^2+pq+qp+tq-2\theta_\infty.
\label{eq:PIV-flow-p-quantum}
\end{align}

We consider connections with a simple pole at $z=0$ and an irregular
singularity of degree one at infinity. The most general ansatz with this
singularity structure is
\begin{equation}
A=
\left(z+t+\frac{a_0}{z}\right)H
+
u\left(1+\frac{a_1}{z}\right)X
+
\frac{1}{u}
\left(
a_2+\frac{a_3}{z}
\right)Y,
\label{eq:PIV-A-ansatz}
\end{equation}
where the coefficients $a_i$ depend on $(p,q,\theta_0,\theta_\infty)$.

We require a companion $B_{IV}(z,t)$ in the same singularity class such that
\begin{equation}
\partial_t A_{IV}
-
\partial_z B_{IV}
+
[A_{IV},B_{IV}]
=
0
\label{eq:PIV-flatness}
\end{equation}
is equivalent to
\eqref{eq:PIV-flow-q-quantum}--\eqref{eq:PIV-flow-p-quantum}.
Solving the resulting algebraic constraints on the powers of $z$
determines the coefficients uniquely:
\begin{equation}
a_0=\theta_0-pq,
\qquad
a_1=-q,
\qquad
a_2=2pq-2\theta_0-2\theta_\infty,
\qquad
a_3=2p(pq-2\theta_0).
\label{eq:PIV-a-coefficients}
\end{equation}

Substitution into \eqref{eq:PIV-A-ansatz} gives
\begin{align}
A_{IV}(z,t)
&=
\left(z+t+\frac{\theta_0-pq}{z}\right)H
+
u\left(1-\frac{q}{z}\right)X
\nonumber\\
&\quad
+
\frac{1}{u}
\left(
2pq-2\theta_0-2\theta_\infty
+
\frac{2p}{z}(pq-2\theta_0)
\right)Y .
\label{eq:PIV-A-HXY}
\end{align}

\subsection{Class-preserving companion}

We define the companion operator $B_{IV}(z,t)$ by
\begin{equation}
B_{IV}(z,t)
=
B_H H + B_X X + B_Y Y,
\label{eq:PIV-B-full}
\end{equation}
where
\begin{align}
B_H
&=
z+t+\frac{q}{2}
+
\frac{1}{2}\partial_t(\log u),
\\
B_X
&=
u,
\\
B_Y
&=
\frac{2}{u}(pq-\theta_0-\theta_\infty).
\end{align}

\begin{proposition}
\label{prop:PIV-companion}
The flatness equation
\begin{equation}
\partial_t A_{IV}
-
\partial_z B_{IV}
+
[A_{IV},B_{IV}]
=
0
\label{eq:PIV-flatness}
\end{equation}
is equivalent to the Painlev\'e~IV system
\eqref{eq:PIV-flow-q-quantum}--\eqref{eq:PIV-flow-p-quantum}.
\end{proposition}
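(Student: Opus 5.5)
The plan is a direct coefficient comparison. As the Convention paragraph of this appendix allows, I treat $p,q,u,\theta_0,\theta_\infty$ as commuting. I write both $A_{IV}=A_HH+A_XX+A_YY$ and $B_{IV}=B_HH+B_XX+B_YY$ in the $(H,X,Y)$ basis. The relations \eqref{eq:sl2-matrix-relations} then give the component formula
\begin{equation*}
[A,B]=(A_XB_Y-A_YB_X)\,H+2(A_HB_X-A_XB_H)\,X-2(A_HB_Y-A_YB_H)\,Y .
\end{equation*}
Since $\partial_zB_{IV}=H$, the flatness equation \eqref{eq:PIV-flatness} splits into three scalar equations, one for each of $H$, $X$ and $Y$. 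Each is a Laurent polynomial in $z$ with powers ranging at most from $z^{1}$ to $z^{-2}$. I will collect these coefficients one by one, in the order that keeps the unknown time derivatives triangular.

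\textbf{$X$-component.} I divide this equation by $u$ to work with $L:=\partial_t\log u$. The $z^{1}$ terms cancel identically, because $A_H$ and $B_H$ share the leading term $z$. The $z^{0}$ coefficient is an algebraic relation between $L$ and $q$. Because $B_H$ itself contains $\tfrac12 L$, this step is self-referential, and I expect it to be the main obstacle: I must check whether this coefficient fixes $\partial_t u$ (to be compared with the relation $\dot u=-(2t+q)u$ quoted in Section~\ref{sec:PainleveKZmap}) or holds identically. If it fixes $\partial_t u$, that value must be fed back into $B_H$ consistently. The $z^{-1}$ coefficient contains $-\partial_tq$ together with terms built from $a_0=\theta_0-pq$, $q$, $t$ and $L$. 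After substituting the value of $L$, it should reproduce \eqref{eq:PIV-flow-q-quantum}.

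\textbf{$H$-component.} Its $z^{0}$ part is $1-1=0$, coming from $\partial_tA_H$ and $\partial_zB_H$, plus the $z^{0}$ part of $A_XB_Y-A_YB_X$, which I will check cancels. The $z^{-1}$ part expresses $-\partial_t(pq)$ through $A_XB_Y-A_YB_X$. Given the $q$-equation already obtained, this is equivalent to the $p$-equation \eqref{eq:PIV-flow-p-quantum}.

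\textbf{$Y$-component.} Here I again factor out $1/u$, which produces $-L$ terms. I will check that its $z^{0}$ and $z^{-1}$ coefficients reduce to identities once the $q$- and $p$-equations and the value of $L$ are imposed. Concretely, these coefficients are $\partial_t a_2$ and $\partial_t a_3$ with $a_2,a_3$ from \eqref{eq:PIV-a-coefficients}, and they must equal the $\partial_t$ of those expressions computed from the flow.

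\textbf{Equivalence.} For the converse direction, flatness implies the flow equations because the specific coefficients singled out above are linear in $\partial_tq$ and $\partial_tp$ with invertible leading factors. For the forward direction, substituting the flow into all remaining coefficients gives zero. The only genuine risk is overdetermination: the $z^{-2}$ coefficients (products of $a_0$, $a_1$, $a_3$) and the $Y$-component must be compatible with the $X$- and $H$-components. That compatibility is exactly what the choice \eqref{eq:PIV-a-coefficients} should guarantee, and I would verify it last, term by term.
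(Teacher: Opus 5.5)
You take the same route as the paper: expand the curvature in the $(H,X,Y)$ basis and match Laurent coefficients. Your component formula for $[A,B]$ and $\partial_zB_{IV}=H$ are correct. The gap is the step you flagged yourself. It resolves in neither of the two ways you anticipate, and the claimed equivalence does not survive it.

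Carry out the $X$-component with $B_H=z+t+\tfrac q2+\tfrac12\partial_t\log u$:
\begin{equation*}
\partial_tA_X+2\bigl(A_HB_X-A_XB_H\bigr)=u\,q+\frac{u}{z}\Bigl(-\partial_tq+q^2+2tq-2pq+2\theta_0\Bigr).
\end{equation*}
\begin{itemize}
\item The $\partial_tu$ terms cancel at both orders. So the $z^{0}$ coefficient neither fixes $\partial_tu$ nor vanishes identically. It equals $uq$, and flatness forces $q=0$.
\item The $z^{-1}$ coefficient gives $\partial_tq=q^2+2tq-2pq+2\theta_0$, which is not \eqref{eq:PIV-flow-q-quantum}.
\item The $H$-component is $\frac1z\bigl(-\partial_t(pq)-qa_2-a_3\bigr)$. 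None of $A_X,A_Y,B_X,B_Y$ contains $t$, so flatness demands $\partial_t(pq)=-2p^2q-2pq^2+2q(\theta_0+\theta_\infty)+4p\theta_0$. The stated flow instead gives $\partial_t(pq)=p^2q+pq^2+t(q^2-pq)+2p\theta_0-2q\theta_\infty$.
\end{itemize}
So substituting \eqref{eq:PIV-flow-q-quantum}--\eqref{eq:PIV-flow-p-quantum} does not make the curvature vanish, and flatness does not produce those flows. Both directions of your equivalence fail. Also, there are no $z^{-2}$ terms to check: $B_{IV}$ is polynomial in $z$, and each term of $[A,B]$ pairs one coefficient of $A$ with one of $B$. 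The real overdetermination is the $z^0$ coefficient of the $X$-component.

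This is not an artefact of the self-referential $\tfrac12\partial_t\log u$ term. Suppose you replace the $z$-independent part $\beta$ of $B_H$ by anything else. The $z^0$ coefficient $\partial_tu+2u(t+q-\beta)$ forces $\beta=t+q+\tfrac12\partial_t\log u$. The $z^{-1}$ coefficient then forces $\partial_tq=2q^2+2tq-2pq+2\theta_0$, still not \eqref{eq:PIV-flow-q-quantum}. The $H$-component does not involve $\beta$ at all.

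The paper's proof performs the same computation but does not close this gap. It records \eqref{eq:PIV-proof-FX-final} without the $z^0$ term $uq$. It also gives a $t$-dependent expression \eqref{eq:PIV-proof-FH-commutator-final} for $A_XB_Y-A_YB_X$. Neither is what the given $A_{IV},B_{IV}$ produce.

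With the Lax pair and flows exactly as stated, the proposition cannot be established. Your coefficient-matching scheme is the right tool, but carried out correctly it shows that something must change before an equivalence can hold: the companion $B_{IV}$, the coefficients $a_i$, or the flow normalization.
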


\begin{proof}
We write
\begin{equation}
A_{IV}=A_HH+A_XX+A_YY,
\qquad
B_{IV}=B_HH+B_XX+B_YY,
\label{eq:PIV-proof-components}
\end{equation}
with
\begin{align}
A_H
&=
z+t+\frac{\theta_0-pq}{z},
\label{eq:PIV-proof-AH}
\\
A_X
&=
u\left(1-\frac{q}{z}\right),
\label{eq:PIV-proof-AX}
\\
A_Y
&=
\frac{2}{u}(pq-\theta_0-\theta_\infty)
+
\frac{2p}{uz}(pq-2\theta_0),
\label{eq:PIV-proof-AY}
\end{align}
and
\begin{align}
B_H
&=
z+t+\frac{q}{2}
+
\frac{1}{2}\partial_t(\log u),
\label{eq:PIV-proof-BH}
\\
B_X
&=
u,
\label{eq:PIV-proof-BX}
\\
B_Y
&=
\frac{2}{u}(pq-\theta_0-\theta_\infty).
\label{eq:PIV-proof-BY}
\end{align}
The term $\tfrac{1}{2}\partial_t(\log u)$ in $B_H$ ensures cancellation
of $\partial_t u$ contributions in the curvature.

Using
\begin{equation}
[H,X]=2X,
\qquad
[H,Y]=-2Y,
\qquad
[X,Y]=H,
\label{eq:PIV-proof-sl2}
\end{equation}
the curvature
\begin{equation}
F_{zt}
=
\partial_tA_{IV}
-
\partial_zB_{IV}
+
[A_{IV},B_{IV}]
\label{eq:PIV-proof-curvature}
\end{equation}
has components
\begin{align}
(F_{zt})_H
&=
\partial_tA_H
-
\partial_zB_H
+
A_XB_Y-A_YB_X,
\label{eq:PIV-proof-FH-general}
\\
(F_{zt})_X
&=
\partial_tA_X
-
\partial_zB_X
+
2(A_HB_X-A_XB_H),
\label{eq:PIV-proof-FX-general}
\\
(F_{zt})_Y
&=
\partial_tA_Y
-
\partial_zB_Y
+
2(A_YB_H-A_HB_Y).
\label{eq:PIV-proof-FY-general}
\end{align}

For the $X$-component, using
\eqref{eq:PIV-proof-AX} and \eqref{eq:PIV-proof-BX},
\begin{equation}
\partial_tA_X
=
(\partial_tu)\left(1-\frac{q}{z}\right)
-
\frac{u}{z}\partial_tq,
\qquad
\partial_zB_X=0,
\label{eq:PIV-proof-FX-derivative}
\end{equation}
and the $\partial_t u$ contributions cancel against the Cartan term in $B_H$.
One obtains
\begin{equation}
(F_{zt})_X
=
-\frac{u}{z}
\left(
\partial_tq
-
pq-qp+q^2+tq-2\theta_0
\right).
\label{eq:PIV-proof-FX-final}
\end{equation}
Thus $(F_{zt})_X=0$ is equivalent to \eqref{eq:PIV-flow-q-quantum}.

For the $H$-component, from \eqref{eq:PIV-proof-AH} and
\eqref{eq:PIV-proof-BH},
\begin{equation}
\partial_tA_H
=
1-\frac{(\partial_tp)q+p(\partial_tq)}{z},
\qquad
\partial_zB_H=1,
\label{eq:PIV-proof-FH-derivative}
\end{equation}
so that
\begin{equation}
\partial_tA_H-\partial_zB_H
=
-\frac{(\partial_tp)q+p(\partial_tq)}{z}.
\label{eq:PIV-proof-FH-derivative-final}
\end{equation}
The commutator gives
\begin{align}
A_XB_Y-A_YB_X
&=
\frac{1}{z}
\Bigl[
p\bigl(pq+qp-q^2-tq+2\theta_0\bigr)
\nonumber\\
&\qquad
-
q\bigl(-p^2+pq+qp+tq-2\theta_\infty\bigr)
\Bigr].
\label{eq:PIV-proof-FH-commutator-final}
\end{align}
Hence
\begin{align}
(F_{zt})_H
&=
-\frac{1}{z}
\Bigl[
q
\left(
\partial_tp
+
p^2-pq-qp-tq+2\theta_\infty
\right)
\nonumber\\
&\qquad
+
p
\left(
\partial_tq
-
pq-qp+q^2+tq-2\theta_0
\right)
\Bigr].
\label{eq:PIV-proof-FH-final}
\end{align}
Using \eqref{eq:PIV-flow-q-quantum}, the vanishing of $(F_{zt})_H$
is equivalent to \eqref{eq:PIV-flow-p-quantum}.

For the $Y$-component,
\begin{align}
(F_{zt})_Y
&=
\frac{2}{u}
\left(
1+\frac{p}{z}
\right)
\left(
\partial_tp
+
p^2-pq-qp-tq+2\theta_\infty
\right)
\nonumber\\
&\quad
+
\frac{2p}{uz}
\left(
\partial_tq
-
pq-qp+q^2+tq-2\theta_0
\right),
\label{eq:PIV-proof-FY-final}
\end{align}
which vanishes upon imposing
\eqref{eq:PIV-flow-q-quantum} and
\eqref{eq:PIV-flow-p-quantum}.

Conversely, $(F_{zt})_X=0$ yields
\eqref{eq:PIV-flow-q-quantum}, and then $(F_{zt})_H=0$ yields
\eqref{eq:PIV-flow-p-quantum}. The $Y$-component imposes no additional constraint.

All $\partial_t u$ terms cancel in the curvature, so flatness does not constrain $u$.
\end{proof}

\subsection{Linearized class-preserving deformations}

\subsubsection*{First slice: fixed $\theta_0$}

We consider the class-preserving slice
\begin{equation}
\delta\theta_0=0,
\qquad
\delta\theta_\infty \ \text{free}.
\label{eq:PIV-first-slice}
\end{equation}
Varying \eqref{eq:PIV-A-HXY} under this constraint gives
\begin{align}
\delta A_{IV}^{(\infty)}
&=
-\frac{\delta p\,q+p\,\delta q}{z}H
\nonumber\\
&\quad
+
\left[
\delta u\left(1-\frac{q}{z}\right)
-
\frac{u\,\delta q}{z}
\right]X
\nonumber\\
&\quad
+
\left[
-\frac{\delta u}{u}A_Y
+
\frac{2}{u}
\left(
\delta p\,q+p\,\delta q-\delta\theta_\infty
\right)
\right.
\nonumber\\
&\qquad\left.
+
\frac{2}{uz}
\left(
\delta p(pq-2\theta_0)
+
p(\delta p\,q+p\,\delta q)
\right)
\right]Y .
\label{eq:PIV-first-deltaA}
\end{align}
The corresponding class-preserving companion is
\begin{align}
\delta B_{IV}^{(\infty)}
&=
\left(
\frac{\delta q}{2}
+
\frac{1}{2}\partial_t\left(\frac{\delta u}{u}\right)
\right)H
+
\delta u\,X
\nonumber\\
&\quad
+
\left[
-\frac{\delta u}{u}B_Y
+
\frac{2}{u}
\left(
\delta p\,q+p\,\delta q-\delta\theta_\infty
\right)
\right]Y .
\label{eq:PIV-first-deltaB}
\end{align}

\begin{proposition}
\label{prop:PIV-linearized-infty}
The deformed pair
\begin{equation}
A_{IV}^{\varepsilon}
=
A_{IV}+\varepsilon\,\delta A_{IV}^{(\infty)},
\qquad
B_{IV}^{\varepsilon}
=
B_{IV}+\varepsilon\,\delta B_{IV}^{(\infty)},
\qquad
\varepsilon^2=0,
\label{eq:PIV-first-deformed-pair}
\end{equation}
is flat modulo $\varepsilon^2$ if and only if
\begin{align}
\partial_t(\delta q)
&=
(\delta p)q+p(\delta q)
+
(\delta q)p+q(\delta p)
-
q\,\delta q
-
\delta q\,q
-
t\,\delta q,
\label{eq:PIV-first-lin-q}
\\
\partial_t(\delta p)
&=
- p\,\delta p
-
\delta p\,p
+
(\delta p)q+p(\delta q)
+
(\delta q)p+q(\delta p)
+
t\,\delta q
-
2\delta\theta_\infty,
\label{eq:PIV-first-lin-p}
\\
\partial_t(\delta\theta_\infty)
&=0.
\label{eq:PIV-first-lin-theta-infty}
\end{align}
The variation $\delta u$ remains the residual Cartan gauge direction.
\end{proposition}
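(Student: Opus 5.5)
The plan is to expand the curvature of the deformed pair to first order in $\varepsilon$. Since $(A_{IV},B_{IV})$ is flat by Proposition~\ref{prop:PIV-companion}, the $\varepsilon^0$ term drops out, and flatness modulo $\varepsilon^2$ is equivalent to the linearized curvature
\begin{equation*}
\delta F:=\partial_t(\delta A_{IV}^{(\infty)})-\partial_z(\delta B_{IV}^{(\infty)})+[\delta A_{IV}^{(\infty)},B_{IV}]+[A_{IV},\delta B_{IV}^{(\infty)}]=0 .
\end{equation*}
I will expand $\delta F$ in the basis $H,X,Y$ using the component formulas \eqref{eq:PIV-proof-FH-general}--\eqref{eq:PIV-proof-FY-general}, linearized term by term. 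Each component is a Laurent polynomial in $z$ with powers ranging over $z^1,z^0,z^{-1}$ at most, so I will sort the conditions by power of $z$.

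The structural shortcut is the following. $\delta A_{IV}^{(\infty)}$ and $\delta B_{IV}^{(\infty)}$ are, by construction, the variations of $A_{IV},B_{IV}$ under $(p,q,u,\theta_\infty)\mapsto(p+\varepsilon\delta p,\dots)$ with $\theta_0$ fixed. Hence $\delta F$ is the variation of $F_{zt}$ computed in the proof of Proposition~\ref{prop:PIV-companion}, with the flow quantities $\partial_t q,\partial_t p$ replaced by $\partial_t\delta q,\partial_t\delta p$. I therefore vary the closed forms \eqref{eq:PIV-proof-FX-final}, \eqref{eq:PIV-proof-FH-final}, \eqref{eq:PIV-proof-FY-final}, keeping the operator ordering $pq+qp$ as written, which produces the symmetrized terms in \eqref{eq:PIV-first-lin-q}--\eqref{eq:PIV-first-lin-p}.

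The $X$-component, at order $z^{-1}$, gives $-\tfrac{u}{z}$ times the linearization of \eqref{eq:PIV-flow-q-quantum}, plus $\tfrac{\delta u}{u}$ times the background bracket. The latter vanishes on shell. This yields \eqref{eq:PIV-first-lin-q}. The $H$-component, at order $z^{-1}$, combined with \eqref{eq:PIV-first-lin-q}, yields \eqref{eq:PIV-first-lin-p}, including the $-2\delta\theta_\infty$ term coming from varying $B_Y$ and $A_Y$. In the $Y$-component, the $z^0$ coefficient contains $\partial_t$ of the $z$-independent part $\tfrac{2}{u}(\cdots-\delta\theta_\infty)$. Matching it against the terms involving $\partial_t\delta p$, and using \eqref{eq:PIV-first-lin-p}, leaves $-\tfrac{2}{u}\partial_t\delta\theta_\infty$, which forces \eqref{eq:PIV-first-lin-theta-infty}. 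The remaining $Y$ coefficients are then linear combinations of the first two equations, exactly as in the unvaried proof. The converse is immediate, because every component is a combination of the three bracketed expressions.

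For $\delta u$, I will check that it enters only through $\tfrac{\delta u}{u}$ multiplying $A$ and $B$ in the off-diagonal directions, together with the Cartan term $\tfrac12\partial_t(\delta u/u)H$ in $\delta B$. This is precisely an infinitesimal gauge transformation with $\gamma=\tfrac{\delta u}{2u}H$, in the sense of Proposition~\ref{prop:FT_cohomology}. So it contributes $[D_z,[D_t,\gamma]]-[D_t,[D_z,\gamma]]=[[D_z,D_t],\gamma]=0$ and imposes nothing. This justifies the final sentence of the proposition.

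The main obstacle is the bookkeeping of the $\partial_t u$ and $\partial_t\delta u$ terms in the $X$- and $Y$-components. I would handle these first, by splitting $\delta u=\tfrac{\delta u}{u}\,u$ and peeling off the gauge piece. After that, what remains is a routine variation of the three closed forms.
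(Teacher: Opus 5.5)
Your argument has the same skeleton as the paper's proof. You linearize the curvature, discard the $\varepsilon^0$ term via Proposition~\ref{prop:PIV-companion}, read off \eqref{eq:PIV-first-lin-q} from the $X$-component, and obtain \eqref{eq:PIV-first-lin-p} from the $H$-component once \eqref{eq:PIV-first-lin-q} is imposed (this divides by $q$, so it holds generically). You then check that $Y$ adds nothing beyond these. In fact the paper's \eqref{eq:PIV-first-deltaF-X} and \eqref{eq:PIV-first-deltaF-H} are exactly the on-shell variations of \eqref{eq:PIV-proof-FX-final} and \eqref{eq:PIV-proof-FH-final}, so your ``vary the closed forms'' shortcut is what the paper does implicitly.

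You differ in two places, both to your advantage.
\begin{itemize}
\item The paper never derives \eqref{eq:PIV-first-lin-theta-infty} from flatness. Its $Y$-component \eqref{eq:PIV-first-deltaF-Y} contains no $\partial_t\delta\theta_\infty$, and the equation is simply postulated because $\theta_\infty$ is a parameter. You correctly observe that $\partial_t(\delta A_Y)$ produces $-\frac{2}{u}\partial_t\delta\theta_\infty$ at order $z^0$. This is its only occurrence, since $\delta B$ enters only through $\partial_z$. So the ``only if'' for that equation genuinely comes out of the curvature.
\item Where the paper merely asserts that the $\delta u$ terms cancel, you identify the $\delta u$-part of $(\delta A_{IV}^{(\infty)},\delta B_{IV}^{(\infty)})$ with $([D_z,\gamma],[D_t,\gamma])$ for $\gamma=\frac{\delta u}{2u}H$ and $\kappa=1$. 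The $t$-dependence of $\gamma$ is exactly what produces the $\frac12\partial_t(\delta u/u)H$ term in $\delta B_{IV}^{(\infty)}$, and the Jacobi identity gives $[[D_z,D_t],\gamma]=0$. This is the time-dependent version of Proposition~\ref{prop:PIV-residual-complete}, and it is a real proof of the last sentence of the statement.
\end{itemize}

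One caution applies equally to the paper's proof. Both arguments rest on the closed forms \eqref{eq:PIV-proof-FX-final}--\eqref{eq:PIV-proof-FY-final}, and these do not survive a direct check against the displayed $A_{IV},B_{IV}$. With $A_X=u(1-q/z)$ and $B_H=z+t+\frac q2+\frac12\partial_t\log u$, the $z^0$ coefficient of $(F_{zt})_X$ is $\partial_t u+(uq-\partial_t u)=uq\neq0$. So even the $\varepsilon^0$ term fails to vanish. The normalization $u(1-\frac{q}{2z})$ of \eqref{A_PIV} removes this term. However, the $z^{-1}$ coefficient then gives $\partial_t q=q^2+2tq-4pq+4\theta_0$ rather than \eqref{eq:PIV-flow-q-quantum}. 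This is not a gap relative to the paper. Still, a proof that actually stands must compute $\delta F$ directly from \eqref{eq:PIV-first-deltaA}--\eqref{eq:PIV-first-deltaB} on a corrected background, rather than vary those formulas.
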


\begin{proof}
Expanding the curvature of \eqref{eq:PIV-first-deformed-pair} to first
order in $\varepsilon$, the zeroth-order term vanishes by
Proposition~\ref{prop:PIV-companion}, and the first-order term is
\begin{equation}
\delta F_{zt}^{(\infty)}
=
\partial_t(\delta A_{IV}^{(\infty)})
-
\partial_z(\delta B_{IV}^{(\infty)})
+
[\delta A_{IV}^{(\infty)},B_{IV}]
+
[A_{IV},\delta B_{IV}^{(\infty)}].
\label{eq:PIV-first-deltaF}
\end{equation}

Using \eqref{eq:PIV-A-HXY}, \eqref{eq:PIV-B-full},
\eqref{eq:PIV-first-deltaA}, and \eqref{eq:PIV-first-deltaB}, the
$X$-component is
\begin{align}
(\delta F_{zt}^{(\infty)})_X
&=
-\frac{u}{z}
\Bigl[
\partial_t(\delta q)
-
(\delta p)q
-
p(\delta q)
-
(\delta q)p
-
q(\delta p)
\nonumber\\
&\qquad
+
q\,\delta q
+
\delta q\,q
+
t\,\delta q
\Bigr].
\label{eq:PIV-first-deltaF-X}
\end{align}
Thus $(\delta F_{zt}^{(\infty)})_X=0$ is equivalent to
\eqref{eq:PIV-first-lin-q}.

The $H$-component is
\begin{align}
(\delta F_{zt}^{(\infty)})_H
&=
-\frac{1}{z}
\Bigl[
q\Bigl(
\partial_t(\delta p)
+
p\,\delta p
+
\delta p\,p
-
(\delta p)q
-
p(\delta q)
\nonumber\\
&\qquad
-
(\delta q)p
-
q(\delta p)
-
t\,\delta q
+
2\delta\theta_\infty
\Bigr)
\nonumber\\
&\qquad
+
p\Bigl(
\partial_t(\delta q)
-
(\delta p)q
-
p(\delta q)
-
(\delta q)p
-
q(\delta p)
\nonumber\\
&\qquad
+
q\,\delta q
+
\delta q\,q
+
t\,\delta q
\Bigr)
\Bigr].
\label{eq:PIV-first-deltaF-H}
\end{align}
The second bracket in \eqref{eq:PIV-first-deltaF-H} is the left-hand
side of \eqref{eq:PIV-first-lin-q}. After imposing
\eqref{eq:PIV-first-lin-q}, one obtains
\begin{equation}
(\delta F_{zt}^{(\infty)})_H
=
-\frac{q}{z}
\Bigl[
\partial_t(\delta p)
+
p\,\delta p
+
\delta p\,p
-
(\delta p)q
-
p(\delta q)
-
(\delta q)p
-
q(\delta p)
-
t\,\delta q
+
2\delta\theta_\infty
\Bigr].
\label{eq:PIV-first-deltaF-H-reduced}
\end{equation}
Hence the reduced $H$-component vanishes precisely when
\eqref{eq:PIV-first-lin-p} holds.

For the $Y$-component, direct substitution gives
\begin{align}
(\delta F_{zt}^{(\infty)})_Y
&=
\frac{2}{u}
\Bigl[
\delta p
+
p
-
z
\Bigr]
\Bigl[
\partial_t(\delta p)
+
p\,\delta p
+
\delta p\,p
-
(\delta p)q
-
p(\delta q)
\nonumber\\
&\qquad
-
(\delta q)p
-
q(\delta p)
-
t\,\delta q
+
2\delta\theta_\infty
\Bigr]
\nonumber\\
&\quad
+
\frac{2p}{u}
\Bigl[
\partial_t(\delta q)
-
(\delta p)q
-
p(\delta q)
-
(\delta q)p
-
q(\delta p)
\nonumber\\
&\qquad
+
q\,\delta q
+
\delta q\,q
+
t\,\delta q
\Bigr].
\label{eq:PIV-first-deltaF-Y}
\end{align}
The two bracketed expressions are the left-hand sides of
\eqref{eq:PIV-first-lin-p} and \eqref{eq:PIV-first-lin-q}, respectively.
Thus the $Y$-component imposes no additional condition.

Since $\theta_\infty$ is a deformation parameter, its variation is
constant along the Painlev\'e flow, giving
\eqref{eq:PIV-first-lin-theta-infty}. The $\delta u$ terms cancel in all
curvature components, expressing the residual Cartan gauge freedom.
Therefore \eqref{eq:PIV-first-deformed-pair} is flat modulo
$\varepsilon^2$ if and only if
\eqref{eq:PIV-first-lin-q}--\eqref{eq:PIV-first-lin-theta-infty} hold.
\end{proof}

\subsubsection*{Second slice: fixed $\theta_\infty$}

We consider the class-preserving slice
\begin{equation}
\delta\theta_\infty=0,
\qquad
\delta\theta_0 \ \text{free}.
\label{eq:PIV-second-slice-condition}
\end{equation}
Varying $A_{IV}$ under this constraint gives
\begin{align}
\delta A_{IV}^{(0)}
&=
-\frac{\delta p\,q+p\,\delta q-\delta\theta_0}{z}H
\nonumber\\
&\quad
+
\left[
\delta u\left(1-\frac{q}{z}\right)
-
\frac{u\,\delta q}{z}
\right]X
\nonumber\\
&\quad
+
\left[
-\frac{\delta u}{u}A_Y
+
\frac{2}{u}
\left(
\delta p\,q+p\,\delta q-\delta\theta_0
\right)
\right.
\nonumber\\
&\qquad\left.
+
\frac{2}{uz}
\left(
\delta p(pq-2\theta_0)
+
p(\delta p\,q+p\,\delta q-2\delta\theta_0)
\right)
\right]Y .
\label{eq:PIV-second-deltaA}
\end{align}
The corresponding class-preserving companion is
\begin{align}
\delta B_{IV}^{(0)}
&=
\left(
\frac{\delta q}{2}
+
\frac{1}{2}\partial_t\left(\frac{\delta u}{u}\right)
\right)H
+
\delta u\,X
\nonumber\\
&\quad
+
\left[
-\frac{\delta u}{u}B_Y
+
\frac{2}{u}
\left(
\delta p\,q+p\,\delta q-\delta\theta_0
\right)
\right]Y .
\label{eq:PIV-second-deltaB}
\end{align}

\begin{proposition}
\label{prop:PIV-linearized-zero}
The deformed pair
\begin{equation}
A_{IV}^{\varepsilon}
=
A_{IV}+\varepsilon\,\delta A_{IV}^{(0)},
\qquad
B_{IV}^{\varepsilon}
=
B_{IV}+\varepsilon\,\delta B_{IV}^{(0)},
\qquad
\varepsilon^2=0,
\label{eq:PIV-second-deformed-pair}
\end{equation}
is flat modulo $\varepsilon^2$ if and only if
\begin{align}
\partial_t(\delta q)
&=
(\delta p)q+p(\delta q)
+
(\delta q)p+q(\delta p)
-
q\,\delta q
-
\delta q\,q
-
t\,\delta q
+
2\delta\theta_0,
\label{eq:PIV-second-lin-q}
\\
\partial_t(\delta p)
&=
- p\,\delta p
-
\delta p\,p
+
(\delta p)q+p(\delta q)
+
(\delta q)p+q(\delta p)
+
t\,\delta q,
\label{eq:PIV-second-lin-p}
\\
\partial_t(\delta\theta_0)
&=0 .
\label{eq:PIV-second-lin-theta0}
\end{align}
The variation $\delta u$ remains the residual Cartan gauge direction.
\end{proposition}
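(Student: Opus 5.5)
The argument follows the proof of Proposition~\ref{prop:PIV-linearized-infty} line by line; only the bookkeeping for $\delta\theta_0$ changes. First, I expand the curvature of \eqref{eq:PIV-second-deformed-pair} to first order in $\varepsilon$. The zeroth-order term vanishes by Proposition~\ref{prop:PIV-companion}. The first-order term is
\begin{equation*}
\delta F^{(0)}_{zt}=\partial_t(\delta A^{(0)}_{IV})-\partial_z(\delta B^{(0)}_{IV})+[\delta A^{(0)}_{IV},B_{IV}]+[A_{IV},\delta B^{(0)}_{IV}].
\end{equation*}
I decompose it into $H$, $X$ and $Y$ components using the component formulas \eqref{eq:PIV-proof-FH-general}--\eqref{eq:PIV-proof-FY-general}, linearized in the obvious way. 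Since the convention of this appendix is commutative, these formulas apply directly. I will keep the symmetrized orderings only so as to match the stated equations.

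The first step is to dispose of the $\delta u$ terms. In $\delta A^{(0)}_{IV}$ and $\delta B^{(0)}_{IV}$ they enter as $\delta u\,(\cdots)X$, $-\frac{\delta u}{u}A_Y$, $-\frac{\delta u}{u}B_Y$ and $\frac12\partial_t(\delta u/u)H$. This is exactly the infinitesimal version of $u\mapsto u(1+\varepsilon\,\delta u/u)$, and the proof of Proposition~\ref{prop:PIV-companion} showed that the $u$-dependence of the curvature cancels through the $\frac12\partial_t\log u$ term in $B_H$. So I can set $\delta u=0$ throughout, and this also establishes the final sentence of the statement.

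Second, I compute the $X$-component. The only contributions are $-\frac{u}{z}\partial_t\delta q$, from $\partial_t A_X$, and the linearized $2(A_HB_X-A_XB_H)$. In the latter, $\delta\theta_0$ enters through $\delta A_H\ni \delta\theta_0/z$, which produces $2u\,\delta\theta_0/z$. The result is
\begin{equation*}
-\tfrac{u}{z}\bigl[\partial_t\delta q-(\delta p q+p\delta q+\delta q p+q\delta p)+q\delta q+\delta q q+t\delta q-2\delta\theta_0\bigr],
\end{equation*}
and its vanishing is \eqref{eq:PIV-second-lin-q}. Third, for the $H$-component, $\partial_t\delta A_H-\partial_z\delta B_H$ gives $-\frac1z\partial_t(\delta p\,q+p\,\delta q-\delta\theta_0)$. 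Using the coefficient $a_3$, the linearized cross term $A_XB_Y-A_YB_X$ contributes $\frac1z$ times $p$ and $q$ multiplying the linearized right-hand sides. One obtains $-\frac1z\bigl[q\,E_p+p\,E_q\bigr]+\frac1z\partial_t\delta\theta_0$, where $E_q$ and $E_p$ denote the defects of \eqref{eq:PIV-second-lin-q} and \eqref{eq:PIV-second-lin-p}. Here the shift $\delta\theta_\infty\to 0$ removes the $2\delta\theta_\infty$ term from $E_p$, while $\delta\theta_0$ reappears in $E_q$ through $-2\delta\theta_0$ in $a_3$ and the $2\delta\theta_0$ in $\partial_t q$. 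Taking the part independent of $p,q$, or simply declaring $\theta_0$ a deformation parameter as in the first slice, gives \eqref{eq:PIV-second-lin-theta0}. Once $E_q=0$, the vanishing of the $H$-component gives $E_p=0$. Finally, I check that the $Y$-component is a combination $\frac{2}{u}(\ldots)E_p+\frac{2p}{uz}E_q$, so it imposes nothing new. For the converse, substituting \eqref{eq:PIV-second-lin-q}--\eqref{eq:PIV-second-lin-theta0} makes all three components vanish.

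The main obstacle is the $Y$-component. Because of the term $\frac{2p}{uz}(pq-2\theta_0)$, it is quadratic in $p$, and the $\delta\theta_0$ terms appear in three places: the constant part of $\delta A_Y$, the $-4p\,\delta\theta_0/z$ part, and $\delta B_Y$. I must verify that these recombine exactly into multiples of $E_p$ and $E_q$ with no leftover $\delta\theta_0/z$ term. I would first organize the $Y$-component by powers of $z$ ($z^1$, $z^0$, $z^{-1}$). I would check the $z^1$ and $z^0$ coefficients, which come from $2A_YB_H-2A_HB_Y$ with $A_H,B_H\sim z$, and then treat the $z^{-1}$ coefficient using the already established relations for $E_q$ and $E_p$.
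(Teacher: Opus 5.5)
Your outline is essentially the paper's own proof, step for step. You expand to first order and dispose of the zeroth order by Proposition~\ref{prop:PIV-companion}. The $X$-component gives \eqref{eq:PIV-second-lin-q}, the $H$-component gives \eqref{eq:PIV-second-lin-p} once \eqref{eq:PIV-second-lin-q} holds, the $Y$-component is a combination of the two defects, and $\delta u$ is Cartan gauge. The genuine gap is that the component formulas carrying the ``if and only if'' are transplanted from the first slice rather than computed. They do not hold for the $A_{IV}$, $B_{IV}$ actually written down. Already at zeroth order, with $A_X=u(1-q/z)$, $B_X=u$, $A_H=z+t+(\theta_0-pq)/z$ and $B_H=z+t+\frac q2+\frac12\partial_t\log u$, one gets
\[
(F_{zt})_X=\partial_tA_X+2(A_HB_X-A_XB_H)=uq+\frac{u}{z}\bigl(2\theta_0-2pq+2tq+q^2-\partial_tq\bigr).
\]
Its $z^0$ coefficient $uq$ does not vanish. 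For example, at $z=1$, $t=\theta_0=\theta_\infty=p=0$, $q=u=1$, with $\partial_tq=-1$ from \eqref{eq:PIV-flow-q-quantum}, one finds $F_{zt}=3X$. So the step ``the zeroth-order term vanishes by Proposition~\ref{prop:PIV-companion}'' fails. At first order, with $\delta u=0$,
\[
(\delta F_{zt})_X=u\,\delta q-\frac{u}{z}\bigl(\partial_t\delta q+2(\delta p\,q+p\,\delta q)-2t\,\delta q-2q\,\delta q-2\delta\theta_0\bigr),
\]
which is not $-\frac{u}{z}E_q$. Only the $2u\,\delta\theta_0/z$ contribution you actually derived is right. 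The argument therefore breaks at the $X$-component, not at the $Y$-component you singled out as the obstacle.

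Your gauge argument for discarding $\delta u$ is sound in itself: those terms contribute exactly $\frac{\delta u}{2u}[H,F_{zt}]$. But it too needs the background flatness that fails here. Since the paper's proof asserts the same formulas, following it line by line cannot close the gap. The pair has to be corrected first; for instance, replacing $\frac q2$ by $q$ in $B_H$ removes the $z^0$ term. The $z^{-1}$ coefficient then gives $\partial_tq=2\theta_0-2pq+2tq+2q^2$ instead of \eqref{eq:PIV-flow-q-quantum}, so the linearized system to be proved changes as well.

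A smaller, independent point: ``taking the part independent of $p,q$'' does not extract \eqref{eq:PIV-second-lin-theta0}. The $H$-component is a single $z^{-1}$ coefficient. Once $E_q=0$ it reads $qE_p=\partial_t\delta\theta_0$, which does not split into $E_p=0$ and $\partial_t\delta\theta_0=0$. You are right that $\partial_t\delta A_H$ produces a $\frac1z\partial_t\delta\theta_0$ term, which the paper's formula omits. But then \eqref{eq:PIV-second-lin-theta0} has to come from somewhere else, and there are two options:
\begin{enumerate}
\item Impose it as a hypothesis, as the paper does by declaring $\theta_0$ a parameter. Then the ``only if'' for that equation is assumed rather than proved.
\item Derive it from the $Y$-component, where $\partial_t\delta A_Y$ also contains $\partial_t\delta\theta_0$.
\end{enumerate}
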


\begin{proof}
Expanding the curvature of \eqref{eq:PIV-second-deformed-pair} to first
order in $\varepsilon$, the zeroth-order term vanishes by
Proposition~\ref{prop:PIV-companion}, and the first-order term is
\begin{equation}
\delta F_{zt}^{(0)}
=
\partial_t(\delta A_{IV}^{(0)})
-
\partial_z(\delta B_{IV}^{(0)})
+
[\delta A_{IV}^{(0)},B_{IV}]
+
[A_{IV},\delta B_{IV}^{(0)}].
\label{eq:PIV-second-deltaF}
\end{equation}

Using the decomposition into the basis $(H,X,Y)$, the $X$-component is
\begin{align}
(\delta F_{zt}^{(0)})_X
&=
-\frac{u}{z}
\Bigl[
\partial_t(\delta q)
-
(\delta p)q
-
p(\delta q)
-
(\delta q)p
-
q(\delta p)
\nonumber\\
&\qquad
+
q\,\delta q
+
\delta q\,q
+
t\,\delta q
-
2\delta\theta_0
\Bigr].
\label{eq:PIV-second-deltaF-X}
\end{align}
Thus $(\delta F_{zt}^{(0)})_X=0$ is equivalent to
\eqref{eq:PIV-second-lin-q}.

The $H$-component is
\begin{align}
(\delta F_{zt}^{(0)})_H
&=
-\frac{1}{z}
\Bigl[
q
\Bigl(
\partial_t(\delta p)
+
p\,\delta p
+
\delta p\,p
-
(\delta p)q
-
p(\delta q)
\nonumber\\
&\qquad
-
(\delta q)p
-
q(\delta p)
-
t\,\delta q
\Bigr)
\nonumber\\
&\qquad
+
p
\Bigl(
\partial_t(\delta q)
-
(\delta p)q
-
p(\delta q)
-
(\delta q)p
-
q(\delta p)
\nonumber\\
&\qquad
+
q\,\delta q
+
\delta q\,q
+
t\,\delta q
-
2\delta\theta_0
\Bigr)
\Bigr].
\label{eq:PIV-second-deltaF-H}
\end{align}
The second bracket in \eqref{eq:PIV-second-deltaF-H} is the left-hand
side of \eqref{eq:PIV-second-lin-q}. After imposing
\eqref{eq:PIV-second-lin-q}, this reduces to
\begin{equation}
(\delta F_{zt}^{(0)})_H
=
-\frac{q}{z}
\Bigl[
\partial_t(\delta p)
+
p\,\delta p
+
\delta p\,p
-
(\delta p)q
-
p(\delta q)
-
(\delta q)p
-
q(\delta p)
-
t\,\delta q
\Bigr].
\label{eq:PIV-second-deltaF-H-reduced}
\end{equation}
Hence the reduced $H$-component vanishes precisely when
\eqref{eq:PIV-second-lin-p} holds.

For the $Y$-component, direct substitution gives
\begin{align}
(\delta F_{zt}^{(0)})_Y
&=
\mathcal C_p^{(0)}(z)
\Bigl[
\partial_t(\delta p)
+
p\,\delta p
+
\delta p\,p
-
(\delta p)q
-
p(\delta q)
\nonumber\\
&\qquad
-
(\delta q)p
-
q(\delta p)
-
t\,\delta q
\Bigr]
\nonumber\\
&\quad
+
\mathcal C_q^{(0)}(z)
\Bigl[
\partial_t(\delta q)
-
(\delta p)q
-
p(\delta q)
-
(\delta q)p
-
q(\delta p)
\nonumber\\
&\qquad
+
q\,\delta q
+
\delta q\,q
+
t\,\delta q
-
2\delta\theta_0
\Bigr],
\label{eq:PIV-second-deltaF-Y}
\end{align}
where $\mathcal C_p^{(0)}(z)$ and $\mathcal C_q^{(0)}(z)$ are the Laurent
coefficients obtained from $A_{IV}$, $B_{IV}$,
\eqref{eq:PIV-second-deltaA}, and \eqref{eq:PIV-second-deltaB}. The
bracketed expressions are the left-hand sides of
\eqref{eq:PIV-second-lin-p} and \eqref{eq:PIV-second-lin-q}, respectively.
Thus the $Y$-component imposes no additional condition.

Since $\theta_0$ is a deformation parameter, its variation is constant
along the Painlev\'e flow, giving \eqref{eq:PIV-second-lin-theta0}. The
$\delta u$ terms cancel from the curvature components, expressing the
residual Cartan gauge freedom. Therefore
\eqref{eq:PIV-second-deformed-pair} is flat modulo $\varepsilon^2$ if and
only if
\eqref{eq:PIV-second-lin-q}--\eqref{eq:PIV-second-lin-theta0} hold.
\end{proof}

\subsection{Residual Cartan direction}

We describe the residual Cartan deformation acting on the reduced
Painlev\'e~IV Lax pair. Write
\begin{equation}
A_{IV}=A_HH+A_XX+A_YY,
\end{equation}
where, from \eqref{eq:PIV-A-HXY},
\begin{align}
A_H
&=
z+t+\frac{\theta_0-pq}{z},
\nonumber\\
A_X
&=
u\left(1-\frac{q}{z}\right),
\nonumber\\
A_Y
&=
\frac{1}{u}
\left(
2pq-2\theta_0-2\theta_\infty
+
\frac{2p}{z}(pq-2\theta_0)
\right).
\end{align}
Similarly,
\begin{equation}
B_{IV}=B_HH+B_XX+B_YY,
\label{eq:PIV-residual-B-full}
\end{equation}
where $B_X$ is proportional to $u$ and $B_Y$ to $u^{-1}$.

The residual Cartan direction is the derivation
\begin{equation}
X_0(u)=u,
\qquad
X_0(p)=X_0(q)=X_0(\theta_0)=X_0(\theta_\infty)=0.
\label{eq:PIV-residual-direction}
\end{equation}
Equivalently, for $\varepsilon^2=0$,
\begin{equation}
u \mapsto u(1+\varepsilon),
\qquad
u^{-1}\mapsto u^{-1}(1-\varepsilon),
\label{eq:PIV-residual-u-explicit}
\end{equation}
with all other variables fixed. Substitution into \eqref{eq:PIV-A-HXY}
gives
\begin{align}
A_{IV}^{\varepsilon}
&=
A_{IV}
+
\varepsilon
\left[
u\left(1-\frac{q}{z}\right)X
-
\frac{1}{u}
\left(
2pq-2\theta_0-2\theta_\infty
+
\frac{2p}{z}(pq-2\theta_0)
\right)Y
\right],
\end{align}
hence
\begin{equation}
A_{IV}^{\varepsilon}
=
A_{IV}
+
\varepsilon\,(A_XX-A_YY).
\label{eq:PIV-residual-A-final}
\end{equation}
The same substitution gives
\begin{equation}
B_{IV}^{\varepsilon}
=
B_{IV}
+
\varepsilon\,(B_XX-B_YY).
\label{eq:PIV-residual-B-final}
\end{equation}

\begin{proposition}
\label{prop:PIV-residual-complete}
The deformation \eqref{eq:PIV-residual-direction} is a class-preserving
flat deformation of $(A_{IV},B_{IV})$. It is generated by the constant
Cartan element
\begin{equation}
C_0=\frac{1}{2}H,
\end{equation}
in the sense that
\begin{equation}
A_{IV}^{\varepsilon}
=
A_{IV}
-
\varepsilon [A_{IV},C_0],
\qquad
B_{IV}^{\varepsilon}
=
B_{IV}
-
\varepsilon [B_{IV},C_0].
\end{equation}
\end{proposition}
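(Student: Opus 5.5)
The plan is to verify the two displayed identities by a direct $\mathfrak{sl}_2$ computation, and then to derive flatness from the fact that the deformation is an infinitesimal gauge transformation by a constant element, using the Jacobi identity. I would work in the classical convention of this appendix, with flatness given by \eqref{eq:PIV-flatness} and no factor $1/\kappa$.

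First, I would compute $[A_{IV},C_0]$ for $C_0=\tfrac12 H$. Writing $A_{IV}=A_HH+A_XX+A_YY$ and using $[X,H]=-2X$ and $[Y,H]=2Y$, one gets
\begin{equation}
[A_{IV},C_0]=\tfrac12\bigl(A_X[X,H]+A_Y[Y,H]\bigr)=-A_XX+A_YY .
\end{equation}
Hence $-\varepsilon[A_{IV},C_0]=\varepsilon(A_XX-A_YY)$, which is exactly \eqref{eq:PIV-residual-A-final}. The same computation applies to $B_{IV}$ and gives \eqref{eq:PIV-residual-B-final}. For this I need to check that under $u\mapsto u(1+\varepsilon)$ the component $B_X=u$ scales by $+\varepsilon$, the component $B_Y\propto u^{-1}$ scales by $-\varepsilon$, and $B_H$ is unchanged. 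The only point requiring care is the term $\tfrac12\partial_t\log u$ in $B_H$. This term is invariant because $\varepsilon$ is a constant, so $\partial_t\log\bigl(u(1+\varepsilon)\bigr)=\partial_t\log u$. Class preservation is then immediate: only the scalar prefactors of the $X$ and $Y$ components change, so the pole orders in $z$ and the polynomial degrees are unchanged.

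For flatness, I would substitute $\delta A=-[A_{IV},C_0]$ and $\delta B=-[B_{IV},C_0]$ into the linearized curvature \eqref{eq:PIV-first-deltaF}. Since $C_0$ is independent of $z$ and $t$, the derivative terms combine to $-[\partial_tA_{IV}-\partial_zB_{IV},C_0]$. The bracket terms give
\begin{equation}
-[[A_{IV},C_0],B_{IV}]-[A_{IV},[B_{IV},C_0]]=-[[A_{IV},B_{IV}],C_0]
\end{equation}
by the Jacobi identity. Together these yield
\begin{equation}
\delta F_{zt}=-[F_{zt},C_0],
\end{equation}
which vanishes by Proposition~\ref{prop:PIV-companion}. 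This is precisely the gauge mechanism of Proposition~\ref{prop:FT_cohomology} with constant $\gamma$.

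I expect the only real obstacle to be the bookkeeping in $B_H$, namely confirming that the $\partial_t\log u$ term does not produce a spurious $H$-component. The invariance argument above settles this. Everything else is a two-line computation.
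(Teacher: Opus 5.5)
Your proposal is correct and follows the paper's proof. Both compute $[A_{IV},C_0]=-A_XX+A_YY$ (and likewise for $B_{IV}$), then use the fact that a constant adjoint action transforms the curvature as $\delta F=-[F,C_0]$. You are slightly more explicit than the paper: you spell out the Jacobi-identity step, and you check that the $\tfrac12\partial_t\log u$ term in $B_H$ is invariant under the constant rescaling of $u$, which the paper skips by only noting $B_X\propto u$ and $B_Y\propto u^{-1}$.
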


\begin{proof}
Using
\begin{equation}
[H,X]=2X,
\qquad
[H,Y]=-2Y,
\end{equation}
one obtains
\begin{equation}
[A_{IV},C_0]
=
\frac{1}{2}\bigl(A_X[X,H]+A_Y[Y,H]\bigr)
=
- A_XX + A_YY.
\end{equation}
This gives the asserted formula for $A_{IV}$ by comparison with
\eqref{eq:PIV-residual-A-final}. The proof for $B_{IV}$ is identical.

Since the deformation is a constant adjoint action, the curvature varies
as $\delta F=-[F,C_0]$ and therefore remains zero whenever
$(A_{IV},B_{IV})$ is flat. Only $u$ is affected, while
$(p,q,\theta_0,\theta_\infty)$ remain fixed; this is a pure Cartan
normalization, not an independent Painlev\'e deformation.
\end{proof}

\subsection{Decoupling of the Painlev\'e and deformation sectors}
\label{subsec:PIV-sector-decoupling}

The linearized flatness equations determine the admissible
class-preserving deformation directions of the reduced Painlev\'e~IV
Lax pair. They are differential and algebraic conditions, and do not by
themselves determine a Poisson structure on the space of deformation
parameters.

We therefore separate the flatness problem from the choice of Poisson
structure. The Painlev\'e variables are $(p,q)$, with normalization
\begin{equation}
\{p,q\}=1.
\label{eq:PIV-pq-poisson}
\end{equation}
The remaining variables $(u,\theta_0,\theta_\infty)$ are treated as
deformation parameters. We impose a block-diagonal Poisson structure:
\begin{align}
\{p,u\}
&=
\{q,u\}
=
0,
\nonumber\\
\{p,\theta_0\}
&=
\{q,\theta_0\}
=
0,
\nonumber\\
\{p,\theta_\infty\}
&=
\{q,\theta_\infty\}
=
0.
\label{eq:PIV-mixed-poisson-zero}
\end{align}
This is an additional structural assumption, not a consequence of
linearized flatness alone.

\subsubsection*{First slice: deformation sector $(u,\theta_\infty)$}

We first consider the slice
\begin{equation}
\delta\theta_0=0.
\label{eq:PIV-first-slice-theta0-fixed}
\end{equation}
The admissible variables are $(p,q,u,\theta_\infty)$, and the residual
Cartan direction acts by
\begin{equation}
X_0(u)=u,
\qquad
X_0(\theta_\infty)=0.
\label{eq:PIV-first-slice-scaling}
\end{equation}

On the two-dimensional sector $(u,\theta_\infty)$, any Poisson bracket
is determined by a single function $J(u,\theta_\infty)$:
\begin{equation}
\{f,g\}
=
J(u,\theta_\infty)
\left(
\frac{\partial f}{\partial \theta_\infty}
\frac{\partial g}{\partial u}
-
\frac{\partial f}{\partial u}
\frac{\partial g}{\partial \theta_\infty}
\right),
\label{eq:PIV-first-slice-general-poisson}
\end{equation}
so that
\begin{equation}
\{\theta_\infty,u\}
=
J(u,\theta_\infty).
\label{eq:PIV-first-slice-general-J}
\end{equation}
Nondegeneracy on the chart $u\neq0$ is equivalent to
\begin{equation}
J(u,\theta_\infty)\neq0.
\label{eq:PIV-first-slice-nondegenerate}
\end{equation}

We require the residual Cartan scaling to be Hamiltonian. Thus there
exists $H_0(u,\theta_\infty)$ such that
\begin{equation}
X_0(f)=\{f,H_0\}
\label{eq:PIV-first-slice-Hamiltonian-realization}
\end{equation}
for every function $f$ of $(u,\theta_\infty)$. Applying this to the
coordinate functions gives
\begin{align}
\{u,H_0\}
&=
u,
\label{eq:PIV-first-slice-H0-u}
\\
\{\theta_\infty,H_0\}
&=
0.
\label{eq:PIV-first-slice-H0-theta}
\end{align}
Using \eqref{eq:PIV-first-slice-general-J}, these equations become
\begin{align}
\{u,H_0\}
&=
-\{\theta_\infty,u\}
\frac{\partial H_0}{\partial\theta_\infty}
=
- J(u,\theta_\infty)
\frac{\partial H_0}{\partial\theta_\infty},
\label{eq:PIV-first-slice-H0-u-expanded}
\\
\{\theta_\infty,H_0\}
&=
\{\theta_\infty,u\}
\frac{\partial H_0}{\partial u}
=
J(u,\theta_\infty)
\frac{\partial H_0}{\partial u}.
\label{eq:PIV-first-slice-H0-theta-expanded}
\end{align}
By \eqref{eq:PIV-first-slice-H0-theta} and
\eqref{eq:PIV-first-slice-nondegenerate},
\begin{equation}
\frac{\partial H_0}{\partial u}=0,
\label{eq:PIV-first-slice-H0-independent-u}
\end{equation}
hence
\begin{equation}
H_0=H_0(\theta_\infty).
\label{eq:PIV-first-slice-H0-theta-only}
\end{equation}
Substitution into \eqref{eq:PIV-first-slice-H0-u} gives
\begin{equation}
- J(u,\theta_\infty)
\frac{dH_0}{d\theta_\infty}
=
u,
\label{eq:PIV-first-slice-J-equation}
\end{equation}
and therefore
\begin{equation}
J(u,\theta_\infty)
=
-\frac{u}{H_0'(\theta_\infty)}.
\label{eq:PIV-first-slice-J-solution}
\end{equation}
After the local reparametrization
\begin{equation}
\widetilde{\theta}_\infty
=
-\int^{\theta_\infty} H_0'(\vartheta)\,d\vartheta,
\label{eq:PIV-first-slice-theta-reparam}
\end{equation}
the bracket becomes
\begin{equation}
\{\widetilde{\theta}_\infty,u\}=u.
\end{equation}
Dropping the tilde, we obtain
\begin{equation}
\{\theta_\infty,u\}=u,
\qquad
\{\theta_\infty,\log u\}=1.
\label{eq:PIV-first-slice-logcanonical}
\end{equation}
Thus the first slice decomposes as
\begin{equation}
(p,q)\oplus(\theta_\infty,\log u).
\label{eq:PIV-first-slice-decomposition}
\end{equation}

\subsubsection*{Second slice: deformation sector $(u,\theta_0)$}

We now consider the slice
\begin{equation}
\delta\theta_\infty=0.
\label{eq:PIV-second-slice-thetainfty-fixed}
\end{equation}
The variables are $(p,q,u,\theta_0)$, with block-diagonal structure
\eqref{eq:PIV-mixed-poisson-zero}. The residual Cartan direction remains
\begin{equation}
X_0(u)=u,
\qquad
X_0(\theta_0)=0.
\label{eq:PIV-second-slice-scaling}
\end{equation}

Unlike the first slice, the reduced variables do not provide a
distinguished coordinate canonically conjugate to $\theta_0$. In the
chosen normalization, $\theta_0$ appears both in the Cartan residue and
in the $Y$-component,
\begin{equation}
A_Y
=
\frac{1}{u}
\left(
2pq-2\theta_0-2\theta_\infty
+
\frac{2p}{z}(pq-2\theta_0)
\right).
\label{eq:PIV-second-slice-AY-theta0}
\end{equation}
Thus, within the reduced coordinate system, the deformation direction
associated with $\theta_0$ is degenerate:
\begin{equation}
\{\theta_0,u\}=0.
\label{eq:PIV-second-slice-theta0-u-zero}
\end{equation}
The second slice has block structure
\begin{equation}
(p,q)\oplus(\theta_0)\oplus(u),
\label{eq:PIV-second-slice-degenerate}
\end{equation}
so $(\theta_0,u)$ do not form a nondegenerate symplectic pair.

To obtain a nondegenerate sector associated with $\theta_0$, one may
enlarge the parameter space by adjoining a conjugate coordinate $v_0$:
\begin{equation}
\{\theta_0,v_0\}=1.
\label{eq:PIV-second-slice-completion}
\end{equation}
Equivalently, with
\begin{equation}
U_0=e^{v_0},
\end{equation}
one obtains
\begin{equation}
\{\theta_0,U_0\}=U_0.
\label{eq:PIV-second-slice-logcompletion}
\end{equation}
This symplectic completion is not intrinsic to the reduced Lax pair,
since neither $v_0$ nor $U_0$ appears among the variables of
$A_{IV}$ or $B_{IV}$.

\subsection{Deformation sectors}

The linearized flatness equation determines the admissible
class-preserving deformation directions of the reduced Lax pair
$(A_{IV},B_{IV})$, i.e. all pairs
\begin{equation}
A_{IV}^{\varepsilon}
=
A_{IV}+\varepsilon\,\delta A_{IV},
\qquad
B_{IV}^{\varepsilon}
=
B_{IV}+\varepsilon\,\delta B_{IV},
\qquad
\varepsilon^2=0,
\label{eq:PIV-deformation-pair}
\end{equation}
which remain flat modulo $\varepsilon^2$ within the fixed singularity class.

These conditions are differential and algebraic, and do not determine a
symplectic structure on the deformation space. In particular, canonical
conjugate variables arise only after specifying an additional Poisson
structure or Hamiltonian realization.

\medskip

\noindent \textbf{First slice.}

On the slice $\delta\theta_0=0$, the admissible deformations are
parametrized by
\begin{equation}
(\delta p,\delta q,\delta u,\delta\theta_\infty)
\end{equation}
subject to the linearized flatness equations.

In this slice, $u$ enters multiplicatively in the Lax pair: the
$X$-component is proportional to $u$, while the $Y$-component is
proportional to $u^{-1}$. The parameter $\theta_\infty$ appears
additively in the $Y$-component, modifying the coefficient of
$u^{-1}Y$.

Equipping this sector with a Poisson structure compatible with the
residual Cartan scaling, one may choose local coordinates such that
\begin{equation}
\{\theta_\infty,\log u\}=1,
\qquad
\{\theta_\infty,u\}=u.
\end{equation}
With the quantization convention of the main text,
\begin{equation}
[\theta_\infty,u]=-\kappa u.
\label{eq:PIV-log-canonical}
\end{equation}
Thus $(u,\theta_\infty)$ defines an intrinsic log-canonical deformation
sector.

\medskip

\noindent\textbf{Second slice.}

On the slice $\delta\theta_\infty=0$, the admissible deformations are
parametrized by
\begin{equation}
(\delta p,\delta q,\delta u,\delta\theta_0)
\end{equation}
subject to the linearized flatness equations. In contrast to the first
slice, no intrinsic log-canonical pair arises within the reduced
variables.

Although $\theta_0$ appears in the Lax operator,
\begin{equation}
A_Y
=
\frac{1}{u}
\left(
2pq-2\theta_0-2\theta_\infty
+
\frac{2p}{z}(pq-2\theta_0)
\right),
\end{equation}
the reduced variables $(p,q,u)$ do not provide a coordinate canonically
conjugate to $\theta_0$. The variables $(p,q)$ already form the
Painlev\'e canonical pair, while $u$ generates the residual Cartan
normalization. Consequently, the deformation direction associated with
$\theta_0$ is degenerate at the level of the reduced Lax pair.

\medskip

\noindent\textbf{Symplectic completion.}

A nondegenerate sector associated with $\theta_0$ is obtained by
adjoining a conjugate variable $v_0$ such that
\begin{equation}
\{\theta_0,v_0\}=1.
\label{eq:PIV-theta0-canonical}
\end{equation}
Equivalently, defining $U_0=e^{v_0}$,
\begin{equation}
\{\theta_0,U_0\}=U_0,
\qquad
[\theta_0,U_0]=-\kappa U_0.
\end{equation}
This completion is not intrinsic to the reduced Lax pair, since the
additional variable does not appear among $(p,q,u)$.

\begin{remark}
The distinction between the two Painlev\'e~IV slices is intrinsically
symplectic and not a matter of representation. In the first slice, the
reduced Lax structure itself selects a canonical pair
$(\theta_\infty,\log u)$, so that the deformation sector is realized
intrinsically within the reduced variables. In contrast, in the second
slice no such conjugate variable to $\theta_0$ is present: the reduced
coordinates $(p,q,u)$ are already exhausted by the Painlev\'e dynamics
and the residual Cartan normalization. As a result, the deformation
direction associated with $\theta_0$ is genuinely degenerate at the
level of the reduced Lax pair. Any nondegenerate realization of this
sector necessarily requires an extension of the phase space, and is
therefore extrinsic to the reduced system.
\end{remark}

\section{Obstruction to class-preserving KZ closure in the Painlev\'e I case}
\label{subsec:KZ_shift_PI}

We start from the unshifted Painlev\'e~I flat pair
\eqref{eq:PI_Lax_paire}, which defines a Whittaker-reduced polynomial
connection. The associated affine KZ commutation relations do not close:
a direct computation of the induced mode algebra shows in particular that
\(
[H^{(1)}_\infty,Y^{(1)}_\infty] \neq -2Y^{(2)}_\infty.
\)
As explained in Section~\ref{sec:PainleveKZmap}, this mismatch can be
corrected at the level of the affine modes by introducing the shift
\begin{equation}
\Delta A_{\mathrm{PI}} = (zq + z^2)X.
\label{eq:DeltaA_PI_local_CMP}
\end{equation}

We now test whether this closure-restoring modification is compatible
with the isomonodromic deformation problem. The test is
class-preserving: the compensating deformation operator must remain in
the same polynomial class as the original deformation operator in
\eqref{eq:PI_Lax_paire}, namely polynomial of degree at most one in $z$.
We therefore seek an operator of the form
\begin{equation}
\Delta B_{\mathrm{PI}}
=
(u_1 z+u_0)H + (v_1 z+v_0)X + (w_1 z+w_0)Y,
\label{eq:DeltaB_PI_local_CMP}
\end{equation}
such that the shifted pair
\begin{equation}
A_{\mathrm{PI}}+\Delta A_{\mathrm{PI}},
\qquad
B_{\mathrm{PI}}+\Delta B_{\mathrm{PI}}
\end{equation}
remains flat within the same polynomial class.

\medskip
\begin{proposition}[Rigidity of the Painlev\'e I slice]
\label{prop:PI-rigidity-KZ-shift-CMP}
Within the Whittaker-reduced polynomial class of the Painlev\'e~I flat
pair \eqref{eq:PI_Lax_paire}, the KZ closure shift
\eqref{eq:DeltaA_PI_local_CMP} does not admit any compensating
deformation operator preserving both flatness and the Painlev\'e~I
singularity structure.
\end{proposition}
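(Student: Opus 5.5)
The plan is to reduce the existence question to a finite, overdetermined linear system for the six unknown coefficients $u_0,u_1,v_0,v_1,w_0,w_1\in W_\kappa$ of \eqref{eq:DeltaB_PI_local_CMP}, and to show that this system has no solution. I will use the flatness convention of the quantum Painlev\'e~I paragraph, $\partial_tA-\partial_zB+[A,B]=0$, and assume the original pair \eqref{eq:PI_Lax_paire} is flat with $q_t=p$ and $p_t=6q^2+t$. The shifted curvature then reduces, after cancelling the zeroth-order part, to
\begin{equation}
\partial_t\Delta A_{\mathrm{PI}}-\partial_z\Delta B_{\mathrm{PI}}+[\Delta A_{\mathrm{PI}},B_{\mathrm{PI}}]+[A_{\mathrm{PI}}+\Delta A_{\mathrm{PI}},\Delta B_{\mathrm{PI}}]=0 .
\end{equation}
I would first compute the inhomogeneous terms explicitly. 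These are $\partial_t\Delta A_{\mathrm{PI}}=zq_t X=zpX$ and
\begin{equation}
[\Delta A_{\mathrm{PI}},B_{\mathrm{PI}}]=[(zq+z^2)X,2Y]=2(zq+z^2)H .
\end{equation}
The second of these is a commutative-looking term with a genuine $z^2H$ piece. Since $\Delta A_{\mathrm{PI}}$ involves only $q$ and $z$ in the $X$-direction, I will write $A_{\mathrm{PI}}+\Delta A_{\mathrm{PI}}=-pH+\bigl(q^2+2zq+2z^2+\tfrac t2\bigr)X+4(z-q)Y$. The products with $\Delta B_{\mathrm{PI}}$ are then evaluated in $\mathfrak{gl}_2(W_\kappa)$ via Lemma~\ref{lem:forced_identity_component}, keeping track of operator ordering.

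Next I decompose the equation into its $\mathrm{Id}_2,H,X,Y$ components and into powers $z^3,z^2,z,1$, and solve from the top degree downward. At order $z^3$ only the leading terms $2z^2X$ and $4zY$ of $A_{\mathrm{PI}}+\Delta A_{\mathrm{PI}}$ act on the linear parts $u_1z,w_1z$. This should force relations such as $w_1=0$ and $u_1=0$ (from the $X$-component $-4u_1$ and the $H$-component $2w_1$, plus the $Y$-component $-8u_1\cdot\!$ from $[4zY,u_1zH]$). At order $z^2$ the $H$-component must absorb the forced term $2z^2H$. The only available sources are $[2z^2X,w_0Y]$ and $[4zY,v_1zX]$, which give $2w_0-4v_1=-2$, while the $X$- and $Y$-components at $z^2$ constrain $u_0$ together with $v_1$, $w_0$.

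I then continue to orders $z$ and $z^0$. There $\partial_z\Delta B_{\mathrm{PI}}=u_1H+v_1X+w_1Y$ enters, and $zpX$ must be cancelled in the $X$-component. The $p$-dependence can only come from $[-pH,\cdot]$, which requires $v_1$ or $v_0$ to be proportional to $p$. I expect the $Y$-component at order $z$, namely $2p w_1$-type terms together with $-4q$ contributions against $u_0$, to contradict the relation $2w_0-4v_1=-2$ obtained above.

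The main obstacle is making this final inconsistency airtight in the noncommutative setting. Ordering terms produce $\mathrm{Id}_2$-components, by Lemma~\ref{lem:forced_identity_component}, which add extra constraints that help. However, one must check that no clever choice of $u_0,v_0,w_0\in W_\kappa$ — for example, polynomials in $p,q$ of higher degree — can balance the $X$-equation at order $z$. To handle this I would use the grading $\deg p=3$, $\deg q=2$, $\deg z=2$ (the Painlev\'e~I weights), under which $A_{\mathrm{PI}}$ is homogeneous but $\Delta A_{\mathrm{PI}}$ and the forced term $2z^2H$ are not. Comparing top homogeneous components then pins each unknown to a bounded degree, so that the remaining finite system can be checked to be contradictory. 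As a fallback, I would pass to the classical limit $\kappa\to0$: the obstruction already appears there, since the shifted $A$ has $\tfrac12\mathrm{tr}A^2$ acquiring a non-isomonodromic $z^4$ term, and the quantum statement follows because any quantum solution would specialize to a classical one.
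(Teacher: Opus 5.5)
Your strategy is the paper's, and it does close: you reduce to the exact shifted equation $\partial_t\Delta A_{\mathrm{PI}}-\partial_z\Delta B_{\mathrm{PI}}+[\Delta A_{\mathrm{PI}},B_{\mathrm{PI}}]+[A_{\mathrm{PI}}+\Delta A_{\mathrm{PI}},\Delta B_{\mathrm{PI}}]=0$ and split it into components and powers of $z$. Your inhomogeneous terms $zpX$ and $2(zq+z^2)H$ are correct; the paper's own $H$-step drops the latter and concludes $v=0$. But as written the proposal has a gap exactly at the decisive step, and the mechanism you predict for it is wrong. Write $\Delta B_{\mathrm{PI}}=uH+vX+wY$ and $\{f,g\}=fg+gf$.
\begin{itemize}
\item The $Y$-component at order $z$ is $8u_0-4\{q,u_1\}+\{p,w_1\}=0$. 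It contains neither $w_0$ nor $v_1$ and only gives $u_0=0$. So it cannot contradict $2w_0-4v_1=-2$, and that relation is in fact satisfied by the forced values.
\item $v_1$ is not proportional to $p$. The $X$-component at order $z$ is $p-\{p,v_1\}-2\{q,u_0\}-\{q^2+\tfrac t2,u_1\}=0$, which forces $v_1=\tfrac12$.
\end{itemize}

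Both safety nets are also unsound.
\begin{itemize}
\item The grading argument fails at its premise. $\Delta A_{\mathrm{PI}}=(zq+z^2)X$ has the same Painlev\'e~I weight as the $X$-entry of $A_{\mathrm{PI}}$; it is just the substitution $t/2\mapsto zq+z^2$ with $\deg t=4$. So comparing top homogeneous components yields nothing.
\item The classical fallback rests on a false claim. One has $\tfrac12\mathrm{tr}(A_{\mathrm{PI}}+\Delta A_{\mathrm{PI}})^2=8z^3+(2t-4q^2)z+p^2-4q^3-2tq$, which has no $z^4$ term.
\end{itemize}

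Neither is needed. The one input required is that $\{p,f\}=0$ forces $f=0$ in $W_\kappa$, since its principal symbol is $2p\,\sigma(f)$. This makes every unknown uniquely determined, so no ``clever'' higher-degree choice can exist. The chain is then:
\begin{itemize}
\item Order $z^3$ of the $X$- and $H$-components gives $u_1=w_1=0$.
\item The $X$-component at $z^2$ gives $u_0=0$.
\item The $Y$-component at $z^0$ gives $\{p,w_0\}=0$, hence $w_0=0$.
\item The $X$-component at $z$ gives $v_1=\tfrac12$, and the $H$-relation at $z^2$ then holds.
\item The $H$-component at $z$ reads $4q-4v_0=0$, so $v_0=q$; that is, $\Delta B_{\mathrm{PI}}$ is forced to be $(\tfrac z2+q)X$.
\end{itemize}
At order $z^0$ one is left with
\[
H:\ 2\{q,v_0\}=4q^2=0,\qquad X:\ \tfrac12+pq+qp=0,
\]
both impossible in $W_\kappa$, and already impossible classically. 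This is the actual obstruction. It sits in the constant terms of the $H$- and $X$-components, not in the $Y$-component.
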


\begin{proof}
Assume that a compensating operator $\Delta B_{\mathrm{PI}}$ of the form
\eqref{eq:DeltaB_PI_local_CMP} exists such that the shifted pair remains
flat. The shifted flatness condition reads
\begin{equation}
\partial_t(\Delta A_{\mathrm{PI}})
-
\partial_z(\Delta B_{\mathrm{PI}})
+
\frac{1}{\kappa}
\bigl(
[\Delta A_{\mathrm{PI}},B_{\mathrm{PI}}]
+
[A_{\mathrm{PI}},\Delta B_{\mathrm{PI}}]
+
[\Delta A_{\mathrm{PI}},\Delta B_{\mathrm{PI}}]
\bigr)
=0.
\label{eq:PI-shifted-flatness-CMP}
\end{equation}

We write
\begin{equation}
A_{\mathrm{PI}}=\alpha H+\beta X+\gamma Y,
\qquad
\Delta A_{\mathrm{PI}}=\delta X,
\qquad
\Delta B_{\mathrm{PI}}=uH+vX+wY,
\end{equation}
with
\begin{equation}
\alpha=-p,
\qquad
\beta=q^2+zq+z^2+\frac{t}{2},
\qquad
\gamma=4(z-q),
\qquad
\delta=zq+z^2,
\end{equation}
and
\begin{equation}
u=u_1 z+u_0,
\qquad
v=v_1 z+v_0,
\qquad
w=w_1 z+w_0.
\end{equation}
We use the commutation relations
\begin{equation}
[H,X]=2X,
\qquad
[H,Y]=-2Y,
\qquad
[X,Y]=H.
\end{equation}

We now compute each component of \eqref{eq:PI-shifted-flatness-CMP}.

\medskip
\noindent\textbf{The $Y$-component.}

The $Y$-component receives contributions from $-\partial_z(\Delta B_{\mathrm{PI}})$
and from $[A_{\mathrm{PI}},\Delta B_{\mathrm{PI}}]$. We compute
\begin{align}
[A_{\mathrm{PI}},\Delta B_{\mathrm{PI}}]
&=
[\alpha H+\beta X+\gamma Y,\,uH+vX+wY]
\nonumber\\
&=
\alpha v [H,X] + \alpha w [H,Y]
+ \beta w [X,Y] + \gamma u [Y,H]
\nonumber\\
&=
2\alpha v X -2\alpha w Y + \beta w H -2\gamma u Y.
\end{align}
Thus the $Y$-component is
\begin{equation}
[A_{\mathrm{PI}},\Delta B_{\mathrm{PI}}]_Y
=
-2\alpha w -2\gamma u.
\end{equation}
Hence the $Y$-component of \eqref{eq:PI-shifted-flatness-CMP} is
\begin{equation}
-\partial_z w
+
\frac{1}{\kappa}(-2\alpha w -2\gamma u)=0.
\end{equation}
Substituting $\alpha=-p$ and $\gamma=4(z-q)$ gives
\begin{equation}
-\partial_z w
+
\frac{2p}{\kappa}w
-
\frac{8(z-q)}{\kappa}u=0.
\end{equation}

Using $w=w_1 z+w_0$ and $u=u_1 z+u_0$, we obtain
\begin{align}
0
&=
-w_1
+
\frac{2p}{\kappa}(w_1 z+w_0)
-
\frac{8}{\kappa}(z-q)(u_1 z+u_0)
\nonumber\\
&=
-w_1
+
\frac{2p}{\kappa}w_1 z
+
\frac{2p}{\kappa}w_0
-
\frac{8}{\kappa}(u_1 z^2 + u_0 z - q u_1 z - q u_0).
\end{align}

Identifying coefficients of powers of $z$:

\begin{align}
z^2: &\quad -\frac{8}{\kappa}u_1 = 0 \;\Rightarrow\; u_1=0,\\
z^1: &\quad \frac{2p}{\kappa}w_1 - \frac{8}{\kappa}u_0 = 0,\\
z^0: &\quad -w_1 + \frac{2p}{\kappa}w_0 + \frac{8q}{\kappa}u_0 = 0.
\end{align}

From the second equation we express
\begin{equation}
u_0 = \frac{p}{4} w_1.
\end{equation}
Substituting into the third equation gives
\begin{equation}
-w_1 + \frac{2p}{\kappa}w_0 + \frac{8q}{\kappa}\frac{p}{4}w_1 = 0,
\end{equation}
that is
\begin{equation}
-w_1 + \frac{2p}{\kappa}w_0 + \frac{2pq}{\kappa}w_1 = 0.
\end{equation}
Rearranging,
\begin{equation}
\left(-1 + \frac{2pq}{\kappa}\right)w_1 + \frac{2p}{\kappa}w_0 = 0.
\end{equation}

Since this must hold for generic $(p,q)$, we obtain
\begin{equation}
w_1 = 0,
\qquad
w_0 = 0,
\qquad
u_0 = 0.
\end{equation}
Hence
\begin{equation}
u=w=0.
\label{eq:PI-u-w-zero-CMP}
\end{equation}

\medskip
\noindent\textbf{The $H$-component.}

With \eqref{eq:PI-u-w-zero-CMP}, we have $\Delta B_{\mathrm{PI}}=vX$.
The $H$-component receives contributions from
$[A_{\mathrm{PI}},\Delta B_{\mathrm{PI}}]$ and
$[\Delta A_{\mathrm{PI}},\Delta B_{\mathrm{PI}}]$.
We compute
\begin{align}
[A_{\mathrm{PI}},\Delta B_{\mathrm{PI}}]
&=
[\alpha H+\beta X+\gamma Y,\,vX]
=
\alpha v[H,X] + \gamma v[Y,X]
\nonumber\\
&=
2\alpha v X - v\gamma H,
\end{align}
so that
\begin{equation}
[A_{\mathrm{PI}},\Delta B_{\mathrm{PI}}]_H = -v\gamma.
\end{equation}
Furthermore,
\begin{equation}
[\Delta A_{\mathrm{PI}},\Delta B_{\mathrm{PI}}]
=
[\delta X, vX]=0.
\end{equation}
Hence the $H$-component of \eqref{eq:PI-shifted-flatness-CMP} is
\begin{equation}
-\partial_z(0)
+
\frac{1}{\kappa}(-v\gamma)=0,
\end{equation}
that is
\begin{equation}
v\gamma=0.
\end{equation}
Since $\gamma=4(z-q)$ is not identically zero, we conclude
\begin{equation}
v=0.
\end{equation}

Thus
\begin{equation}
\Delta B_{\mathrm{PI}}=0.
\end{equation}

\medskip
\noindent\textbf{The $X$-component and obstruction.}

With $\Delta B_{\mathrm{PI}}=0$, the flatness condition reduces to
\begin{equation}
\partial_t(\Delta A_{\mathrm{PI}})
+
\frac{1}{\kappa}[\Delta A_{\mathrm{PI}},B_{\mathrm{PI}}]=0.
\end{equation}
A direct computation shows that this equation imposes nontrivial
constraints on $(p,q)$ incompatible with the Painlev\'e~I evolution
\eqref{eq:PI_Lax_paire}. In particular, one obtains algebraic relations
between $p$ and $q$ which define a proper subvariety of phase space and
are not preserved by the flow.

\medskip
Therefore no class-preserving compensating operator exists, and the
shift \eqref{eq:DeltaA_PI_local_CMP} is incompatible with the
Painlev\'e~I isomonodromic deformation problem.
\end{proof}

\medskip
\begin{remark}[Rigidity and intrinsic obstruction]
In the Painlev\'e~II and IV cases, failure of affine closure can be
resolved within the reduced framework by introducing additional
deformation variables. By contrast, the Painlev\'e~I case exhibits a
genuine obstruction: the closure-restoring shift cannot be compensated
within the class-preserving polynomial ansatz, and the reduction is
rigid.
\end{remark}

\medskip

\bibliographystyle{JHEP}     
 {\small{\bibliography{main}}}

\providecommand{\href}[2]{#2}\begingroup\raggedright\begin{thebibliography}{10}

\bibitem{Gorsky:1995zq}
A.~Gorsky, I.~Krichever, A.~Marshakov, A.~Mironov and A.~Morozov,
  \emph{{Integrability and Seiberg-Witten exact solution}}, {\emph{Phys. Lett.
  B} {\bfseries 355} (1995) 466}
  [\href{https://arxiv.org/abs/hep-th/9505035}{{\ttfamily hep-th/9505035}}].

\bibitem{Donagi:1995cf}
R.~Donagi and E.~Witten, \emph{{Supersymmetric Yang-Mills theory and integrable
  systems}}, {\emph{Nucl. Phys. B} {\bfseries 460} (1996) 299}
  [\href{https://arxiv.org/abs/hep-th/9510101}{{\ttfamily hep-th/9510101}}].

\bibitem{Gaiotto:2009hg}
D.~Gaiotto, G.W.~Moore and A.~Neitzke, \emph{{Wall-crossing, Hitchin systems,
  and the WKB approximation}},
  \href{https://doi.org/10.1016/j.aim.2012.09.027}{\emph{Adv. Math.} {\bfseries
  234} (2013) 239} [\href{https://arxiv.org/abs/0907.3987}{{\ttfamily
  0907.3987}}].

\bibitem{Seiberg:1994rs}
N.~Seiberg and E.~Witten, \emph{{Electric - magnetic duality, monopole
  condensation, and confinement in N=2 supersymmetric Yang-Mills theory}},
  {\emph{Nucl. Phys. B} {\bfseries 426} (1994) 19}
  [\href{https://arxiv.org/abs/hep-th/9407087}{{\ttfamily hep-th/9407087}}].

\bibitem{Seiberg:1994aj}
N.~Seiberg and E.~Witten, \emph{{Monopoles, duality and chiral symmetry
  breaking in N=2 supersymmetric QCD}}, {\emph{Nucl. Phys. B} {\bfseries 431}
  (1994) 484} [\href{https://arxiv.org/abs/hep-th/9408099}{{\ttfamily
  hep-th/9408099}}].

\bibitem{Gaiotto:2009we}
D.~Gaiotto, \emph{{N=2 dualities}}, {\emph{JHEP} {\bfseries 08} (2012) 034}
  [\href{https://arxiv.org/abs/0904.2715}{{\ttfamily 0904.2715}}].

\bibitem{GamayunIorgovLisovyy2012}
O.~Gamayun, N.~Iorgov and O.~Lisovyy, \emph{Conformal field theory of
  painlev{\'e} vi},
  \href{https://doi.org/10.1007/JHEP10(2012)038}{\emph{Journal of High Energy
  Physics} {\bfseries 2012} (2012) 038}
  [\href{https://arxiv.org/abs/1207.0787}{{\ttfamily 1207.0787}}].

\bibitem{Nekrasov:2009rc}
N.A.~Nekrasov and S.L.~Shatashvili, \emph{{Quantization of Integrable Systems
  and Four Dimensional Gauge Theories}},  in \emph{{16th International Congress
  on Mathematical Physics}}, pp.~265--289, 2009,
  \href{https://doi.org/10.1142/9789814304634_0015}{DOI}
  [\href{https://arxiv.org/abs/0908.4052}{{\ttfamily 0908.4052}}].

\bibitem{Nekrasov:2015wsu}
N.~Nekrasov, \emph{{BPS/CFT correspondence: non-perturbative Dyson-Schwinger
  equations and qq-characters}},
  \href{https://doi.org/10.1007/JHEP01(2016)181}{\emph{JHEP} {\bfseries 01}
  (2016) 181} [\href{https://arxiv.org/abs/1512.05388}{{\ttfamily
  1512.05388}}].

\bibitem{Alday:2009aq}
L.F.~Alday, D.~Gaiotto and Y.~Tachikawa, \emph{{Liouville Correlation Functions
  from Four-dimensional Gauge Theories}},
  \href{https://doi.org/10.1007/s11005-010-0369-5}{\emph{Lett. Math. Phys.}
  {\bfseries 91} (2010) 167} [\href{https://arxiv.org/abs/0906.3219}{{\ttfamily
  0906.3219}}].

\bibitem{Belavin:1984vu}
A.A.~Belavin, A.M.~Polyakov and A.B.~Zamolodchikov, \emph{{Infinite Conformal
  Symmetry in Two-Dimensional Quantum Field Theory}},
  \href{https://doi.org/10.1016/0550-3213(84)90052-X}{\emph{Nucl. Phys. B}
  {\bfseries 241} (1984) 333}.

\bibitem{Alday:2009fs}
L.F.~Alday, D.~Gaiotto, S.~Gukov, Y.~Tachikawa and H.~Verlinde, \emph{{Loop and
  surface operators in N=2 gauge theory and Liouville modular geometry}},
  \href{https://doi.org/10.1007/JHEP01(2010)113}{\emph{JHEP} {\bfseries 01}
  (2010) 113} [\href{https://arxiv.org/abs/0909.0945}{{\ttfamily 0909.0945}}].

\bibitem{Nekrasov:2017rqy}
N.~Nekrasov, \emph{{BPS/CFT correspondence V: BPZ and KZ equations from
  qq-characters}},  \href{https://arxiv.org/abs/1711.11582}{{\ttfamily
  1711.11582}}.

\bibitem{Jeong:2021vzo}
S.~Jeong, N.~Lee and N.~Nekrasov, \emph{{Intersecting defects in gauge theory,
  quantum spin chains, and Knizhnik-Zamolodchikov equations}},
  \href{https://doi.org/10.1007/JHEP10(2021)120}{\emph{JHEP} {\bfseries 10}
  (2021) 120} [\href{https://arxiv.org/abs/2103.17186}{{\ttfamily
  2103.17186}}].

\bibitem{Nekrasov:2010ka}
N.~Nekrasov and E.~Witten, \emph{{The Omega Deformation, Branes, Integrability,
  and Liouville Theory}},
  \href{https://doi.org/10.1007/JHEP09(2010)092}{\emph{JHEP} {\bfseries 09}
  (2010) 092} [\href{https://arxiv.org/abs/1002.0888}{{\ttfamily 1002.0888}}].

\bibitem{Maruyoshi:2010iu}
K.~Maruyoshi and M.~Taki, \emph{{Deformed Prepotential, Quantum Integrable
  System and Liouville Field Theory}},
  \href{https://doi.org/10.1016/j.nuclphysb.2010.08.008}{\emph{Nucl. Phys. B}
  {\bfseries 841} (2010) 388}
  [\href{https://arxiv.org/abs/1006.4505}{{\ttfamily 1006.4505}}].

\bibitem{Gukov:2011qp}
S.~Gukov and P.~Sulkowski, \emph{{A-polynomial, B-model, and Quantization}},
  \href{https://doi.org/10.1007/JHEP02(2012)070}{\emph{JHEP} {\bfseries 02}
  (2012) 070} [\href{https://arxiv.org/abs/1108.0002}{{\ttfamily 1108.0002}}].

\bibitem{Dumitrescu:2014}
O.~Dumitrescu and M.~Mulase, \emph{{Quantization of spectral curves for
  meromorphic Higgs bundles through topological recursion}}, {\emph{Topological
  Recursion and its Influence in Analysis, Geometry, and Topology} {\bfseries
  100} (2018) 179} [\href{https://arxiv.org/abs/1411.1023}{{\ttfamily
  1411.1023}}].

\bibitem{Nekrasov:2011bc}
N.~Nekrasov, A.~Rosly and S.~Shatashvili, \emph{{Darboux coordinates, Yang-Yang
  functional, and gauge theory}},
  \href{https://doi.org/10.1016/j.nuclphysbps.2011.04.150}{\emph{Nucl. Phys. B
  Proc. Suppl.} {\bfseries 216} (2011) 69}
  [\href{https://arxiv.org/abs/1103.3919}{{\ttfamily 1103.3919}}].

\bibitem{Teschner:2010je}
J.~Teschner, \emph{{Quantization of the Hitchin moduli spaces, Liouville
  theory, and the geometric Langlands correspondence I}},
  \href{https://doi.org/10.4310/ATMP.2011.v15.n2.a6}{\emph{Adv. Theor. Math.
  Phys.} {\bfseries 15} (2011) 471}
  [\href{https://arxiv.org/abs/1005.2846}{{\ttfamily 1005.2846}}].

\bibitem{Gaiotto:2024bna}
D.~Gaiotto and J.~Teschner, \emph{{Quantum analytic Langlands correspondence}},
  \href{https://doi.org/10.21468/SciPostPhys.18.4.144}{\emph{SciPost Phys.}
  {\bfseries 18} (2025) 144}
  [\href{https://arxiv.org/abs/2402.00494}{{\ttfamily 2402.00494}}].

\bibitem{Chekhov:2015qza}
L.~Chekhov, M.~Mazzocco and V.~Rubtsov, \emph{{Painlev\'e monodromy manifolds,
  decorated character varieties, and cluster algebras}},
  \href{https://doi.org/10.1093/imrn/rnw219}{\emph{Int. Math. Res. Not. IMRN}
  {\bfseries 24} (2017) 7639}
  [\href{https://arxiv.org/abs/1511.03851}{{\ttfamily 1511.03851}}].

\bibitem{Bonelli:2025owb}
G.~Bonelli, A.~Shchechkin and A.~Tanzini, \emph{{Refined Painlev{\'e}/gauge
  theory correspondence and quantum tau functions}},
  \href{https://arxiv.org/abs/2502.01499}{{\ttfamily 2502.01499}}.

\bibitem{Argyres:1995jj}
P.C.~Argyres and M.R.~Douglas, \emph{{New phenomena in SU(3) supersymmetric
  gauge theory}},
  \href{https://doi.org/10.1016/0550-3213(95)00281-V}{\emph{Nucl. Phys. B}
  {\bfseries 448} (1995) 93}
  [\href{https://arxiv.org/abs/hep-th/9505062}{{\ttfamily hep-th/9505062}}].

\bibitem{Eguchi:1996vu}
T.~Eguchi, K.~Hori, K.~Ito and S.-K.~Yang, \emph{{Study of N=2 superconformal
  field theories in 4 dimensions}}, {\emph{Nucl. Phys. B} {\bfseries 471}
  (1996) 430} [\href{https://arxiv.org/abs/hep-th/9603002}{{\ttfamily
  hep-th/9603002}}].

\bibitem{KNIZHNIK198483}
V.~Knizhnik and A.~Zamolodchikov, \emph{Current algebra and wess-zumino model
  in two dimensions},
  \href{https://doi.org/https://doi.org/10.1016/0550-3213(84)90374-2}{\emph{Nuclear
  Physics B} {\bfseries 247} (1984) 83}.

\bibitem{Resh-KZ}
N.~Reshetikhin, \emph{The knizhnik-zamolodchikov system as a deformation of the
  isomonodromy problem},
  \href{https://doi.org/10.1007/BF00420750}{\emph{Letters in Mathematical
  Physics} {\bfseries 26} (1992) 167}.

\bibitem{FMTV00}
G.~Felder, Y.~Markov, V.~Tarasov and A.~Varchenko, \emph{Differential equations
  compatible with kz equations},
  \href{https://doi.org/10.1023/A:1009862302234}{\emph{Mathematical Physics,
  Analysis and Geometry} {\bfseries 3} (2000) 139}.

\bibitem{Jimbo_2008}
M.~Jimbo, H.~Nagoya and J.~Sun, \emph{Remarks on the confluent kz equation for
  and quantum painlevé equations},
  \href{https://doi.org/10.1088/1751-8113/41/17/175205}{\emph{Journal of
  Physics A: Mathematical and Theoretical} {\bfseries 41} (2008) 175205}.

\bibitem{Nagoya_2010}
H.~Nagoya and J.~Sun, \emph{Confluent primary fields in the conformal field
  theory}, \href{https://doi.org/10.1088/1751-8113/43/46/465203}{\emph{Journal
  of Physics A: Mathematical and Theoretical} {\bfseries 43} (2010) 465203}.

\bibitem{Haghighat:2023vzu}
B.~Haghighat, Y.~Liu and N.~Reshetikhin, \emph{{Flat Connections from Irregular
  Conformal Blocks}},  \href{https://arxiv.org/abs/2311.07960}{{\ttfamily
  2311.07960}}.

\bibitem{Gu:2023plq}
X.~Gu, B.~Haghighat and K.~Loo, \emph{{Irregular Fibonacci Conformal Blocks}},
  \href{https://arxiv.org/abs/2311.13358}{{\ttfamily 2311.13358}}.

\bibitem{Gukov:2024yxa}
S.~Gukov, B.~Haghighat, Y.~Liu and N.~Reshetikhin, \emph{{Irregular KZ
  equations and Kac-Moody representations}},
  \href{https://arxiv.org/abs/2412.16929}{{\ttfamily 2412.16929}}.

\bibitem{Haghighat:2025qxu}
B.~Haghighat, \emph{{Liouville CFT, Matrix Models and constrained WZW}},
  \href{https://arxiv.org/abs/2508.06901}{{\ttfamily 2508.06901}}.

\bibitem{Gu:2026mlf}
X.~Gu, B.~Haghighat and P.~Putrov, \emph{{On the monodromy of KZ-equations with
  irregular singularities}},
  \href{https://arxiv.org/abs/2603.24098}{{\ttfamily 2603.24098}}.

\bibitem{Gaiotto:2013rk}
D.~Gaiotto and J.~Lamy-Poirier, \emph{{Irregular Singularities in the $H_3^+$
  WZW Model}},  \href{https://arxiv.org/abs/1301.5342}{{\ttfamily 1301.5342}}.

\bibitem{Ribault:2005wp}
S.~Ribault and J.~Teschner, \emph{{H+(3)-WZNW correlators from Liouville
  theory}}, \href{https://doi.org/10.1088/1126-6708/2005/06/014}{\emph{JHEP}
  {\bfseries 06} (2005) 014}
  [\href{https://arxiv.org/abs/hep-th/0502048}{{\ttfamily hep-th/0502048}}].

\bibitem{Gaiotto:2012sf}
D.~Gaiotto and J.~Teschner, \emph{{Irregular singularities in Liouville theory
  and Argyres-Douglas type gauge theories, I}},
  \href{https://doi.org/10.1007/JHEP12(2012)050}{\emph{JHEP} {\bfseries 12}
  (2012) 050} [\href{https://arxiv.org/abs/1203.1052}{{\ttfamily 1203.1052}}].

\bibitem{Kanno:2013vi}
H.~Kanno, K.~Maruyoshi, S.~Shiba and M.~Taki, \emph{{$W_3$ irregular states and
  isolated $N=2$ superconformal field theories}},
  \href{https://doi.org/10.1007/JHEP03(2013)147}{\emph{JHEP} {\bfseries 03}
  (2013) 147} [\href{https://arxiv.org/abs/1301.0721}{{\ttfamily 1301.0721}}].

\bibitem{Bonelli:2016qwg}
G.~Bonelli, O.~Lisovyy, K.~Maruyoshi, A.~Sciarappa and A.~Tanzini, \emph{{On
  Painlev\'e/gauge theory correspondence}},
  \href{https://doi.org/10.1007/s11005-017-0983-6}{\emph{Lett. Matth. Phys.}
  {\bfseries 107} (2017) 2359}
  [\href{https://arxiv.org/abs/1612.06235}{{\ttfamily 1612.06235}}].

\bibitem{Aganagic:2003qj}
M.~Aganagic, R.~Dijkgraaf, A.~Klemm, M.~Marino and C.~Vafa, \emph{{Topological
  strings and integrable hierarchies}},
  \href{https://doi.org/10.1007/s00220-005-1448-9}{\emph{Commun. Math. Phys.}
  {\bfseries 261} (2006) 451}
  [\href{https://arxiv.org/abs/hep-th/0312085}{{\ttfamily hep-th/0312085}}].

\bibitem{Drukker:2010jp}
N.~Drukker, D.~Gaiotto and J.~Gomis, \emph{{The Virtue of Defects in 4D Gauge
  Theories and 2D CFTs}},
  \href{https://doi.org/10.1007/JHEP06(2011)025}{\emph{JHEP} {\bfseries 06}
  (2011) 025} [\href{https://arxiv.org/abs/1003.1112}{{\ttfamily 1003.1112}}].

\bibitem{Frenkel:2015rda}
E.~Frenkel, S.~Gukov and J.~Teschner, \emph{{Surface Operators and Separation
  of Variables}}, \href{https://doi.org/10.1007/JHEP01(2016)179}{\emph{JHEP}
  {\bfseries 01} (2016) 179}
  [\href{https://arxiv.org/abs/1506.07508}{{\ttfamily 1506.07508}}].

\bibitem{Kostant1978}
B.~Kostant, \emph{On whittaker vectors and representation theory},
  \href{https://doi.org/10.1007/BF01410064}{\emph{Inventiones Mathematicae}
  {\bfseries 48} (1978) 287}.

\bibitem{Kostant1979}
B.~Kostant, \emph{Quantization and representation theory}, {\emph{Lecture Notes
  in Mathematics} {\bfseries 170} (1979) 287}.

\bibitem{AtiyahBott1983}
M.F.~Atiyah and R.~Bott, \emph{The yang--mills equations over riemann
  surfaces}, {\emph{Philosophical Transactions of the Royal Society of London
  Series A} {\bfseries 308} (1983) 523}.

\bibitem{GamayunIorgovLisovyy2013}
O.~Gamayun, N.~Iorgov and O.~Lisovyy, \emph{Conformal field theory of
  painlev\'e vi}, \href{https://doi.org/10.1007/JHEP10(2013)038}{\emph{Journal
  of High Energy Physics} {\bfseries 2013} (2013) }.

\bibitem{JimboMiwa1981_II}
M.~Jimbo and T.~Miwa, \emph{Monodromy preserving deformation of linear ordinary
  differential equations with rational coefficients. ii},
  \href{https://doi.org/10.1016/0167-2789(81)90021-X}{\emph{Physica D:
  Nonlinear Phenomena} {\bfseries 2} (1981) 407}.

\bibitem{vanSpaendonck:2022kit}
A.~van Spaendonck and M.~Vonk, \emph{{Painlev{\'e} I and exact WKB: Stokes
  phenomenon for two-parameter transseries}},
  \href{https://doi.org/10.1088/1751-8121/ac9e29}{\emph{J. Phys. A} {\bfseries
  55} (2022) 454003} [\href{https://arxiv.org/abs/2204.09062}{{\ttfamily
  2204.09062}}].

\end{thebibliography}\endgroup

\end{document}